\newtheorem{theorem}{Theorem}
\newtheorem{lemma}{Lemma}
\newtheorem{proposition}{Proposition}
\newtheorem{corollary}{Corollary}
\newtheorem{definition}{Definition}
\newtheorem{claim}{Claim}
\newtheorem{conjecture}{Conjecture}
\newcommand{\edn}{\gamma^{\infty}}
\newcommand{\ednm}{\gamma^{\infty}_m}
\newcommand{\oedn}{\overrightarrow{\edn}}
\newcommand{\oednm}{\overrightarrow{\ednm}}
\newcommand{\oalpha}{\overrightarrow{\alpha}}
\newcommand{\dist}{\textrm{dist}}
\newcommand{\diam}{\textrm{diam}}
\newcommand{\scdd}{\gamma_\leftrightarrow}
\newcommand{\kscdd}{\scdd'}
\newcommand{\oscdd}{\overrightarrow{\scdd}}
\newcommand{\isorient}{\textrm{ is an orientation of  }}
\newcommand{\ffceil}[1]{\left \lceil #1 \right \rceil}
\newcommand{\fffloor}[1]{\left \lfloor #1 \right \rfloor}
\begin{document}

\begin{frontmatter}

\title{Eternal dominating sets on digraphs and orientations of graphs}

\author[GOAL]{Guillaume Bagan}
\author[GOAL]{Alice Joffard}
\author[GOAL]{Hamamache Kheddouci}

\address[GOAL]{Lab. LIRIS, UMR CNRS 5205, University of Lyon, F-69003\\University of Claude Bernard Lyon 1\\ 43 Bd du 11 Novembre 1918, F-69622, Villeurbanne, France.}

%\cortext[cor1]{Corresponding Author, guillaume.bagan@liris.cnrs.fr}

\begin{abstract}
We study the eternal dominating number and the m-eternal dominating number on digraphs. We  generalize known results on graphs to digraphs.
We also consider the problem "oriented (m-)eternal domination", consisting in finding an orientation of a graph that minimizes its eternal dominating number.
We prove that computing the oriented eternal dominating number is NP-hard and characterize the graphs for which the oriented m-eternal dominating number is 2.
We also study these two parameters on trees, cycles, complete graphs, complete bipartite graphs, trivially perfect graphs and different kinds of grids and products of graphs.
\end{abstract}

\end{frontmatter}

\section{Introduction}

\subsection{Definitions and notations}

In this paper, graphs and digraphs are considered finite, without multiple edges (arcs) and without loops. However, a digraph can contain both arcs $(u, v)$ and $(v, u)$ for some vertices $u$ and $v$.
When there is no ambiguity on the (di)graph $G$, $V$ is the vertex set of $G$, $E$ is the edge (arc) set of $G$,  $n$ is the order of $G$ and $m$ is the number of edges in $G$.

Let $G = (V, E)$ be a graph.
Given a vertex $v \in V$, $N(v)$ represents the open neighborhood of $v$ i.e. the set $\{ y \in V: xy \in E \}$
and the closed neighborhood of $v$ is $N[v] = N(v) \cup \{ v \}$.
A set $S \subseteq V$ is an independent set of $G$ if there is no edge $uv$ for every $u, v$ in $S$.
The independent set number $\alpha(G)$ is the size of a largest independent set of $G$.
A set $S \subseteq V$ is a dominating set of $G$ if
$\bigcup_{v \in S} N[v] = V$.
The dominating number $\gamma(G)$ is the size of a smallest dominating set of $G$.
A set $S \subseteq V$ is a $k$-dominating set of $G$ if
for every vertex $v \in V \setminus S$, $|N(v) \cap S| \geq k$.
$G$ is $k$-vertex-connected (resp. $k$-edge-connected) if $G$ stays connected if one removes any set of at most $k-1$ vertices (resp. edges).
$\gamma_{k,l}(G)$ is the size of the smallest $k$-dominating $l$-edge-connected set of $G$.

Let $G = (V, E)$ be a digraph.
For two vertices $u$ and $v$ of $G$, 
we say that $u$ dominates $v$ if there is an arc $(u, v)$ in $G$. $N^+(u) = \{ v : (u, v) \in E  \}$ 
and $N^+[u] = N^+(u) \cup \{ u \}$.
A set $S \subseteq V$ is a dominating set of $G$ if
$\bigcup_{v \in S} N^+[v] = V$.
The distance $\dist(u, v)$ from $u$ to $v$ is the number of edges of the shortest path from $u$ to $v$. Given a set $S \subseteq V$, $\dist(u, S) = \min \{ dist(u, v) : v \in S\}$.
If $G$ is strongly connected, the diameter $\diam(G)$ of $G$ is the maximum distance from $u$ to $v$ for every vertex $u$ and $v$ in $V$.

$P_n$ represents a path graph with $n$ vertices, $C_n$ represents a cycle graph with $n$ vertices, $K_n$ represents the complete graph with $n$ vertices and $K_{n,m}$ represents a complete bipartite graph with partitions of size $n$ and $m$.
Given two graphs $G_1 = (V_1, E_1)$ and $G_2 = (V_2, E_2)$,
$G_1 \square G_2$ is the cartesian product of $G_1$ and $G_2$.
$G_1 \boxtimes G_2$ is the strong product of $G_1$ and $G_2$.
$[k]$ represents the set $\{1, 2, \ldots, k\}$.
A wqo $\leq_P$ is a preorder such that any infinite sequence of elements
$x_0, x_1, x_2, \ldots$ contains an increasing pair $x_i \leq_P x_j$ with $i<j$. In particular, it does not admit infinite decreasing sequences.

\subsection{Eternal domination}

The problem of eternal domination on undirected graphs, while being a rather recent problem, has been widely studied (see \cite{klostermeyer2016protecting} for a survey).
It has initially been motivated by problems in military defense.

The eternal domination on a graph $G$ can be seen as an infinite game between two players: the \emph{defender} and the \emph{attacker}.
First, the defender chooses a set $D_0$ of $k$ vertices called the \emph{guards}.
At turn $i$, the attacker chooses a vertex $r_i$ called \emph{attack} in
$V \setminus D_{i-1}$ and the defender must \emph{defend} the attack by moving to $r_i$ a guard on a vertex $v_i$ adjacent to $r_i$.
The new guards configuration is $D_i = D_{i-1} \cup \{r_i \} \setminus \{ v_i \}$.
The defender wins the game if he can defend any infinite sequence of attacks.
The \emph{eternal domination number}, denoted by $\edn(G)$, is the minimum number of guards necessary for the defender to win.
An \emph{eternal dominating set} is a set that can initially be chosen by the defender in a winning strategy.

We now give a more formal definition of these notions.

\begin{definition}
Let $G = (V, E)$ be a graph. The set $EDS(G)$ of eternal dominating sets of $G$ is the greatest set of subsets of $V$ such that
for every $S\in EDS(G)$ and every $r \in V \setminus S$, there is a vertex $v \in S$ such that $\{v,r\} \in E$ and $S\cup \{r\}\setminus \{v\} \in EDS(G)$.

The eternal domination number of $G$ is defined as $\edn(G)=\min\{|S|,S\in EDS(G)\}$.
\end{definition}

A variant of the eternal domination is the m-eternal domination, where the defender is authorized to move several guards at a time. Notice that the "m" of m-eternal does not represent a value.

\begin{definition}
Let $G = (V, E)$ be a graph. 
Given two sets $S_1, S_2 \subseteq V$, a multimove $f$ from $S_1$ to $S_2$ is a one-to-one mapping from $S_1$ to $S_2$ such that
for every $x \in S$, we have $f(x) = x$ or $\{ x, f(x) \} \in E$.

The set $MEDS(G)$ of m-eternal dominating sets of $G$ is the greatest set of subsets of $V$ such that
for every $S\in MEDS(G)$ and every $r \in V(G) \setminus S$, there is a multimove $f$ such that $r \in f(S)$ and $f(S) \in MEDS(G)$.

Let $G$ be a graph. The m-eternal domination number of $G$ is defined as
$\ednm(G)=\min\{|S|,S\in MEDS(G)\}$.
\end{definition}

The two following results compare the value of both the eternal domination number and the m-eternal domination number to the value of other well known graph parameters.

\begin{theorem}\label{eternal-seq-ineqs}\cite{burger2004infinite,goddard2005eternal,kloster2007}
Given a graph $G$, we have
$$\gamma(G) \leq \ednm(G) \leq \alpha(G) \leq \edn(G) \leq \binom{\alpha(G)+1}{2}$$
where $\gamma(G)$ is the domination number of $G$, and $\alpha(G)$ is the  independent set number of $G$.
\end{theorem}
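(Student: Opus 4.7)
The plan is to handle the four inequalities separately. The two extreme bounds are short. For $\gamma(G)\le\ednm(G)$: any $S\in MEDS(G)$ must be a dominating set, because an attack on $r\notin N[S]$ admits no defending multimove; applying this to a minimum-size member of $MEDS(G)$ gives the bound. For $\alpha(G)\le\edn(G)$, I would exhibit a winning attacker strategy against any guard set $D$ of size less than $\alpha(G)$: fix a maximum independent set $I$ and attack its vertices one by one, in any order. Since $I$ is independent, a guard moved onto some $v\in I$ can never leave $I$ to answer a later attack inside $I$; hence after at most $|D|$ attacks all guards are confined to a strict subset of $I$, and the next attack on an uncovered vertex of $I$ wins.

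For the middle inequality $\ednm(G)\le\alpha(G)$, the idea is to use maximum independent sets as the preferred guard configurations and to allow non-independent configurations only transiently. The crux is a Hall-type lemma: for any two maximum independent sets $I,I'$, the bipartite graph on $(I\setminus I',I'\setminus I)$ whose edges are the edges of $G$ admits a perfect matching. Hall's condition is forced by the maximality of $|I'|$, since a violating $A\subseteq I\setminus I'$ would make $(I'\setminus N(A))\cup A$ an independent set strictly larger than $I'$. Combined with the identity on $I\cap I'$, this yields a valid multimove $I\to I'$. The remaining subtlety, and the delicate point of this step, is handling attacks on vertices $r$ lying in no maximum independent set (e.g.\ the center of $K_{1,3}$): the post-defense configuration is then not independent, and the family of admissible guard sets has to be broadened beyond maximum independent sets while preserving both size $\alpha(G)$ and the dominating property, together with a mechanism to return to a maximum independent set at some later turn.

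The main obstacle is the outer bound $\edn(G)\le\binom{\alpha(G)+1}{2}$, due to Klostermeyer and MacGillivray. I would argue by induction on $\alpha:=\alpha(G)$: the base $\alpha=1$ is a clique, where a single guard suffices. For the inductive step the natural construction stratifies the $\binom{\alpha+1}{2}=1+2+\cdots+\alpha$ guards into $\alpha$ levels of sizes $1,2,\ldots,\alpha$, designed so that when an attack forces a guard to move out of level $i$ the remaining higher-level guards can reorganise to refill it and restore the layered shape. Setting up the right invariant linking the stratification to the family of maximum independent sets of $G$, and verifying that the invariant is restored after any single-guard defense for an arbitrary infinite attack sequence, is the technical heart of the proof; unlike the middle inequality, no single Hall-type argument captures the whole difficulty, and the combinatorial bookkeeping across levels is what I expect to be the bulk of the work.
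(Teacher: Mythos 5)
Your first two inequalities are fine: the domination bound is immediate, and your attacker strategy for $\alpha(G)\le\edn(G)$ (attack a maximum independent set $I$ vertex by vertex; each defense pulls a guard into $I$ from outside, and no guard already on $I$ can ever answer a later attack elsewhere in $I$) is correct. It is essentially the argument the paper packages as ``the eternal domination number of an independent (acyclic) set equals its size'' plus monotonicity under induced subgraphs (Corollary \ref{eternal_acyclic} and Lemma \ref{induced_edn}). Note that the paper does not actually prove Theorem \ref{eternal-seq-ineqs} --- it is cited --- but it does prove the digraph generalization (Theorem \ref{eternal-seq-ineqs-digraph}), which is the right object of comparison.

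For $\ednm(G)\le\alpha(G)$ your Hall-type lemma for passing between two maximum independent sets is correct and matches the paper's argument. But the case you flag as delicate --- a vertex $v$ lying in no maximum independent set --- is a genuine gap, not a technicality: ``broadening the family of admissible guard sets while preserving size and domination, with a mechanism to return to a maximum independent set'' describes the problem rather than solving it, and it is not how the known proof proceeds. The fix, used verbatim in the paper's digraph proof, is induction on the number of vertices via contraction: form $G'$ by deleting $v$ and turning $N(v)$ into a clique. Then $\alpha(G')\le\alpha(G)-1$: a size-$\alpha(G)$ independent set $S$ of $G'$ would be a maximum independent set of $G$ meeting $N(v)$ in at most one vertex; it must meet $N(v)$ in exactly one vertex $u$ (else $S\cup\{v\}$ is independent), and swapping $u$ for $v$ would put $v$ in a maximum independent set, a contradiction. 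Moreover $\ednm(G)\le\ednm(G')+1$, by parking a guard on $v$ forever and simulating any move along an added edge $uw$ by the simultaneous moves $u\to v$ and $v\to w$. Without this (or an equivalent) your middle inequality is not proved. The last inequality is likewise only gestured at: the stratification into levels of sizes $1,2,\ldots,\alpha$ is the right shape, but the entire content of the Klostermeyer--MacGillivray argument is the invariant tying the levels to independent sets and the verification that a single guard move restores it; as written you have named the difficulty rather than resolved it. (The paper also defers this inequality to the cited reference.)
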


\begin{theorem}\label{eternal_clique_covering}\cite{burger2004infinite}
Given a graph $G$, we have
$$\edn(G) \leq \theta(G)$$ where $\theta(G)$ is the clique covering number of $G$.
\end{theorem}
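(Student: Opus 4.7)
The plan is to exhibit an explicit eternal dominating set of size $\theta(G)$ by exploiting a minimum clique cover as a rigid skeleton around which guards are organized.

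First I would fix a partition $\mathcal{C} = \{C_1, C_2, \ldots, C_{\theta(G)}\}$ of $V(G)$ into cliques, realizing the clique covering number. I would then choose the initial configuration $D_0$ by picking one arbitrary vertex from each $C_i$, so that $|D_0| = \theta(G)$ and every clique of $\mathcal{C}$ contains exactly one guard. The proof goal is to show that the invariant ``every $C_i$ contains exactly one guard'' can be maintained forever against any attack sequence, and that any configuration satisfying this invariant is a dominating set.

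The core step is to verify that the invariant is preserved under the defender's move. Given a configuration $D$ with one guard per clique, suppose the attacker plays a vertex $r \in V \setminus D$. Let $C_i$ be the unique clique of $\mathcal{C}$ containing $r$, and let $v \in D \cap C_i$ be the guard sitting in $C_i$. Since $C_i$ is a clique and $v \neq r$ (as $r \notin D$), the edge $\{v, r\}$ belongs to $E$, so the defender may move $v$ to $r$. The new configuration $D' = D \cup \{r\} \setminus \{v\}$ still contains exactly one guard in $C_i$ (namely $r$) and leaves every other $C_j$ untouched, so the invariant persists. In particular, this shows that the family of sets satisfying the invariant is closed under the defender's response rule, so by the definition of $EDS(G)$ it is contained in $EDS(G)$; hence $D_0 \in EDS(G)$.

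I do not foresee a serious obstacle: the whole argument is a one-line invariant combined with the elementary observation that a clique partition yields a dominating set. The only bookkeeping point worth stating carefully is that $D_0$ itself is a dominating set (each $v \in V$ lies in some $C_i$, hence is dominated by the guard in $C_i$, using that $v$ is either the guard or adjacent to it via the clique). Combined with the invariant, this yields $D_0 \in EDS(G)$ and therefore $\edn(G) \leq |D_0| = \theta(G)$.
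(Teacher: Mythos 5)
Your proof is correct. The paper itself gives no proof of this theorem---it is quoted from Burger et al.~\cite{burger2004infinite}---and your argument (refine a minimum clique cover to a partition, place one guard per clique, and maintain the invariant ``exactly one guard per clique'' by answering each attack from within the attacked vertex's own clique) is precisely the standard argument from that reference, carried out correctly; the only point worth making explicit is that a minimum clique cover may be assumed to be a partition, since assigning each vertex to a single clique of the cover only shrinks the cliques and cannot increase their number.
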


These two theorems are particularly interesting when $G$ is a perfect graph since $\alpha(G) = \edn(G) = \theta(G)$ by definition of a perfect graph (and its complement).

The values of $\edn$ and $\ednm$ have also been studied for many classes of graphs. A lot of attention has been given to grid graphs.

Another theorem gives an upper bound for $\ednm$.
\begin{theorem}\label{ednm_connecteddominant}
\cite{goddard2005eternal}
Given a graph $G$, we have
$$\ednm(G) \leq \gamma_c(G)+1$$ where $\gamma_c(G)$ is the size of a smallest connected dominating set of $G$.
\end{theorem}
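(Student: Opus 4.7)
The plan is to construct explicitly an m-eternal dominating set of size $\gamma_c(G)+1$. Fix a connected dominating set $D$ of $G$ with $|D|=\gamma_c(G)$ and assume $V\setminus D\neq\emptyset$ (otherwise $\ednm(G)\leq n=\gamma_c(G)$ is trivial). I would show that every set of the form $D\cup\{v\}$ with $v\in V\setminus D$ belongs to $MEDS(G)$, and use any such set as the initial guard configuration.

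The main argument is to maintain the invariant that after every response the guards occupy exactly $D\cup\{v\}$ for some $v\in V\setminus D$. Given an attack at $r\notin D\cup\{v\}$, domination of $D$ yields vertices $u\in D\cap N(r)$ and $w\in D\cap N(v)$, and connectedness of $D$ yields a path $u=w_0,w_1,\dots,w_k=w$ whose vertices all lie in $D$. I would then define the multimove $f$ as a \emph{chain-shift} along this path: $f(u)=r$, $f(w_i)=w_{i-1}$ for $1\leq i\leq k$, $f(v)=w$, and $f(x)=x$ for every other $x\in D$. Every guard moves to itself or to a neighbor in $G$, the map is one-to-one because the $w_i$'s have distinct images, and a short set-theoretic check gives $f(D\cup\{v\})=D\cup\{r\}$, which preserves the invariant with $r$ now playing the role of the extra vertex.

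I do not expect a serious obstacle here; the proof is essentially driven by the two defining properties of $D$ (domination supplies the vertices $u$ and $w$, connectedness supplies the path used for the shift). The only small wrinkle is the degenerate case $w=u$, where $v$ is already adjacent to $u$ and the path has length $0$, so that the multimove reduces to the two moves $u\to r$ and $v\to u$. Once the invariant is verified, the family $\{D\cup\{v\}:v\in V\setminus D\}$ sits inside $MEDS(G)$, and we conclude $\ednm(G)\leq |D|+1=\gamma_c(G)+1$.
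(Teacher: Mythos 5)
Your proof is correct, and it is essentially the same rotation argument the paper uses for its digraph generalization (Theorem~\ref{ednm_scdd}): keep all of $D$ guarded plus one floating guard, and on an attack shift guards along a path inside $D$ from the vertex that covers the attack to the vertex adjacent to the floating guard. The statement itself is cited from Goddard et al.\ without proof, but your invariant, chain-shift multimove, and handling of the degenerate length-$0$ path all match the paper's treatment of the generalized version.
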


\subsection{Contributions}

To our knowledge, the eternal domination problem has only been studied on undirected graphs. 
In this paper, we consider eternal domination on directed graphs where the guards must follow the direction of the arcs.
 Additionally, as it has been done for many digraph parameters such as diameter, chromatic number, domination number or maximum outgoing degree, we consider the problem, namely oriented (m-)domination, consisting in finding an orientation of an undirected graph which minimizes $\edn$ or $\ednm$.

We present our contributions:
we generalize Theorem \ref{eternal-seq-ineqs} to digraphs.
We also give a generalization of Theorem \ref{ednm_connecteddominant} by introducing the notion of dominating-dominated set.
We give other upper bounds by various parameters.
We show that the oriented eternal domination problem is coNP-hard in general.
We characterize graphs for which the oriented m-eternal number is 2.
We study thr oriented eternal domination and oriented m-eternal domination problems
on some graph classes: trees, complete graphs, complete bipartite graphs, trivially perfect graphs and various kinds of grids. In particular, we introduce the notion of neighborhood-equitable coloring and use it to find upper bounds.

\section{Eternal domination on digraphs}
In this section, we consider the eternal domination and m-eternal domination problems on directed graphs. The definitions of $EDS(G)$, $\edn(G)$, $MEDS(G)$ and $\ednm(H)$ are straightforward, the only difference is that whenever an edge is considered, we consider an arc instead.

From this, we can already deduce several general results for digraphs. 

It is straightforward that the (m-)eternal number of a graph is the sum of the  (m-)eternal number of each of its connected components.
We will prove that this result can be extended to the strongly connected components of a digraph.

\begin{lemma}\label{digraph-subgraph}
Let $G$ be a  digraph with strongly connected components $S_1, \ldots S_l$.
Then $$\edn(G) = \sum_{i=1}^l \edn(G[S_i])$$
and
 $$\ednm(G) = \sum_{i=1}^l \ednm(G[S_i]).$$
\end{lemma}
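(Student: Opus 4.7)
The plan is to prove each identity by matching upper and lower bounds, treating $\edn$ and $\ednm$ uniformly since single moves are a special case of multimoves. For the upper bound I choose an optimal (m-)eternal dominating set $D_i$ inside each $G[S_i]$ and set $D=\bigcup_i D_i$. Any attack lies in a unique $S_j$; defend it by the local strategy on $G[S_j]$ and keep every other $D_i$ fixed (the identity is a valid multimove on them). The new configuration is again a disjoint union of (m-)eternal dominating sets of the $G[S_i]$'s, so $D$ is an (m-)eternal dominating set of $G$, giving $\edn(G)\le \sum_i \edn(G[S_i])$ and likewise for $\ednm$.

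For the lower bound I would induct on $l$, the base case $l=1$ being immediate. In the induction step, pick a sink $S_l$ of the condensation DAG --- so $G$ contains no arc from $S_l$ to $V\setminus S_l$ --- and let $H = G[V\setminus S_l]$, whose SCCs are exactly $S_1,\ldots,S_{l-1}$. The key observation is a monotonicity: because no arc leaves $S_l$, no multimove can decrease the number $g_l$ of guards lying in $S_l$, while $g_H = k - g_l$ can only decrease. Being integer-valued and bounded, both quantities must stabilize along any play of the game.

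Given a winning defender strategy $\sigma$ with $k$ guards, I would have the attacker play in two phases. In phase 1 he restricts attacks to $S_l$; eventually $g_l$ stabilizes at some $k_l$, so from then on every defense is a multimove supported entirely inside $S_l$. If $k_l$ were strictly less than $\ednm(G[S_l])$ then, by definition of $\ednm$, the attacker could switch to a winning attacker strategy for the $G[S_l]$-subgame and beat $\sigma$'s restriction in finite time, contradicting that $\sigma$ wins in $G$; hence $k_l \ge \ednm(G[S_l])$. In phase 2 he restricts attacks to $H$; by the same reasoning $g_H$ stabilizes at some $h^*$, and past stabilization no multimove can transfer a guard from $H$ to $S_l$ (this would strictly decrease $g_H$), so the defender's responses become valid $H$-multimoves, forcing $h^* \ge \ednm(H)$. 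Combining, $k \ge k_l + h^* \ge \ednm(G[S_l]) + \ednm(H)$, and the induction hypothesis on $H$ closes the sum; the single-move case is word-for-word identical. The main subtlety is confined to the multimove setting, where one response could in principle move several guards from $H$ into $S_l$ simultaneously; this is precisely what the monotonicity of $g_l$ and $g_H$ rules out in the steady state, so no separate case analysis between $\edn$ and $\ednm$ is needed.
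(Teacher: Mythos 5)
Your proof is correct, but it takes a genuinely different route from the paper's. The paper does not induct on the condensation: it orders guard configurations by the reachability preorder ($x \leq_r y$ iff $x$ is reachable from $y$), extended coordinatewise to tuples of guards, observes that this is a wqo (hence admits no infinite strictly decreasing sequence), and notes that any configuration reachable from $D$ is $\leq_r D$. The attacker then simply picks \emph{any} strongly connected component currently holding fewer guards than its eternal domination number, plays a winning sub-strategy there, and either wins or forces a guard to cross between components --- a strictly decreasing step in the wqo --- so the process must terminate with an attacker win. You instead isolate a sink component $S_l$ of the condensation, replace the wqo by the single monotone counter $g_l$ (non-decreasing because no arc leaves $S_l$), and close with induction on the number of components; this is a perfectly valid, arguably more elementary, packaging of the same structural fact that guards can only move down the condensation order. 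Both arguments also silently use the standard fact that a configuration with fewer than $\ednm$ guards loses in finitely many attacks. The one point worth tightening in your write-up is that ``eventually $g_l$ stabilizes'' should be made constructive for the attacker: he replays his subgame-winning strategy from scratch each time a guard enters $S_l$, which can happen at most $k$ times --- this is exactly the role played by the paper's ``apply again its strategy on another s.c.c.'' step.
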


\begin{proof}
We only prove the first equality. The proof of the second equality is similar.
The statement $\edn(G) \leq \sum_{i=1}^l \edn(G[S_i])$ is straightforward.
Let us prove the other part of the equality.

We define the following relation $\leq_r$ on the vertices of $G$: $x \leq_r y$ if $x$ is reachable from $y$ in $G$.
Since $G$ is finite, $\leq_r$ is a wqo.
Let us represent a guard configuration of $k$ guards by a tuple of $k$ vertices and extend the relation
$\leq_r$ on (ordered) guard configurations: $(x_1, \ldots, x_k) \leq_r (y_1, \ldots, y_k)$ iff $x_i \leq_r y_i$ for every $i \in [k]$. This relation is also a wqo since it is the direct product of $k$ wqos.
It is easily seen that if the guard configuration $c_2$ is reachable from the guard configuration 
$c_1$ then $c_1 \geq_r c_2$.

Assume now that the defender plays with less than $\sum_{i=1}^l \ednm(G[S_i])$ guards.
Let $D_0$ be the initial configuration.
Thus, there is a s.c.c. $S_i$ such that $S_i \cap D_0 < \edn(G[S_i])$.
Consequently, the attacker can apply a strategy in $G[S_i]$ such that he can either win, or the defender moves a guard outside $S_i$ into $S_i$.
In the second case, we obtain a configuration $D_j$ such that $D_0 >_r D_j$.
So, the attacker can apply again its strategy on another s.c.c. $S_{i'}$ with $S_{i'} \cap D_j < \edn(G[S_{i'}])$. Since $\leq_r$ is a wqo, there is no infinite decreasing sequence of configurations. Thus, the attacker eventually wins.
\end{proof}

This leads to the value of the two parameters for directed acyclic graphs.

\begin{corollary}\label{eternal_acyclic}
If $G$ is an acyclic digraph with $n$ vertices, then $\edn(G) = \ednm(G) = n$.
\end{corollary}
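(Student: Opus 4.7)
The plan is to derive this immediately from Lemma \ref{digraph-subgraph}. First I would note that in an acyclic digraph $G$, every strongly connected component is reduced to a single vertex: indeed, if two distinct vertices lay in the same strongly connected component, there would be a directed cycle through them, contradicting acyclicity. Hence the strongly connected components of $G$ are exactly the $n$ singletons $\{v_1\}, \ldots, \{v_n\}$.

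Next I would argue that for a single-vertex digraph $G[\{v\}]$, we have $\edn(G[\{v\}]) = \ednm(G[\{v\}]) = 1$. With one guard placed at $v$, the set $V \setminus \{v\}$ is empty, so the attacker has no legal move, and the defender trivially wins. Conversely, with zero guards, the attacker simply attacks $v$ and the defender has nothing to move, so at least one guard is required.

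Finally, applying Lemma \ref{digraph-subgraph} to our decomposition yields
\[
\edn(G) \;=\; \sum_{i=1}^{n} \edn(G[\{v_i\}]) \;=\; \sum_{i=1}^{n} 1 \;=\; n,
\]
and analogously $\ednm(G) = n$. There is essentially no obstacle here; the only point that might deserve a line of justification is the observation that acyclicity forces all strongly connected components to be singletons, but this is standard. The whole argument is a direct corollary of the preceding lemma.
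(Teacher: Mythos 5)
Your proof is correct and follows exactly the route the paper intends: the corollary is stated as an immediate consequence of Lemma~\ref{digraph-subgraph}, and your argument (acyclicity forces all strongly connected components to be singletons, each singleton needs exactly one guard, then sum) is precisely the derivation the authors leave implicit. No gaps.
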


An important result concerns the monotony of $\edn$, that is not verified for $\ednm$. This has been proved on graph \cite{klostermeyer2005eternally} but the adaptation on digraphs is straightforward.

\begin{lemma}\label{induced_edn}\cite{klostermeyer2005eternally}
Let $G$ be a digraph and $H$ be an induced subgraph of $G$.
Then, $\edn(H) \leq \edn(G)$.
\end{lemma}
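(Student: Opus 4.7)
The plan is to mirror the undirected proof of \cite{klostermeyer2005eternally} and verify that each step remains valid when edges are replaced by arcs. The key structural observation, which makes the adaptation work, is that $H$ being an induced subdigraph means every arc of $E(G)$ between two vertices of $V(H)$ lies in $E(H)$ with the same orientation. Consequently, any move of a guard between two vertices of $V(H)$ that is legal in $G$ is also legal in $H$.

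First, I would reduce to the single-vertex removal case $H = G - v$ by induction on $|V(G)| - |V(H)|$. The base case is immediate, and in the inductive step one can remove vertices of $V(G) \setminus V(H)$ one at a time, each time working with the induced subdigraph on the remaining vertices.

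For the single-vertex case, I would take a winning strategy $\sigma$ for the defender on $G$ using $k = \edn(G)$ guards and construct a shadow strategy $\sigma'$ on $G - v$ using at most $k$ guards by coupling the two games. To each reachable configuration $D \in EDS(G)$ of $\sigma$ I would associate a configuration $D' \subseteq V(G-v)$ with $|D'| \leq k$ so that $D \cap V(G-v) \subseteq D'$. When the attacker attacks $r \in V(G-v) \setminus D'$, we treat this as an attack on $r$ in the simulated game on $G$; if $\sigma$ moves a guard $u \neq v$ to $r$, we mirror the move in $G-v$, which is legal by the induced-subdigraph property.

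The main obstacle is the case where $\sigma$ moves the guard from $v$ itself to $r$, since $v \notin V(G-v)$. Here the coupling must already provide a shadow guard $w \in D' \setminus (D \cap V(G-v))$ whose out-neighborhood in $G-v$ contains $r$, and this shadow must be updated so that subsequent responses remain valid. The resolution follows the undirected argument: the shadow is chosen initially and refreshed so that it dominates the vertices $\sigma$ might send the $v$-guard to, using that the arcs incident to vertices of $V(H)$ in $G$ are preserved in $H$ with their orientations. Because the induced-subdigraph property preserves arc directions, the bookkeeping for the coupling carries over verbatim from the undirected setting, yielding a winning strategy on $G - v$ with $k$ guards and hence $\edn(G - v) \leq \edn(G)$.
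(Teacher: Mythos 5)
The paper itself does not give a proof of this lemma---it cites \cite{klostermeyer2005eternally} and merely remarks that the adaptation to digraphs is straightforward---so your proposal can only be judged on its own merits. Your reduction to the single-vertex case and the observation that an induced subdigraph retains all arcs between its vertices with their orientations are both fine. The gap is exactly at the step you flag and then wave away. In the single-move eternal domination game only one guard moves per turn, and it must move \emph{to} the attacked vertex; an idle guard cannot be repositioned. So the shadow guard $w$ is placed once at the start and can only relocate on the turns where it is itself the responder. It therefore cannot be ``refreshed so that it dominates the vertices $\sigma$ might send the $v$-guard to.'' Worse, such a $w$ need not exist at all: there is no reason a single vertex of $G-v$ should have all of $N^+(v)\cap V(G-v)$ (or even the relevant subset of targets) in its out-neighborhood. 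The sentence ``the bookkeeping for the coupling carries over verbatim'' is an assertion, not an argument, and it is precisely the point where a new idea is required.

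The standard repair avoids the shadow guard entirely. If the attacker confines attacks to $V(H)$, then every defender move sends a guard to a vertex of $V(H)$, so along any play the quantity $|D\cap V(H)|$ never decreases; it increases exactly when the responding guard comes from outside $V(H)$. Choose $D_0\in EDS(G)$ with $|D_0|=\edn(G)$ maximizing $j^*=|D_0\cap V(H)|$ over all eternal dominating sets of that size. For any attack on $r\in V(H)\setminus D_0$, the guard that answers must already lie in $V(H)$, since otherwise the new configuration would be an eternal dominating set of the same size with a strictly larger trace on $V(H)$, contradicting maximality; and because $H$ is induced the corresponding arc lies in $H$. Hence the traces on $V(H)$ of these maximal configurations form a family closed under the defender's required responses, i.e.\ $D_0\cap V(H)\in EDS(H)$, giving $\edn(H)\le j^*\le\edn(G)$. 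This maximality argument (or, equivalently, the observation that guards can be ``imported'' into $V(H)$ only finitely many times) is the missing ingredient; with it, your coupling becomes unnecessary.
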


Notice also that $\edn$ and $\ednm$ do not decrease when one removes edges.

The main result of this section is the generalization of Theorem \ref{eternal-seq-ineqs} to
directed graphs.

We define $\alpha$ for digraphs as follows.

\begin{definition}
Given a digraph $G$,
$\alpha(G)$ is the order of the greatest induced acyclic subgraph of $G$.
\end{definition}

Notice that this definition of $\alpha$ is, in some sense, a generalization of the one for undirected graphs. Indeed, if we replace every edge of a graph $G$ by two arcs, thus creating the digraph $\overleftrightarrow{G}$, we have the equality
$\alpha(G) = \alpha(\overleftrightarrow{G})$.

%$\alpha(G)$ for a digraph $G$ denotes the order of the greatest acyclic induced subgraph of $G$.
%This parameter can be seen as a generalization of $\alpha(G)$ for graphs since 
%$\alpha(G) = \alpha()$ for any graph $G$ where $\overleftrightarrow{G}$
%denotes the symmetrical orientation of $G$.

\begin{theorem}\label{eternal-seq-ineqs-digraph}
Given a digraph $G$, we have
$$\gamma(G) \leq \ednm(G) \leq \alpha(G) \leq \edn(G) \leq \binom{\alpha(G)+1}{2}.$$
\end{theorem}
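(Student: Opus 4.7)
My plan is to prove each of the four inequalities in turn, generalizing the proof of Theorem~\ref{eternal-seq-ineqs} with \emph{maximum induced acyclic subgraphs} playing the role of maximum independent sets.

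The outer two inequalities should be immediate. For $\gamma(G) \leq \ednm(G)$, I would note that any $S \in MEDS(G)$ must be a dominating set of $G$: otherwise the attacker plays an undominated vertex $r$, no guard has an arc into $r$, and the attack cannot be defended. For $\alpha(G) \leq \edn(G)$, I would take a maximum induced acyclic subgraph $H$; Corollary~\ref{eternal_acyclic} gives $\edn(G[H]) = |H| = \alpha(G)$, and Lemma~\ref{induced_edn} then yields $\edn(G) \geq \edn(G[H]) = \alpha(G)$.

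The middle inequality $\ednm(G) \leq \alpha(G)$ is where I expect the real work. The strategy I would try is to maintain the guard configuration on a maximum induced acyclic subgraph $H$. The key local fact is that, by maximality of $H$, the induced subdigraph $G[H \cup \{r\}]$ contains a cycle for every $r \notin H$, and since $H$ itself is acyclic this cycle must pass through $r$. It therefore produces both an arc $v \to r$ with $v \in H$ (so $H$ is dominating and one guard can always defend) and an arc $r \to w$ with $w \in H$ (useful to restore invariants). The delicate point is choosing a multi-move that defends $r$ and returns the guards to the vertex set of another maximum induced acyclic subgraph, especially when no such subgraph of $G$ contains $r$; in that case I would settle for a weaker invariant and exploit a topological order of $H$ to slide guards along directed paths until a maximum acyclic configuration is reached again.

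For the last inequality $\edn(G) \leq \binom{\alpha(G)+1}{2}$, I would adapt the pyramidal argument of Burger et al. Fix a maximum induced acyclic subgraph $H = \{h_1, \ldots, h_\alpha\}$ in topological order and organise the $\binom{\alpha+1}{2}$ guards into tiers $T_1, \ldots, T_\alpha$ of sizes $1, 2, \ldots, \alpha$, each tier occupying a dominating subset consistent with the topological order. Upon attack, a guard from the appropriate tier defends while lower tiers rotate to refill it. The undirected case analysis should carry over once "edge" arguments are replaced by arguments on cycles through $H \cup \{r\}$, but each case must be re-verified in the directed setting. The main obstacle remains the $\ednm(G) \leq \alpha(G)$ step: the rigidity of induced acyclic subgraphs (compared to independent sets, which split neatly along single edges) makes constructing an explicit restoring multi-move considerably more intricate than in the undirected version.
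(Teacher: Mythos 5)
Your treatment of the three easy inequalities matches the paper: $\gamma(G)\leq\ednm(G)$ is immediate, $\alpha(G)\leq\edn(G)$ follows from Corollary~\ref{eternal_acyclic} and Lemma~\ref{induced_edn}, and $\edn(G)\leq\binom{\alpha(G)+1}{2}$ is obtained by adapting the argument of \cite{kloster2007}. You also correctly isolate $\ednm(G)\leq\alpha(G)$ as the crux and correctly identify its two sub-difficulties. But for both of them you stop at naming the difficulty rather than resolving it, and the resolutions are exactly where the content of the proof lies.

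First, when every vertex lies in some maximum induced acyclic subgraph, you need an actual argument that a multimove exists from one maximum induced acyclic set $A$ to another, $B$; you call this ``the delicate point'' but give no mechanism. The paper supplies it via Hall's marriage theorem: build the bipartite graph on $A$ and $B$ with edges $xy$ whenever $(x,y)\in E$ or $x=y$, and show that a violating set $S\subseteq A$ with $|N(S)|<|S|$ would let you form $C=B\cup(S\setminus B)\setminus(N(S)\setminus A)$, an induced acyclic set strictly larger than $B$ --- a contradiction with maximality. Second, your plan for the case where the attacked vertex $r$ lies in no maximum induced acyclic subgraph (``settle for a weaker invariant and slide guards along a topological order'') cannot work as stated: any configuration of $\alpha(G)$ guards containing such an $r$ necessarily induces a cycle, so you can never return to the acyclic invariant, and you specify no alternative invariant that remains defensible. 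The paper instead handles this case by induction on $n$: contract the offending vertex $v$ (replacing paths $u\to v\to w$ by arcs $(u,w)$) to get $G'$, and prove the two inequalities $\alpha(G')\leq\alpha(G)-1$ (using that $v$ is in no maximum induced acyclic subgraph) and $\ednm(G)\leq\ednm(G')+1$ (simulate each contracted arc by a two-step move through the extra guard parked on $v$). Without these two ingredients your middle inequality is not proved.
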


\begin{proof}
The inequality $\gamma(G) \leq \ednm(G)$ is straightforward.
The inequality $\alpha(G) \leq \edn(G)$ is a consequence of Corollary \ref{eternal_acyclic} and Lemma \ref{induced_edn}.
The inequality $\edn(G) \leq \binom{\alpha(G)+1}{2}$ can easily be proved by adapting the proof in \cite{kloster2007}.
The remaining inequality is $\ednm(G) \leq \alpha(G)$.
For this last inequality, the principle is similar to the proof
of Goddard et al. \cite{goddard2005eternal} for undirected graphs, but with notable differences, that make the proof a non trivial adaptation.
We consider two cases, depending on the existence of a vertex $v \in V$ that does not belong to any of the greatest induced acyclic subgraphs of $G$.

If such a vertex does not exist, then every vertex belongs to at least one of the greatest induced acyclic subgraphs of $G$.
We thus prove that for any two distinct subsets $A,B \subseteq V$ of vertices that each induce a greatest induced acyclic subgraph of $G$, there exists a multimove from $A$ to $B$, so that we can always defend an attacked vertex $a$ by going from a subset of vertices inducing a greatest induced acyclic subgraph to another that contains $a$.
Let $G'$ be the bipartite undirected graph $G'=(A,B,E')$ with $E'=\{ xy ,x\in A,y\in B,(x,y)\in E \mbox{ or } x=y \}$. \footnote{Actually $G'$ is not necessarily a bipartite graph and can contain loops because $A$ and $B$ can have common elements. However, Hall theorem is still applicable. Moreover, the graph can become bipartite if we make a copy of each element of $A \cap B$.}
See Figure \ref{ednm_bipartite_example} for an example.

%constructed by taking the vertices of $A$ in one part, the vertices of $B$ in %the other, thus creating a copy of the vertices of $A\cap B$, and by putting an %edge between two vertices that are either linked in $G$  by an arc going from %$A$ to $B$, or between a vertex of $A\cap B$ and its copy.

We show that for any set $S\subseteq A$, $|N(S)| \geq |S|$, where $N(S)$ is the set of neighbors of vertices of $S$ in $G'$. We will thus be able to use Hall's marriage theorem \cite{hall1935representatives} and prove that there exists a perfect matching between $A$ and $B$ in $G'$, and therefore a multimove from $A$ to $B$ in $G$.
Suppose that there exists $S \subseteq A$ such that $|N(S)|<|S|$.
Let $X$, $Y$ and $C$ be the sets of vertices defined by $X=S\setminus B$, $Y=N(S)\setminus A$ and $C=B\cup X \setminus Y$.
We have $|C|>|B|$.
Indeed, by construction, $S \cap B \subseteq N(S\cap B)$, so that $|S \cap B| \leq |N(S)\cap A|$. Thus, since $|S|>|N(S)|$, we have $|X|>|Y|$, which gives $|C|>|B|$.

Moreover, $G[C]$ is acyclic. Indeed, $G[A]$ and $G[B]$ are acyclic, and for any two vertices $x,y \in C$, $\{x,y\} \not \in E'$, so that for all $x\in C\cap A, y\in C\cap B$, $(x,y) \not \in E$.
Thus, $G[C]$ is an induced acyclic subgraph of order greater than $|B|$ and, by  absurd, Hall's condition is respected. Thus, there exists a perfect matching between $A$ and $B$ in $G'$, and a multimove from $A$ to $B$ in $G$.

Now, if there exists a vertex $v \in V$ such that $v$ does not belong to any of the greatest induced acyclic subgraphs of $G$, we use induction to prove that $\ednm(G) \leq \alpha(G)$.
For $n=1$, we obviously have $\ednm(G) \leq \alpha(G)$.
Assume that $n > 1$.
We define $G'$ from $G$ by contracting the vertex $v$ as follows:
$G'=(V',E')$ with $V'=V\setminus \{v\}$ and
$E' = E(G[V']) \cup \{(u,w)\in V'^2, u \neq w \wedge (u,v),(v,w) \in E\}$.
By induction hypothesis, we have $\ednm(G') \leq \alpha(G')$.

We have $\alpha(G')\leq \alpha(G) -1$. Indeed,
consider a set $S \subseteq V'$ of cardinality $\alpha(G)$.
Let us prove that $G'[S]$ is cyclic.
If $G[S]$ is cyclic, $G'[S]$ is cyclic too.
Assume that $G[S]$ is acyclic.
Then, by maximality of $S$, $G[S\cup \{v\}]$ is cyclic.
Either there exists two distinct vertices $u, w \in S$ such that $(u,v),(v,w) \in E$ and $(u,v)$, $(v,w)$ are part of a cycle in $S$, or, otherwise, there exists $u \in S$ such that $ (u,v),(v,u) \in E$. In this last case, since $v$ does not belong to any of the greatest induced acyclic subgraphs of $G$, $G[S \setminus \{u\}]\cup \{v\}]$ is cyclic, so that there exists $w \in S \setminus \{u\}$ such that $(w,v),(v,w) \in E$.
In those two cases, by definition of $G'$, $G'[S]$ is cyclic.
Thus, there exists no induced acyclic subgraph of $G'$ of order $\alpha(G)$, and $\alpha(G')\leq \alpha(G)-1$.

Moreover, $\ednm(G) \leq \ednm(G') + 1$.
Indeed, if we can defend $G'$ with $k$ guards, we can defend $G$ with $k+1$ guards by adding a guard on $v$ and copying the strategy in $G$.
If the defender of $G'$ moves a guard from $u$ to $w$ using an edge $(u, w)$ that has been added in $G'$ (and, thus, not present in $G$), then the two edges $(u, v)$ and $(v, w)$ belong to $G$. Thus the defender of $G$ moves a guard from $u$ to $v$ and another one from $v$ to $w$.

Those three inequalities finally give $\ednm(G) \leq \alpha(G)$.
\end{proof}

\begin{figure}
	\centering
	\begin{tikzpicture}[vertex/.style={circle,draw,fill,inner sep=2pt}]    
    
    %Graphs
    \node[vertex,label=left:$v_1$] (v1) at (0,0) {};
    \node[vertex,label=right:$v_2$] (v2) at (2,2) {};
    \node[vertex,label=right:$v_3$] (v3) at (4,0) {};
    \node[vertex,label=left:$v_4$] (v4) at (0,4) {};
    \node[vertex,label=above:$v_5$] (v5) at (0,6) {};

    \node[vertex,label=left:$v_1$] (a1) at (8,0) {};
    \node[vertex,label=left:$v_2$] (a2) at (8,3) {};
    \node[vertex,label=left:$v_5$] (a5) at (8, 6) {};
    \node[vertex,label=right:$v_2$] (b2) at (12,0) {};
    \node[vertex,label=right:$v_3$] (b3) at (12,3) {};
    \node[vertex,label=right:$v_5$] (b5) at (12,6) {};
    
    \path[->]
     (v1) edge[-triangle 90] (v2)
     (v2) edge[-triangle 90] (v3)
     (v3) edge[-triangle 90] (v1)
     (v1) edge[-triangle 90,bend left=20] (v4)
     (v4) edge[-triangle 90,bend left=20] (v1)
     (v4) edge[-triangle 90] (v2)
     (v3) edge[-triangle 90,bend right=20] (v4)
     (v5) edge[-triangle 90] (v4)
    ;
    
    \draw (a1) -- (b2)
          (a2) -- (b2)
          (a2) -- (b3)
          (a5) -- (b5)
          ;

	\end{tikzpicture}
    \caption{On the left, a graph $G$ for which $G[\{v_1, v_2, v_5\}]$
    and $G[\{v_2, v_3, v_5\}]$ are maximum induced acyclic subgraphs. On the right, the bipartite graph constructed in the proof of Theorem \ref{eternal-seq-ineqs-digraph}.}
	\label{ednm_bipartite_example}
\end{figure}
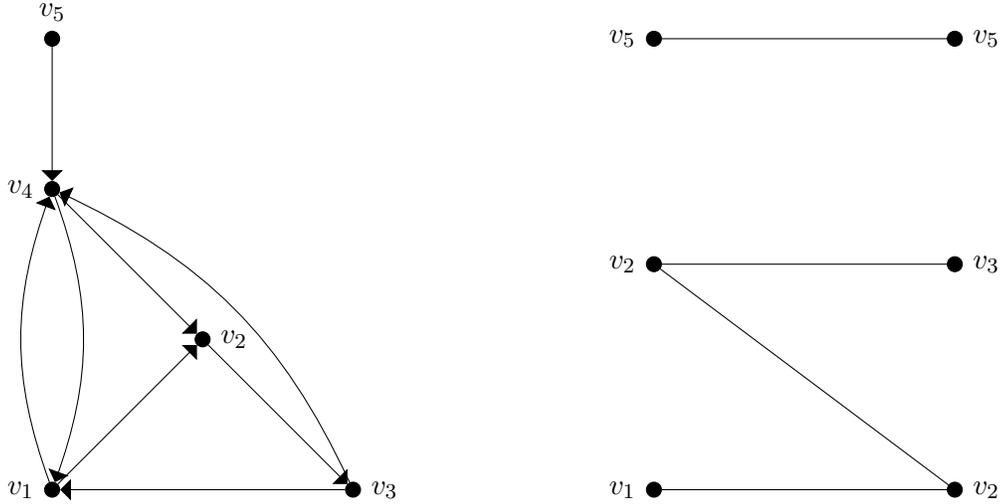

We now generalize Theorem \ref{ednm_connecteddominant} by introducing the notion of dominating-dominated set of a digraph $G$.

\begin{definition}
Let $G = (V, E)$ be a digraph.
A set $S \subseteq V$ is a dominating-dominated set of $G$
if every vertex $v \in V \setminus S$ dominates a vertex of $S$
and is dominated by a vertex of $S$.

$\scdd(G)$ is the size of a smallest strongly connected dominating-dominated set of $G$.
\end{definition}

\begin{theorem}\label{ednm_scdd}
Let $G = (V, E)$ be a digraph.
Then $\ednm(G) \leq \scdd(G) + 1$.
\end{theorem}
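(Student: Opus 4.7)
The plan is to adapt the classical proof of $\ednm(G) \leq \gamma_c(G)+1$ to the directed setting. Let $S$ be a smallest strongly connected dominating-dominated set, so $|S| = \scdd(G)$ (the case $V = S$ is trivial since then $\ednm(G) \leq |V| = \scdd(G)$). I will exhibit the family $\mathcal{F} = \{\, S \cup \{x\} : x \in V \setminus S\,\}$ of $(\scdd(G)+1)$-element subsets and show, using the greatest-fixed-point definition of $MEDS(G)$, that $\mathcal{F} \subseteq MEDS(G)$. This amounts to showing that for every $D = S \cup \{x\} \in \mathcal{F}$ and every attack $r \in V \setminus D$, there is a multimove $f$ with $r \in f(D)$ and $f(D) = S \cup \{r\} \in \mathcal{F}$.

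Both halves of the dominating-dominated hypothesis are needed to build $f$. Since $r \in V \setminus S$, there is $s_k \in S$ with $(s_k, r) \in E$, allowing a guard to be pushed from $S$ out to $r$. Since $x \in V \setminus S$, there is $s_1 \in S$ with $(x, s_1) \in E$, allowing the stray guard on $x$ to be absorbed back into $S$. Strong connectivity of $G[S]$ then supplies a directed path $s_1 = u_0 \to u_1 \to \cdots \to u_l = s_k$ inside $S$, along which guards will be shifted. The multimove $f$ sends $x \mapsto u_0$, $u_i \mapsto u_{i+1}$ for $0 \leq i < l$, $u_l \mapsto r$, and fixes all vertices of $S \setminus \{u_0, \ldots, u_l\}$; its image is exactly $S \cup \{r\}$, which lies in $\mathcal{F}$. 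Verifying that $f$ is a legitimate multimove reduces to the observation that every non-fixed assignment uses an arc of $G$ by construction, and that $r \notin S$ together with the distinctness of the $u_i$ ensures injectivity. This gives $\ednm(G) \leq |D| = \scdd(G) + 1$.

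The conceptually delicate point, and the one I expect a reader to pause on, is why \emph{both} domination directions are required in the definition of $\scdd$: in the undirected setting a single edge suffices to push a guard out to an attack and also to pull the spare guard in, whereas in a digraph these two operations need arcs of opposite orientation. Strong (rather than weak) connectivity of $G[S]$ plays the analogous role in the middle of the shift, since we need an actual directed $s_1$-to-$s_k$ path in $G[S]$ along which to propagate the rotation. Once these two ingredients are in place, the rest of the argument is routine bookkeeping.
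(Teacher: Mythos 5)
Your proof is correct and follows essentially the same route as the paper: maintain the invariant that the configuration is $S$ plus one stray guard, push a guard from a dominator of the attacked vertex out to it, absorb the stray guard into $S$ along an arc it dominates, and rotate guards along a directed path inside the strongly connected set $S$. Your packaging via the family $\mathcal{F}=\{S\cup\{x\}: x\in V\setminus S\}$ and the greatest-fixed-point definition of $MEDS$ is just a more formal statement of the paper's invariant, and your explicit injectivity check is a welcome (if routine) addition.
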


\begin{proof}
Let $S$ be a strongly connected dominating-dominated set of $G$.
We will show that we can defend $G$ with $|S|$ + 1 guards.
The invariant for a guard configuration is that it contains all vertices of $S$.
Initially, we put the guards on $S$ and an arbitrary vertex $u$ in
$V \setminus S$.
Consider an attack on a vertex $v$.
By definition of $S$, there exists two (non necessarily distinct) vertices $x, y \in S$ such that $(u, x) \in E$ and $(y, v) \in E$.
Thus, we move the guard on $y$ to $v$.
Let $p$ be a path from $x$ to $y$ using only vertices of $S$.
We move the guard on $u$ to $x$ and each guard on a vertex $w$ of $p$ to the next vertex.
We obtain the guard configuration $S \cup \{ v \}$. Thus the invariant is verified. 
\end{proof}

We generalize this notion to $k$-dominating-dominated sets.

\begin{definition}
Let $G = (V, E)$ be a digraph.
A set $S \subseteq V$ is a $k$-dominating-dominated set of $G$
if every vertex $v \in V \setminus S$ is dominated by a vertex in $S$ and is at distance at most $k$ to a vertex in $S$.

$\kscdd(G)$ is the minimum value of $k + |S|-1$ 
for every $k$ and $S$ with $S$ a strongly connected $k$-dominating-dominated set of $G$.
\end{definition}

$\gamma_c(G)$ denotes the size of the smallest strongly connected
dominating set of $G$.

\begin{theorem}
Let $G$ be a strongly connected digraph.
Then $$\ednm(G) \leq \kscdd(G) + 1 \leq \gamma_c(G) + \diam(G) \leq (\gamma(G) + 1) \diam(G).$$
\end{theorem}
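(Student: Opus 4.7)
The plan is to handle the three inequalities separately; the chained form is a convenient packaging, but each link has its own proof.

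For $\ednm(G) \leq \kscdd(G)+1$, I would fix a pair $(k, S)$ achieving the minimum in the definition of $\kscdd$, so that $S$ is a strongly connected $k$-dominating-dominated set with $|S|+k-1 = \kscdd(G)$, and exhibit a defender strategy using $|S|+k$ guards. The approach generalizes the proof of Theorem~\ref{ednm_scdd}: dedicate $|S|$ guards to keeping $S$ fully occupied at all times, and treat the remaining $k$ guards as a ``supply line'' feeding $S$. Concretely, I would try to maintain the invariant that the $k$ extras occupy a directed walk $u_1 \to u_2 \to \cdots \to u_k$ in $V \setminus S$ whose last vertex has an arc into some $s \in S$. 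An attack at $v \notin S$ is then defended in a single multimove that (i) moves the guard at the $x \in S$ with $(x,v) \in E$ to $v$, (ii) shifts guards along a directed path inside $S$ from $s$ to $x$ (available since $S$ is strongly connected), and (iii) slides the whole chain forward by one step so that $u_k$ enters $s$. The result is $S$ still fully occupied, $v$ occupied, and the chain reduced by one.

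The main obstacle is maintaining this invariant over an infinite sequence of attacks: each defense consumes one chain vertex, while the newly placed guard at $v$ is a priori unrelated to the chain. I would exploit the $k$-dominating-dominated property, which guarantees a directed path of length at most $k$ from any vertex outside $S$ into $S$, to re-grow the chain out of the ``loose'' guards left behind at past attack sites. Formalizing this likely requires a bookkeeping argument---for instance, a potential function combining the current chain length with the distances from the loose guards to $S$---that shows the defender can refill the chain in parallel with defending, using the freedom of the multimove to move several guards simultaneously one step closer to $S$. I expect this scheduling step, rather than the single-turn move described above, to be the genuine technical difficulty of the proof.

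For $\kscdd(G)+1 \leq \gamma_c(G)+\diam(G)$, take $S$ to be a smallest strongly connected dominating set of $G$, so that $|S| = \gamma_c(G)$. Since $G$ is strongly connected, every $v \notin S$ satisfies $\dist(v, S) \leq \diam(G)$ and is dominated by some vertex of $S$, so $S$ is a strongly connected $\diam(G)$-dominating-dominated set. Plugging $k = \diam(G)$ into the definition gives $\kscdd(G) \leq |S| + \diam(G) - 1 = \gamma_c(G) + \diam(G) - 1$.

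For $\gamma_c(G)+\diam(G) \leq (\gamma(G)+1)\diam(G)$, it suffices to show $\gamma_c(G) \leq \gamma(G) \cdot \diam(G)$. Starting from a dominating set $D$ with $|D| = \gamma(G)$, fix a cyclic ordering $d_1, d_2, \ldots, d_{\gamma(G)}$ of its elements and, between each consecutive pair $d_i, d_{i+1}$, insert a shortest directed path in $G$, which exists by strong connectivity and has length at most $\diam(G)$. The resulting set is strongly connected (the cycle of paths) and dominating (it contains $D$), with total size at most $\gamma(G) \cdot \diam(G)$. Adding $\diam(G)$ to both sides gives the desired bound.
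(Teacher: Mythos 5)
Your second and third links are correct and essentially identical to the paper's: a smallest strongly connected dominating set is a strongly connected $\diam(G)$-dominating-dominated set, giving $\kscdd(G) \leq \gamma_c(G) + \diam(G) - 1$; and stitching a minimum dominating set into a cycle of shortest paths gives $\gamma_c(G) \leq \gamma(G)\cdot\diam(G)$. The problem is the first link, where your argument has a genuine gap that you yourself flag but do not close. The invariant you propose --- that the $k$ extra guards occupy a contiguous directed walk $u_1 \to \cdots \to u_k$ in $V \setminus S$ ending with an arc into $S$ --- is not restorable after a defense: the newly placed guard at the attacked vertex $v$ has no reason to have an arc into the surviving chain (or into anything useful), so the walk cannot be re-extended to length $k$, and the deferred ``potential function / scheduling'' step is exactly where the proof is missing. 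As written, the chain shrinks by one at every attack and the strategy fails after $k$ attacks outside $S$.

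The fix is to drop the walk structure entirely and use a per-guard distance invariant, which is what the paper does: keep $|S|$ guards on $S$ and label the $k$ outside guards $v_1, \ldots, v_k$ so that $\dist(v_i, S) \leq i$ for each $i$. When $u$ is attacked, the guard on the dominating vertex $x \in S$ moves to $u$; the guard $v_1$ (at distance $\leq 1$ from $S$) is pushed into $S$ and relayed along a directed path inside $S$ to refill $x$, exactly as in the proof of Theorem~\ref{ednm_scdd}; each $v_i$ with $i \geq 2$ moves, if necessary, one step along a shortest path toward $S$ so as to sit at distance $\leq i-1$ and becomes the new $v_{i-1}$; and the new guard on $u$ becomes the new $v_k$, which is legitimate precisely because the $k$-dominating-dominated property guarantees $\dist(u, S) \leq k$. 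This restores the invariant in a single multimove with no bookkeeping across turns, which is the idea your sketch is missing.
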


\begin{proof}
To prove the first inequality, we consider 
a $k$-dominating-dominated set $S$ of $G$. We defend $G$ with
$|S| + k$ guards. The invariant is as follows: $|S|$ guards are on $S$ and $k$ guards $(v_1, \ldots, v_k)$ are in $V \setminus S$ with $\dist(v_i, S) \leq i$ for every $i \in k$.
We start with an initial configuration that satisfies the invariant. Assume that a vertex $u$ is attacked.
Since $S$ is a dominating set, there exists a vertex $x \in S$
which dominates $u$. So, we move the guard on $x$ to $u$.
We "push" the guard on $v_1$ to $x$ similarly to the proof of
Theorem \ref{ednm_scdd} and we move each guard on $v_i$, $i \geq 2$, to a vertex $w_i$ with $\dist(w_i, S) \leq i-1$.
The invariant is satisfied.

To prove the second inequality, notice that a dominating set 
is a $\diam(G)$-dominating-dominated set.

To prove the last inequality, consider a dominating set $S = \{v_1, \ldots, v_l\}$. For each $i \in [l-1]$, we add to $S$ the vertices of a shortest path from $v_i$ to $v_{i+1}$.
We do the same between $v_l$ and $v_1$.
We add at most $l(\diam(G) - 1)$ vertices and obtain a set $S'$ of size $l \cdot \diam(G)$ which induces a strongly connected subrgraph.
\end{proof}

\section{Eternal domination on orientations of graphs}

In this section, we are interested in orientating an undirected graph in order to minimize its eternal domination number, or its m-eternal domination number.
An orientation of an undirected graph $G$ is an assignment of exactly one direction to each of the edges of $G$.
This leads to the introduction of three new parameters for undirected graphs:

\begin{definition}Given a (non directed) graph $G$,\\
$\oedn(G) = \min \{ \edn(H) : H \isorient G \}$\\
$\oednm(G) = \min \{ \ednm(H) : H \isorient G \}$\\
$\oalpha(G) = \min \{ \alpha(H) : H \isorient G \}$\\
$\oscdd(G) = \min \{ \scdd(H) : H \isorient G \}$.\\
\end{definition}

Notice that, for non trivial graphs, $\oedn$ can never be equal to $\gamma$.

\begin{proposition}\label{eternal_alpha_olpha}
Let $G$ be a graph with at least one edge.
Then, $\gamma(G) \leq \alpha(G) < \oalpha(G) \leq \oedn(G)$.
\end{proposition}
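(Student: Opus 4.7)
The plan is to verify the three inequalities in order, with only the strict middle one $\alpha(G) < \oalpha(G)$ requiring any real work.

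First, $\gamma(G) \leq \alpha(G)$ is classical: every maximal independent set is a dominating set, so $\gamma(G)$ is bounded above by the minimum size of a maximal independent set, which is in turn at most $\alpha(G)$. At the other end, Theorem \ref{eternal-seq-ineqs-digraph} gives $\alpha(H) \leq \edn(H)$ for every digraph $H$; applying this to each orientation $H$ of $G$ and taking the minimum over $H$ on both sides immediately yields $\oalpha(G) \leq \oedn(G)$.

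For the strict inequality $\alpha(G) < \oalpha(G)$, I would fix an arbitrary orientation $H$ of $G$ and exhibit an induced acyclic sub-digraph of $H$ of order $\alpha(G)+1$. Let $I$ be a maximum independent set of $G$, so $|I| = \alpha(G)$. Since $G$ has at least one edge, $I \neq V$, and one may pick $v \in V \setminus I$. I claim $H[I \cup \{v\}]$ is acyclic. Indeed, $H[I]$ has no arcs because $I$ is independent in $G$, so every arc of $H[I \cup \{v\}]$ is incident to $v$; moreover, since $H$ is an orientation of a simple graph, there is at most one arc between $v$ and any given $u \in I$. A directed cycle in $H[I \cup \{v\}]$ would necessarily pass through $v$ and therefore have the form $v \to u_1 \to \cdots \to u_k \to v$ with $u_1, \ldots, u_k \in I$ and $k \geq 1$; but the intermediate arcs $u_i \to u_{i+1}$ would lie in $H[I]$, forcing $k = 1$, which contradicts the fact that $H$ cannot contain both $v \to u_1$ and $u_1 \to v$. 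Hence $H[I \cup \{v\}]$ is acyclic, giving $\alpha(H) \geq \alpha(G) + 1$, and taking the minimum over all orientations yields $\oalpha(G) \geq \alpha(G) + 1 > \alpha(G)$.

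The only delicate point in the argument is ruling out a directed cycle in $H[I \cup \{v\}]$, and this hinges entirely on the simplicity of $G$: an orientation cannot realize both directions of the same edge. Every other step is either a direct consequence of the definitions of $\oalpha$ and $\oedn$ or an immediate appeal to Theorem \ref{eternal-seq-ineqs-digraph}.
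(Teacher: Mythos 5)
Your proposal is correct and follows essentially the same route as the paper: the paper also observes that a maximum independent set together with one extra vertex induces a forest (a star plus isolated vertices) in $G$, hence an acyclic digraph in any orientation, which gives $\alpha(G) < \oalpha(G)$; the outer inequalities are handled exactly as you do. Your explicit case analysis ruling out a directed cycle through $v$ is just a more detailed justification of the paper's one-line observation that any orientation of a forest is acyclic.
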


\begin{proof}
We only need to prove that $\alpha(G) < \oalpha(G)$.
Let $H$ be an orientation of $G$ and $S$ be a maximum independent set of $G$.
Let $S' = S \cup \{v\}$ where $v$ is an arbitrary vertex in $V \setminus S$.
$G[S']$ is an union of stars and isolated vertices. Thus, $H[S']$ is acyclic.
\end{proof}

We conjecture a stronger result.
\begin{conjecture}
Let $G$ be a graph with at least one edge.
Then, $\theta(G) < \oalpha(G)$.
\end{conjecture}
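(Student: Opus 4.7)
The plan is to prove $\theta(G) < \oalpha(G)$ by showing that every orientation $H$ of $G$ admits an induced acyclic subdigraph of size at least $\theta(G)+1$. I would fix a minimum clique partition $\mathcal{C} = \{C_1, \ldots, C_k\}$ of $G$ with $k = \theta(G)$; the hypothesis that $G$ has at least one edge guarantees some $|C_i| \geq 2$, so the tournament $H[C_i]$ is nontrivial and carries a Hamiltonian path inducing a linear order that the argument can exploit.

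My primary route would be induction on $k$. The base case $k=1$ is immediate, since $G$ is complete, $H$ is a tournament, and any two vertices form an acyclic pair. For the inductive step with $k \geq 2$, I would consider the restriction $H' = H[V \setminus C_k]$, which orients $G' = G \setminus C_k$ and satisfies $\theta(G') \leq k-1$. When $\theta(G') = k-1$, the induction hypothesis provides an induced acyclic $A \subseteq V \setminus C_k$ with $|A| \geq k$, and the crux becomes to extend $A$ by some vertex of $C_k$ while preserving acyclicity. As a fallback, I would argue by contradiction: letting $A$ be a maximum induced acyclic subdigraph of $H$ with topological order $a_1, \ldots, a_{|A|}$ and assuming $|A| \leq k$, each $v \notin A$ must close a cycle when added to $A$, forcing the existence of arcs $(v, a_i), (a_j, v) \in E(H)$ with $i \leq j$ and a directed $a_i$-to-$a_j$ path in $H[A]$. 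I would label $v$ canonically by such a pair $(i(v), j(v))$, group vertices by label, and aim to show that each group forms a clique in $G$, producing a clique partition of strictly fewer than $k$ cliques and contradicting the minimality of $\mathcal{C}$.

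The principal obstacle is shared by both routes: nothing a priori forces vertices with the same label to be mutually adjacent in $G$, nor does the induction yield an obviously extendable acyclic set when every vertex of $C_k$ closes a cycle with $A$. A more delicate refinement is likely needed, perhaps tracking the position of each $v \notin A$ along a Hamiltonian path of its host clique, or first disposing of the perfect case via Proposition \ref{eternal_alpha_olpha} (where $\theta(G) = \alpha(G) < \oalpha(G)$ is immediate) and then handling odd holes and odd antiholes separately using the Strong Perfect Graph Theorem. This extension step is where the bulk of the effort would go, and it is plausible that it requires a genuinely new technique rather than a mechanical refinement.
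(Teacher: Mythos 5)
You should first note that the paper does not prove this statement: it is stated as a conjecture and left open, the only supporting evidence being the remark that it holds for perfect graphs, where $\theta(G) = \alpha(G) < \oalpha(G)$ follows from Proposition \ref{eternal_alpha_olpha} together with Theorems \ref{eternal-seq-ineqs} and \ref{eternal_clique_covering}. Your proposal, as you yourself acknowledge, is not a proof either, and the gaps you flag are real; in one place the situation is worse than you suggest. In the inductive route, the extension step genuinely fails as stated: it is perfectly possible that every vertex $v$ of $C_k$ closes a directed cycle with the acyclic set $A$ (for instance if $A$ contains an arc $a_1 \to a_2$ and every $v \in C_k$ receives the arc $(a_2, v)$ and sends the arc $(v, a_1)$), and you give no exchange mechanism for trading vertices of $A$ against several vertices of $C_k$. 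In the contradiction route there is an additional counting obstruction you did not mention: the labels $(i(v), j(v))$ range over roughly $\binom{|A|+1}{2}$ values, so even if every label class were a clique (which, as you note, nothing guarantees), you would only obtain a cover of $V \setminus A$ by up to $\binom{k+1}{2}$ cliques, which for $k \geq 2$ is not fewer than $k$; no contradiction with $\theta(G) = k$ follows.

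The proposed reduction via the Strong Perfect Graph Theorem also does not go through. Although $\oalpha$ is monotone under taking induced subgraphs, so is $\theta$, and in the same direction: knowing that an induced odd hole or antihole $S$ satisfies $\theta(G[S]) < \oalpha(G[S]) \leq \oalpha(G)$ says nothing useful about $\theta(G)$, which may be far larger than $\theta(G[S])$. So ``disposing of the perfect case and handling odd holes and antiholes separately'' does not cover all graphs. In short, your write-up correctly identifies that a new idea is needed, but none of the three routes you sketch closes the gap; the statement remains open, consistent with its status in the paper.
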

The conjecture would imply that there is no non trivial graph with $\edn = \oedn$.
It is true for perfect graphs since $\alpha(G) = \edn(G) = \theta(G) < \oalpha(G)$ (Theorems \ref{eternal-seq-ineqs} and \ref{eternal_clique_covering}).

Robbins proved the following theorem.

\begin{theorem}\label{robbins_theorem}\cite{robbins1939theorem}
Let $G$ be a graph.
Then $G$ admits a strongly connected orientation if and only if $G$ is 2-edge-connected.
\end{theorem}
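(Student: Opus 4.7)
The plan is to handle the two directions separately, and the forward (necessity) direction is essentially a short contradiction argument.

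For the necessity direction, I would argue by contrapositive: suppose $G$ is not 2-edge-connected. Either $G$ is disconnected, in which case no orientation is strongly connected, or $G$ is connected but contains a bridge $e = uv$. Fix any orientation $H$ of $G$, and without loss of generality suppose $e$ is oriented as the arc $(u,v)$. Since $e$ is a bridge, removing it from $G$ separates the vertex set into two parts $U \ni u$ and $W \ni v$ with no edge between them other than $e$. Then in $H$, the only arc between $U$ and $W$ is $(u,v)$, so there is no directed path from $v$ back to $u$ in $H$, contradicting strong connectivity.

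For the sufficiency direction, the tool I would use is the open ear decomposition of 2-edge-connected graphs: $G$ is 2-edge-connected if and only if $G$ can be written as $C_0 \cup P_1 \cup \cdots \cup P_k$, where $C_0$ is a cycle and each $P_i$ is a path whose two endpoints lie in $C_0 \cup P_1 \cup \cdots \cup P_{i-1}$ but whose internal vertices do not. I would then construct the desired orientation $H$ in stages: orient $C_0$ as a directed cycle, and then orient each ear $P_i$ as a directed path from one endpoint to the other. An induction on $i$ shows that the partial orientation obtained at stage $i$ is strongly connected: given any two vertices, if both lie in the previous stage we use the inductive hypothesis, and if one of them is an internal vertex of $P_i$ we concatenate a walk along $P_i$ with a walk in the previous stage, using that both endpoints of $P_i$ lie in the previous stage and that $P_i$ was oriented as a directed path.

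The main obstacle is really just invoking (or proving) the ear decomposition characterisation of 2-edge-connectedness. If one wants a self-contained proof, the alternative is to use a DFS tree from an arbitrary root $r$: orient every tree edge away from $r$ and every non-tree (back) edge toward $r$. The fact that $G$ is 2-edge-connected guarantees that for every tree edge there is a back edge passing over it, which is exactly what is needed to verify that from every vertex one can return to $r$, yielding strong connectivity. Either approach amounts to the classical argument, and the bulk of the work lies in justifying the structural lemma (ear decomposition or the ``covering back edge'' property of DFS) rather than in the orientation step itself.
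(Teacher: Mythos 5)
Note first that the paper does not prove this statement at all: it is Robbins' classical theorem, quoted with a citation to the 1939 original, so there is no in-paper proof to compare against. Judged on its own, your argument is essentially the standard one and is correct in outline. The necessity direction (bridge gives a unique arc between the two sides, so no return path) is complete as written.

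One genuine inaccuracy in the sufficiency direction: you invoke the \emph{open} ear decomposition, i.e.\ ears that are paths with two \emph{distinct} endpoints in the previously built part. That decomposition characterises 2-\emph{vertex}-connected graphs (Whitney), not 2-edge-connected ones; for example, two triangles sharing a single vertex are 2-edge-connected but admit no open ear decomposition, so your induction would never reach such graphs. The correct structural lemma for 2-edge-connectedness allows \emph{closed} ears, whose two endpoints may coincide. Fortunately your orientation step adapts verbatim: orient a closed ear as a directed cycle through its internal vertices, returning to its attachment vertex, and the inductive strong-connectivity argument goes through unchanged. Your DFS alternative (tree edges away from the root, back edges toward the root, with 2-edge-connectedness guaranteeing that every tree edge is covered by a back edge) is correct as stated and is arguably the cleaner self-contained route. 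So the proposal needs only the open-versus-closed ear correction, plus a proof or citation of whichever structural lemma you settle on.
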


This leads to the following proposition.

\begin{proposition}\label{eternal_strongly_connected}
Let $G$ be a 2-edge-connected graph.
Then there exists a strongly connected orientation $H$ of $G$ with 
$\edn(H) = \oedn(G)$ and $\ednm(H) = \oednm(G)$.
\end{proposition}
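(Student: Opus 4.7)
The plan is to take any orientation of $G$ that realizes $\oedn(G)$ (respectively $\oednm(G)$) and to reorient only the edges that lie between distinct strongly connected components, producing a strongly connected orientation whose eternal domination number is no larger. Let $H$ be an optimal orientation and let $C_1,\ldots,C_l$ be its strongly connected components. If $l=1$ there is nothing to prove, so I will assume $l\geq 2$. By Lemma \ref{digraph-subgraph}, $\edn(H)=\sum_{i=1}^l \edn(H[C_i])$, and each $H[C_i]$ is a strongly connected orientation of the connected subgraph $G[C_i]$.

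First I would contract each $C_i$ in $G$ to a single vertex to obtain a (possibly multi-edged) graph $G'$, discarding self-loops. A short cut argument shows that $G'$ is still 2-edge-connected: any bridge of $G'$ would pull back to an edge of $G$ whose removal disconnects $G$, contradicting the 2-edge-connectivity hypothesis. Applying Theorem \ref{robbins_theorem} to $G'$ (Robbins' theorem carries over verbatim to multigraphs) yields a strongly connected orientation $\tilde H$ of $G'$. I then define $H'$ as the orientation of $G$ that agrees with $H$ on every $G[C_i]$ and orients each inter-component edge according to $\tilde H$. To see that $H'$ is strongly connected, one alternates strongly connected subpaths inside individual $C_i$'s with the inter-component arcs prescribed by $\tilde H$.

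It then remains to bound $\edn(H')$ from above. The defender on $H'$ uses $\sum_{i=1}^l \edn(H[C_i])$ guards partitioned across the components, and responds to every attack on a vertex $v\in C_i$ by playing the optimal strategy of $H[C_i]$; this is legal because every arc of $H[C_i]$ is still an arc of $H'$, so the guards within $C_i$ never need to leave it. Hence $\edn(H')\leq \sum \edn(H[C_i])=\edn(H)=\oedn(G)$, and combining with the trivial reverse inequality $\edn(H')\geq \oedn(G)$ yields equality. The identical argument, applied to an orientation achieving $\oednm(G)$, handles the m-eternal case: the $\ednm$ part of Lemma \ref{digraph-subgraph} gives the decomposition, and any multimove used inside $C_i$ in the component-wise strategy consists only of arcs preserved in $H'$.

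The main obstacle I expect is the contraction step. One has to check carefully that contracting strongly connected vertex subsets preserves 2-edge-connectivity of $G$ and that Robbins' theorem is indeed available on the contracted multigraph; both are standard once 2-edge-connectivity is phrased in terms of edge cuts, but they must be invoked explicitly. Everything after that is bookkeeping and a direct use of the defender's component-wise strategy.
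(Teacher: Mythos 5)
Your proof is correct and follows essentially the same route as the paper's: take an optimal orientation, observe via Lemma \ref{digraph-subgraph} that guards never leave their strongly connected components, and re-orient only the inter-component edges to achieve strong connectivity. The paper states the re-orientation step without justification, whereas you supply the missing details (contraction to a 2-edge-connected multigraph and Robbins' theorem), which is a welcome elaboration but not a different argument.
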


\begin{proof}
Let $H$ be an orientation of $G$ such that $\ednm(H) = \oednm(G)$. It suffices to notice that in $H$, each guard stays inside its strongly connected component.
Thus, we can change in $H$ the orientation of some edges between two components to make it strongly connected.
\end{proof}

By combining Proposition \ref{eternal_strongly_connected} and Lemma 1, we obtain:

\begin{corollary}\label{eternal_2_vertex_compoenents}
Let $G$ be a  graph with 2-edge-connected components $S_1, \ldots S_l$.
Then $$\oedn(G) = \sum_{i=1}^l \oedn(G[S_i])$$
and
 $$\oednm(G) = \sum_{i=1}^l \oednm(G[S_i]).$$
\end{corollary}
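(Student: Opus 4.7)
My plan is to prove the two inequalities separately, for $\oedn$; the argument for $\oednm$ is identical since Lemma \ref{digraph-subgraph} and Proposition \ref{eternal_strongly_connected} apply to both parameters.

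For the upper bound $\oedn(G) \leq \sum_{i=1}^l \oedn(G[S_i])$, I will build an explicit orientation. For each component $S_i$, I invoke Proposition \ref{eternal_strongly_connected} to obtain a strongly connected orientation $H_i$ of $G[S_i]$ with $\edn(H_i) = \oedn(G[S_i])$ (trivial single-vertex components being handled directly). I then glue the $H_i$ into an orientation $H$ of $G$ by orienting the remaining edges — which are exactly the bridges of $G$ — in an arbitrary direction. Since a bridge cannot belong to any directed cycle of $H$, the strongly connected components of $H$ are precisely the $S_i$, and Lemma \ref{digraph-subgraph} yields $\edn(H) = \sum_i \edn(H_i) = \sum_i \oedn(G[S_i])$.

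For the reverse inequality, I start with an orientation $H$ of $G$ achieving $\oedn(G)$ and let $C_1,\ldots,C_p$ be its strongly connected components. The key structural fact needed is that each $C_j$ lies entirely inside some $S_i$: any directed cycle of $H$ projects to an undirected cycle of $G$, and a bridge of $G$ sits on no cycle, so no SCC of $H$ can straddle two different 2-edge-connected components. Consequently the SCC-partition of $H$ refines $\{S_i\}$, and for every $i$ the SCCs of $H[S_i]$ are exactly the $C_j$ contained in $S_i$ (maximality transfers in both directions since the connecting paths inside each $C_j$ stay inside $S_i$). Applying Lemma \ref{digraph-subgraph} once to $H$ and once to each $H[S_i]$ then yields $\oedn(G) = \edn(H) = \sum_i \edn(H[S_i]) \geq \sum_i \oedn(G[S_i])$, since each $H[S_i]$ is some orientation of $G[S_i]$.

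I do not anticipate a substantial obstacle. The only point requiring care is the bookkeeping identification of the SCCs of $H[S_i]$ with the SCCs of $H$ lying inside $S_i$, and this reduces to the elementary observation that bridges of $G$ never lie on any cycle — exactly the same ingredient that makes Proposition \ref{eternal_strongly_connected} usable on each component in the first place.
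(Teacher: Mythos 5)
Your argument is correct and takes essentially the same route as the paper, which obtains the corollary precisely by combining Proposition \ref{eternal_strongly_connected} with Lemma \ref{digraph-subgraph}. You have merely made explicit the bookkeeping the paper leaves implicit, namely that a bridge lies on no cycle, so the strongly connected components of any orientation refine the 2-edge-connected components and directed paths between vertices of the same component never leave it.
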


Chambers et al \cite{chambers2006mobile} proved that $\ednm(G) \leq \ffceil{\frac{n}{2}}$
for every connected graph $G$. One natural question is to see if this proposition is true with $\oednm$
for every 2-edge-connected graphs.
We disprove this proposition by giving a counterexample that is even 2-vertex-connected in Figure 2.

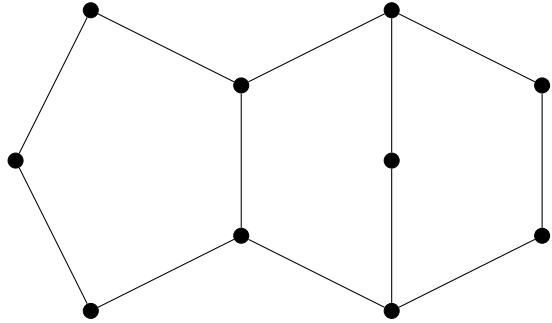
\begin{figure}
	\centering
	\begin{tikzpicture}[vertex/.style={circle,draw,fill,inner sep=2pt}]    
    
    %Graphs
    \node[vertex] (v1) at (0,0) {};
    \node[vertex] (v2) at (-1,2) {};
    \node[vertex] (v3) at (0,4) {};
    \node[vertex] (v4) at (2,3) {};
    \node[vertex] (v5) at (2,1) {};
    
    \node[vertex] (v6) at (4,0) {};
    \node[vertex] (v7) at (4,2) {};
    \node[vertex] (v8) at (4,4) {};
    \node[vertex] (v9) at (6,3) {};
    \node[vertex] (v10) at (6,1) {};
    
    \draw (v1) -- (v2) -- (v3) -- (v4) -- (v5) -- (v1)
          (v6) -- (v7) -- (v8) -- (v9) -- (v10) -- (v6)
          (v5) -- (v6)
          (v4) -- (v8);
	\end{tikzpicture}
	\label{fig_ednm_counterexample}
    \caption{A 2-vertex-connected graph of order 10 and oriented m-eternal dominating number 6}
\end{figure}

We now show that $\oednm$ can be bounded by a natural parameter.

\begin{theorem}
Let $G = (V, E)$ be a non directed graph.
Then, $\oednm(G) \leq \oscdd(G) + 1 \leq \gamma_{2,2}(G) + 1$.
\end{theorem}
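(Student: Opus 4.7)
The plan is to handle the two inequalities separately, both fairly directly from results already in the excerpt. The first inequality is immediate from Theorem \ref{ednm_scdd} applied to a best orientation: pick an orientation $H$ of $G$ achieving $\scdd(H) = \oscdd(G)$, and observe
$\oednm(G) \leq \ednm(H) \leq \scdd(H) + 1 = \oscdd(G) + 1$.

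For the second inequality $\oscdd(G) \leq \gamma_{2,2}(G)$, the plan is to start with a smallest $2$-dominating $2$-edge-connected set $S$ of $G$, with $|S| = \gamma_{2,2}(G)$, and build an orientation $H$ of $G$ witnessing $\scdd(H) \leq |S|$. First, since $G[S]$ is $2$-edge-connected, Theorem \ref{robbins_theorem} (Robbins) provides a strongly connected orientation of $G[S]$; use it inside $S$. Next, for every $v \in V \setminus S$, the $2$-domination hypothesis guarantees two distinct neighbors $s_1, s_2 \in S$; orient the edge $v s_1$ as $(s_1, v)$ and the edge $v s_2$ as $(v, s_2)$, so that $v$ is both dominated by a vertex of $S$ and dominates a vertex of $S$. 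Orient all remaining edges (those within $V \setminus S$, or additional edges from $v$ into $S$) arbitrarily.

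In the resulting orientation $H$, the set $S$ induces a strongly connected subdigraph and every vertex outside $S$ both dominates and is dominated by a vertex of $S$; that is exactly the definition of a strongly connected dominating-dominated set. Hence $\scdd(H) \leq |S| = \gamma_{2,2}(G)$, so $\oscdd(G) \leq \gamma_{2,2}(G)$, and combining with the first inequality yields the stated chain.

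I do not expect a real obstacle here, since all the work has been done in Theorem \ref{ednm_scdd} and Theorem \ref{robbins_theorem}; the only point that needs a little care is checking that the $2$-domination condition is enough to arrange one inward and one outward edge at each outside vertex without conflicting with the strongly connected orientation already chosen on $G[S]$, which it is because the edges incident to $V \setminus S$ are disjoint from the edges of $G[S]$.
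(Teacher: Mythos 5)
Your proof is correct and follows essentially the same route as the paper: the first inequality by applying Theorem \ref{ednm_scdd} to an optimal orientation, and the second by orienting $G[S]$ strongly connectedly via Robbins' theorem and then giving each outside vertex one incoming and one outgoing arc to $S$ using its two guaranteed neighbors. No issues.
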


\begin{proof}
The first inequality is a consequence of Theorem \ref{ednm_scdd}.
Let $S$ be a 2-dominating 2-edge-connected set of $G$.
We construct an orientation $H$ of $G$ as follows:
we orientate the edges in $S$ such that $H[S]$ is strongly connected. This is possible thanks to Theorem \ref{robbins_theorem} since $G[S]$ is 2-edge-connected.
Then, for every vertex $v \in V \setminus S$, we choose two distinct vertices $v_1, v_2$ in $S$ 
such that $vv_1$ and $vv_2$ are edges of $G$
and orientate $vv_1$ from $v$ to $v_1$ and $vv_2$ from $v_2$ to $v$.
Thus, $S$ is a strongly connected dominating-dominated set of $H$.
\end{proof}

%This bound is tight as we show is the following proposition.
%\begin{proposition}
%For any $k > 0$, there is an infinity of graphs $G$ such that
%$\oednm(G) = \gamma_{2,2}(G) + 1 = k+1$.
%\end{proposition}

%\begin{proof}
%Let $k$ and $l$ be two integers such that $0 < k \leq l$.
%We construct the graph $G = (V, E)$ as follows.
%$V = \{ v_{i,j} : i \in [k], j \in [l] \}$
%and $E = \{ v_{i,1} v_{i+1,j} : i \in [k-1], j \in [l] \} \cup
%        \{ v_{k,1} v_{1,j} : j \in [l] \}$.  
%It is easily seen that $\{ v_{i,1} : i \in [k] \}$ is 2-edge-connected 2-dominating set of $G$.
%Thus $\gamma_{2,2}(G) \leq k$.
%Let us prove that $\oednm(G) \geq k + 1$.
%Let $H$ be a strongly connected orientation of $G$.
%Let us prove that the only dominating set of $G$ with $k$ elements is the set $\{ v_{i,1} : i \in [k] \}$.
%First notice that $N^+(v_{i,j}) = 1$ if $j > 1$.
%Thus
%$$\sum_{i=1}^k |N^+(v_{i,1})| + \sum_{i=1}^k \sum_{j=2}^l N^+(v_{i,j}) = |E|$$
%$$\sum_{i=1}^k |N^+[v_{i,1}]| = |E| + k - k(l-1) = kl = |V| + k$$

% $$|N[S]| = \sum_{v \in S} |N^+[v]| - |E(S)| < kl + k - k = |V|$$

%Thus, $S$ does not dominate $G$ (TODO a completer)
%\end{proof}

\subsection{Hardness results} % trouver un titre

Let $G$ be an undirected graph. We define $C(G)$ by starting from $G$, adding
one vertex per edge of $G$ and connecting each new vertex to the extremities of the associated edge.
See Figure \ref{ctocg_example} for an example.
This definition allows us to present the following result.

\begin{figure}
	\centering
	\begin{tikzpicture}[vertex/.style={circle,draw,fill,inner sep=2pt}]    
    
    %Graphs
    \node[vertex,label=left:$v_1$] (v1) at (0,0) {};
    \node[vertex,label=right:$v_2$] (v2) at (2,0) {};
    \node[vertex,label=left:$v_3$] (v3) at (0,2) {};
    \node[vertex,label=right:$v_4$] (v4) at (2,2) {};
    \node[vertex,label=above:$v_5$] (v5) at (1,3) {};
  
    \draw (v1) -- (v2);
    \draw (v1) -- (v3);
    \draw (v2) -- (v4);
    \draw (v3) -- (v4);
    \draw (v3) -- (v5);
    \draw (v4) -- (v5);
    
    \node[vertex,label=left:$v_1$] (w1) at (6,0) {};
    \node[vertex,label=right:$v_2$] (w2) at (8,0) {};
    \node[vertex,label=left:$v_3$] (w3) at (6,2) {};
    \node[vertex,label=right:$v_4$] (w4) at (8,2) {};
    \node[vertex,label=above:$v_5$] (w5) at (7,3) {};
    
    \node[vertex,label=left:$u_{1,2}$] (u12) at (7,-1) {};
    \node[vertex,label=left:$u_{1,3}$] (u13) at (5,1) {};
    \node[vertex,label=right:$u_{2,4}$] (u24) at (9,1) {};
    \node[vertex,label=left:$u_{3,4}$] (u34) at (7,1) {};
    \node[vertex,label=left:$u_{3,5}$] (u35) at (6,3) {};
    \node[vertex,label=right:$u_{4,5}$] (u45) at (8,3) {};
  
    \draw (w1) -- (w2);
    \draw (w1) -- (w3);
    \draw (w2) -- (w4);
    \draw (w3) -- (w4);
    \draw (w3) -- (w5);
    \draw (w4) -- (w5);
    
    \draw (w1) -- (u12) -- (w2)
          (w1) -- (u13) -- (w3)
          (w2) -- (u24) -- (w4)
          (w3) -- (u34) -- (w4)
          (w3) -- (u35) -- (w5)
          (w4) -- (u45) -- (w5)
          ;

	\end{tikzpicture}
    \caption{The house graph $G$ on the left and $C(G)$ on the right}
	\label{ctocg_example}
\end{figure}
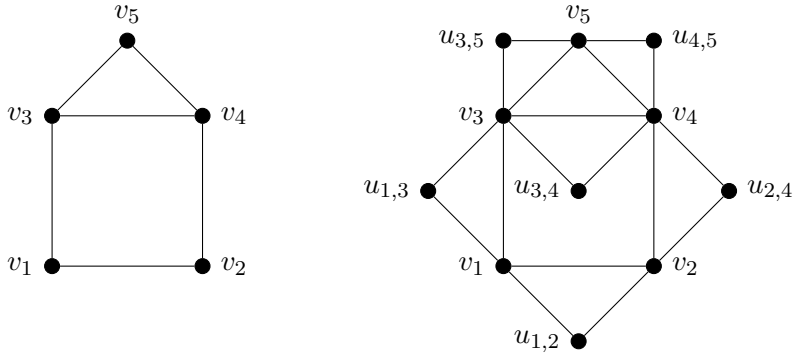

\begin{lemma}\label{GtoCG}
Let $G$ be an undirected graph with $m$ edges. Then
\begin{itemize}
\item $\oedn(C(G)) = \edn(G) + m$
\item $\oalpha(C(G)) = \alpha(G) + m.$
\end{itemize}
\end{lemma}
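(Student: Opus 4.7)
The proof splits into the $\oalpha$ identity and the $\oedn$ identity. For the $\oalpha$ identity, the lower bound is obtained by picking any orientation $H$ of $C(G)$ and a maximum independent set $S$ of $G$, and observing that $T = S \cup \{u_{i,j} : v_iv_j \in E(G)\}$ induces a forest in $C(G)$: there are no edges inside $S$ (by independence), no edges among the $u$-vertices (they are never mutually adjacent in $C(G)$), and each $u_{i,j}$ has at most one neighbour in $S$ because $S$ cannot contain both $v_i$ and $v_j$. Any orientation of a forest is acyclic, so $\alpha(H) \geq |T| = \alpha(G) + m$. For the matching upper bound, orient each triangle $v_iv_ju_{i,j}$ as a directed 3-cycle and call the resulting orientation $H$. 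Given any induced acyclic set $T$ of $H$ with $S = T \cap V(G)$ and $U = T \cap \{u_{i,j}\}$, the 3-cycle forces $u_{i,j} \notin U$ whenever both endpoints of $v_iv_j$ lie in $S$, so $|U| \leq m - e(G[S])$. The vertex-cover inequality $\alpha(G[S]) \geq |S| - e(G[S])$ (every edge contributes at most one vertex to a minimum cover) together with $\alpha(G[S]) \leq \alpha(G)$ gives $|S| - e(G[S]) \leq \alpha(G)$, and summing yields $|T| \leq \alpha(G) + m$.

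For the upper bound of the $\oedn$ identity, I reuse the 3-cycle orientation $H$, with the direction of each triangle chosen compatibly with some fixed auxiliary orientation of $G$. Initial guards are placed on a minimum undirected eternal dominating set $D^G_0$ of $G$ together with every $u_{i,j}$, totalling $\edn(G) + m$. The strategy maintains a shadow undirected state evolving by an optimal undirected strategy $\sigma$ on $G$: when an attack on $v_k$ is answered in $\sigma$ by the move $v_l \to v_k$, either the arc $v_l \to v_k$ already lies in $H$ and the move is performed directly, or the 3-cycle supplies the arc $u_{k,l} \to v_k$, in which case the guard on $u_{k,l}$ moves to $v_k$ and a ``debt'' is recorded, pairing the temporarily unguarded $u_{k,l}$ with the now-dormant guard on $v_l$. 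When the attacker later attacks $u_{k,l}$, the 3-cycle arc $v_l \to u_{k,l}$ refills it and resolves the debt, bringing the configuration back in line with the shadow.

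For the lower bound of the $\oedn$ identity, fix any orientation $H$ and any eternal dominating set $D$ of $H$. The attacker plays in two phases. In Phase 1, as long as some $u_{i,j}$ is unguarded the attacker attacks it; the only in-neighbours of $u_{i,j}$ in $H$ lie in $\{v_i,v_j\}$, so the defender must either lose immediately (if $u_{i,j}$ has in-degree $0$) or move a guard of $V(G)$ onto $u_{i,j}$. After at most $m$ such attacks every $u_{i,j}$ is guarded, so $|D \cap V(G)| \leq |D| - m$. In Phase 2 the attacker plays an optimal undirected attacker strategy on $G$, and any defender response of the form $u_{k,l} \to v_k$ is punished immediately by a Phase-1 re-attack on $u_{k,l}$, forcing the refill $v_l \to u_{k,l}$; these two $H$-turns together implement the single undirected move $v_l \to v_k$ on the $V(G)$-guards. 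Projecting the defender's answers onto $V(G)$ therefore yields a valid undirected eternal strategy on $G$ using only $|D| - m$ guards, whence $|D| \geq \edn(G) + m$.

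The main obstacle in both directions is the coupling between the $V(G)$-component of the $H$-game and the undirected game on $G$: the upper bound must verify that coexisting debts never trap the defender into a configuration with no legal move (which will require judicious choice of the per-triangle 3-cycle directions, and possibly of the undirected strategy $\sigma$, so that whenever $\sigma$ calls for a move $v_l \to v_k$ the corresponding $u_{k,l}$ is never already debted), and the lower bound must make the projection argument rigorous in the face of any adversarial interleaving of $u$-attacks and $v$-attacks.
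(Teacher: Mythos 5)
Your overall architecture coincides with the paper's: the forest $S\cup P$ gives the lower bound on $\oalpha$, cyclically oriented triangles give both upper bounds, and forcing all of $P$ to be occupied and projecting onto $V(G)$ gives the lower bound on $\oedn$. Your counting argument for $\oalpha(C(G))\leq \alpha(G)+m$ (combining $|U|\leq m-e(G[S])$ with $\alpha(G[S])\geq |S|-e(G[S])$) is correct and is a slightly cleaner alternative to the paper's normalization of a maximum acyclic set so that it contains $P$.

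The genuine gap is the one you flag yourself and leave open: the consistency of the debt bookkeeping in the $\oedn$ upper bound. It requires no judicious choice of triangle directions or of $\sigma$; it closes with your construction as stated, by the following observation, which is exactly the invariant the paper maintains. When a debt is created for $u_{k,l}$, the guard sitting on $v_l$ becomes the dedicated defender of $u_{k,l}$ and $v_l$ is simultaneously removed from the shadow eternal dominating set, which is updated to $B\cup\{v_k\}\setminus\{v_l\}$. While the debt is open, $v_l$ remains occupied, hence cannot be attacked, hence cannot re-enter $B$; since $\sigma$ only moves guards located on vertices of $B$, it never asks the reserved guard on $v_l$ to move, and no second debt can be charged to $v_l$ (that would require $v_l\in B$ again). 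So debts never collide and the defender is never trapped: every configuration splits into $m$ guards, one per $u_{i,j}$, each on $u_{i,j}$ or on its unique in-neighbour in the triangle, plus an eternal dominating set of $G$. Your lower bound also relies on an unjustified "forcing": the reason the re-attack on $u_{k,l}$ must be answered from $v_l$ and not from $v_k$ is that the defender just used the arc $(u_{k,l},v_k)$, so the edge $v_ku_{k,l}$ is oriented away from $u_{k,l}$ and $v_l$ is the only remaining candidate in-neighbour. With these two observations supplied, your argument becomes the paper's proof.
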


\begin{proof}
Let $P$ be the set of all added vertices $u_{i,j}$ in $C(G)$.

In $C(G)$, let $v_1, \ldots v_n$ be the vertices of $G$, and $u_{i,j}$, $i<j$ be the added vertex associated to the edge $\{v_i,v_j\}$.
We prove that $\oedn(C(G)) \leq \edn(G) + m$ by showing that there exists an orientation $H$ of $C(G)$ that can always be defended by $\edn(G) + m$ guards.
To construct $H$, we orientate $C(G)$ in the following way: we orientate every edge
$v_i v_j \in E$ from $v_i$ to $v_j$ iff $i<j$, then $v_j u_{i,j}$ from $v_j$ to $u_{i,j}$ and $v_i u_{i,j}$ from  $u_{i,j}$ to $v_i$.
Every triangle induced by some vertices $v_i$, $v_j$ and $u_{i,j}$ is therefore an oriented cycle. 
We will consider a strategy that preserves the following invariant: 
for each guard configuration $D$ in the strategy,
we can always partition $D$ into two sets of guards, a set $A$ of $m$ guards defending each one of the $m$ added vertices $u_{i,j}$, i.e. are either on $u_{i,j}$ or on $v_j$, and a set $B \subseteq V$ which is an eternal dominating set of $G$.

We start with the guard configuration $D_0=A_0 \cup B_0$, with $B_0$ a minimum eternal dominating set of $G$
and $A_0=P$. Clearly, this configuration verifies the invariant.
Consider a guard configuration $D = A \cup B$ that verifies the invariant and let us prove that, for every attacked vertex $r$, the defender can defend the attack and obtain a configuration $D'$ that verifies the invariant.

If $r = u_{i,j}$, then the defender moves the guard from $v_j$ to $u_{i,j}$.
If $r = v_i$, then there is a vertex $v_j \in B$ adjacent to $r$ such that
$B \cup \{ v_i \} \setminus \{ v_j \}$ is an eternal dominating set of $G$.
We have two possible cases that depend on the orientation of $\{ v_i, v_j\}$.
If $(v_j, v_i) \in E(H)$, then the defender moves the guard from $v_j$ to $v_i$.
If $(v_i, v_j) \in E(H)$, then the defender moves the guard from $u_{i,j}$ to $v_i$. 
By choosing $B' = B \cup \{ v_i \} \setminus \{ v_j \}$
and $A' = A \cup \{ v_j \} \setminus \{ u_{i,j} \}$, it is easily seen that $D' = A' \cup B'$
satisfies the invariant.

We now prove that $\oedn(C(G)) \geq \edn(G) + m$ by showing that there exists an eternal dominating set $X$ of $G$ such that $|X|=\oedn(C(G)) - m$.
Let $H$ be an orientation of $C(G)$ such that $\edn(H)=\oedn(C(G))$.
We call clean configuration of $H$ any eternal dominating set $S$ of $H$ such that $P\subseteq S$.
In $H$, the $\oedn(C(G))$ guards can always be brought to a clean configuration. Indeed, the attacker can successively attack every vertex of $P$ and be sure that they will therefore all be occupied. We prove that if a set $S$ is a clean configuration of $H$, then $S\cap V$
is an eternal dominating set of $G$. 

We first prove that $S\cap V$ is a dominating set of $G$. 
Indeed, $S$ is a dominating set of $H$. Any vertex of $G$ dominated by $v\in S\cap V$ in
$H$ is still dominated by it in $G$, and every vertex $u_{i,j}\in P$ can only dominate $v_i$ or $v_j$. If $u_{i,j}$ is the only vertex dominating $v_i$ in $S$, then if $v_i$ is attacked, a guard has to move from $u_{i,j}$ to $v_i$, which means that $u_{i,j}$ has to be dominated by $v_j$, so that $v_j\in S$. Since $v_i v_j \in E$, $v_j$ dominates $v_i$ in $G$, so that every vertex of $G$ is dominated by a vertex of $S\cap V$ in $G$.

Then, for every attack $r \in V \setminus S$, by attacking $r$ in $H$, then $u_{i,j}$ if $u_{i,j}$ became unoccupied, we obtain a new clean configuration $S'$ of $H$ so that $r \in S'\cap V$. Indeed, if a guard moves from $v\in V$ to $r$ in $H$, we obtain a clean configuration $S'=S \setminus \{v\} \cup \{r\}$, and the guard can do the same in $G$. If a guard moved from $u_{i,j}$ to $r=v_i$ in $H$, then $v_j$ is the only vertex dominating $u_{i,j}$, so that when $u_{i,j}$ is attacked, a guard must move from $v_j$ to $u_{i,j}$. We therefore obtain a clean configuration $S'=S\setminus \{v_j\} \cup \{v_i\}$, and a guard can directly move from $v_j$ to $r=v_i$ in $G$ to obtain the configuration $S' \cap V$.

Therefore, if we take a clean configuration $S$ of $H$ such that $|S|=\oedn(C(G))$, we have that $S\cap V$ is an eternal dominating set of $G$, and since $|S\cap V|=\oedn(C(G)) - m$, $\oedn(C(G)) \geq \edn(G) + m$.

We now prove that $\alpha(G) \geq \oalpha(C(G)) - m$. Let $H$ be an orientation of $C(G)$ that minimizes $\alpha(H)$. We can assume, without loss of generality, that all triplets $\{u_{i,j},v_i,v_j\}$ in $H$ induce an oriented triangle.
Indeed, consider an induced acyclic subgraph $H[S]$ in such an orientation.
Then, changing the orientation of some edges $\{ u_{i,j}, v_i \}$ or $\{ u_{i,j}, v_j \}$
does not create new minimal oriented cycles in $H$. Thus, $H[S]$ stays acyclic, and $\alpha(H)$ cannot decrease.
Therefore, at most two of the three vertices of each triplet $\{v_i,v_j,u_{i,j}\}$ belong to $S$.
We can assume, without loss of generality, $P \subseteq S$. Indeed, if $v_i,v_j\in S$, $H[S\setminus \{v_i\}\cup \{u_{i,j}\}]$ is also acyclic, and if $v_i \notin S$, $H[S\cup \{u_{i,j}\}]$ is also acyclic. Thus, $|S\cap V|=\oalpha(C(G)) - m$. Moreover, since for all $v_i v_j \in E$, $u_{i,j}\in S$, we cannot have  $v_i,v_j \in S$, and $|S\cap V|$ is an independent set of $G$. Therefore, $\alpha(G) \geq \oalpha(C(G)) - m$.

To see that $\oalpha(C(G)) \geq \alpha(G) + m$, consider a maximal independent set $S$ of $G$. Take $S'=S\cup P$. We obviously have $|S'|=\alpha(G) + m$ and $C(G)[S']$ is acyclic. Indeed, there are no cycles in $C(G)[S]=G[S]$, and an added vertex $u_{i,j}$ of $P$ cannot create any cycle since it is only linked to $v_i$ and $v_j$, and $v_i$ or $v_j$ is not in $S$. Thus, for any orientation $H$ of $C(G)$, $H[S']$ is acyclic.
\end{proof}

Notice that Lemma \ref{GtoCG} is not true if we replace $\oedn$ with $\oednm$.
Indeed, the inequality $\oednm(C(G)) \leq \ednm(G) + m$ remains true but the inequality $\oednm(C(G)) \geq \ednm(G) + m$ is not necessarily true.
For example, if we consider $G$ as the path graph $P_3$, we let the reader verify that $\ednm(G) = 2$ and $\oednm(C(G)) = 3 < \ednm(G) + m$.

The consequences of Lemma \ref{GtoCG} are particularly interesting, leading to complexity results.
The first consequence is about the (co)NP-hardness of computing $\oedn(G)$.
To our knowledge, there is no known hardness result about the complexity of computing $\edn(G)$. However, given a graph $G$ and a set $S$, deciding
whether $S$ is an eternal dominating set of $G$ is a $\Pi_2^P$-hard problem \cite{klostermeyer2007complexity}.

We will first prove that deciding, given a graph $G$ and an integer $k$, whether $\edn(G) \leq k$ is coNP-hard.
We use a reformulation of a theorem which says that $\alpha(G)$ is hard to approximate with a polynomial ratio.

\begin{theorem}\cite{zuckerman2006linear}\label{zucker}
Let $\epsilon > 0$ and $\Gamma$ be a problem with a graph $G$ and an integer $k > 0$ as input and such that:
\begin{itemize}
\item every instance $(G, k)$ with $\alpha(G) < k$ is negative;
\item every instance $(G, k)$ with $\alpha(G) \geq k^{1/\epsilon}$ is positive.
\end{itemize}

Then, $\Gamma$ is NP-hard.
\end{theorem}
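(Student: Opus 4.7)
The plan is to reduce from the gap version of \textsc{Max Independent Set} that Zuckerman's inapproximability result provides. Concretely, Zuckerman's bound $n^{1-\delta}$-inapproximability of $\alpha$ can be restated in gap form: for every $\delta > 0$, it is NP-hard to decide, given an $n$-vertex graph $G$ promised to satisfy either $\alpha(G) \geq n^{1-\delta}$ or $\alpha(G) < n^{\delta}$, which of the two holds. I would use a hypothetical polynomial-time algorithm for $\Gamma$ to solve this gap problem, obtaining the desired NP-hardness.

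The idea is to feed $(G, k)$ to $\Gamma$ for a carefully chosen $k$. First, I would fix a real $\delta \in (0, \epsilon/(1+\epsilon))$; this choice is equivalent to $\delta/\epsilon + \delta < 1$, and hence $(n^\delta + 1)^{1/\epsilon} \leq n^{1-\delta}$ for all sufficiently large $n$. Given an $n$-vertex gap instance $G$, I would set $k := \lceil n^\delta \rceil$ and run $\Gamma$ on $(G, k)$. In the NO case of the gap, $\alpha(G) < n^\delta \leq k$, so the first assumption on $\Gamma$ forces the answer to be negative. In the YES case of the gap, by the choice of $\delta$,
\[
k^{1/\epsilon} \leq (n^\delta + 1)^{1/\epsilon} \leq n^{1-\delta} \leq \alpha(G),
\]
so the second assumption forces the answer to be positive. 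Hence $\Gamma$ distinguishes the two branches of the gap, which would contradict $\mathrm{P} \neq \mathrm{NP}$. The finitely many small values of $n$ for which the inequality $(n^\delta + 1)^{1/\epsilon} \leq n^{1-\delta}$ fails can be absorbed in a constant-size precomputed table.

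The main obstacle is the choice of the integer $k$: it must be large enough that $\alpha(G) < k$ whenever the NO branch holds, and small enough that $k^{1/\epsilon} \leq \alpha(G)$ whenever the YES branch holds. This forces $n^\delta < k \leq n^{\epsilon(1-\delta)}$, which is only compatible when $\delta < \epsilon/(1+\epsilon)$, motivating the choice above. Once this parameter balancing is fixed, the reduction is a direct invocation of Zuckerman's gap hardness, and the rest of the argument is routine.
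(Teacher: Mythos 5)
Your proof is correct: the parameter choice $\delta<\epsilon/(1+\epsilon)$ does exactly what is needed, since it makes $k=\lceil n^{\delta}\rceil$ exceed $\alpha(G)$ on the NO branch while $k^{1/\epsilon}\leq(n^{\delta}+1)^{1/\epsilon}\leq n^{1-\delta}\leq\alpha(G)$ on the YES branch for all large $n$, yielding a Karp reduction from the gap problem to $\Gamma$. The paper gives no proof of this statement (it is cited from Zuckerman and described only as a ``reformulation'' of polynomial-ratio inapproximability of $\alpha$), and your derivation is precisely the intended one.
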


\begin{theorem}\label{edn-hard}
Given a (non directed) graph $G$ and an integer $k$, deciding whether $\edn(G) \leq k$ is coNP-hard.
\end{theorem}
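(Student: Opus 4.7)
The plan is to apply Theorem~\ref{zucker} with $\epsilon = 1/2$ to a suitably chosen auxiliary problem, and then transfer the resulting NP-hardness to the complement of our target problem. Concretely, I will introduce the decision problem $\Gamma$: given a graph $G$ and an integer $k > 0$, decide whether $\edn(G) \geq k^2$. The goal is to verify that $\Gamma$ satisfies the two promise conditions of Theorem~\ref{zucker}, so that $\Gamma$ is NP-hard, and then use a trivial polynomial reduction from $\Gamma$ to the complement of ``is $\edn(G) \leq k'$?''.

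The two hypotheses will be checked using the sandwich $\alpha(G) \leq \edn(G) \leq \binom{\alpha(G)+1}{2}$ given by Theorem~\ref{eternal-seq-ineqs}. For the negative side, if $\alpha(G) < k$ then $\edn(G) \leq \binom{k}{2} = k(k-1)/2 < k^2$, so $(G, k)$ is a negative instance of $\Gamma$. For the positive side, if $\alpha(G) \geq k^{1/\epsilon} = k^2$ then $\edn(G) \geq \alpha(G) \geq k^2$, so $(G, k)$ is a positive instance. Theorem~\ref{zucker} then yields that $\Gamma$ is NP-hard.

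To conclude, I will map each instance $(G, k)$ of $\Gamma$ to the instance $(G, k^2 - 1)$ of the target problem. Since $\edn(G) \geq k^2$ if and only if $\edn(G) > k^2 - 1$, which is the same as $(G, k^2 - 1)$ being a no-instance of ``$\edn(G) \leq k'$?'', this polynomial-time reduction transfers the NP-hardness of $\Gamma$ to coNP-hardness of the target problem. The only point that needs a moment's thought is the strict inequality $\binom{k}{2} < k^2$, which holds trivially for every $k \geq 1$, so the Zuckerman gap is indeed respected; I do not foresee any genuine obstacle, as the $\alpha$-sandwich of Theorem~\ref{eternal-seq-ineqs} already puts $\edn$ in the exact regime Theorem~\ref{zucker} requires.
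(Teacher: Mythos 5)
Your proposal is correct and follows essentially the same route as the paper: both apply Theorem~\ref{zucker} with $\epsilon = 1/2$ to an auxiliary threshold problem, verify the two promise conditions via the sandwich $\alpha(G) \leq \edn(G) \leq \binom{\alpha(G)+1}{2}$ of Theorem~\ref{eternal-seq-ineqs}, and finish with a trivial reduction to the complement of the target problem. The only difference is the choice of threshold ($k^2$ rather than the paper's $\binom{k+1}{2}$), which is immaterial since any value strictly between $\binom{k}{2}$ and $k^2$ works.
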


\begin{proof}
We use Theorem \ref{zucker} and choose $\epsilon = \frac{1}{2}$.
We consider the problem $\Gamma$: given $G, k$, do we have $\edn(G) > \binom{k+1}{2}$ ?

Clearly, there is a polynomial reduction from the complement of $\Gamma$ to the stated problem.
Thus, it suffices to prove that $\Gamma$ satisfies the conditions of Theorem \ref{zucker}. 
If $\alpha(G) < k$, then $\edn(G) \leq \binom{\alpha(G)+1}{2} < \binom{k+1}{2}$ (by Theorem \ref{eternal-seq-ineqs}), and thus $(G, k)$ is a negative instance of $\Gamma$.
If $\alpha(G) \geq k^2$, then $\edn(G) \geq k^2 > \binom{k+1}{2}$ (by Theorem \ref{eternal-seq-ineqs}).
Thus $(G, k)$ is a positive instance of $\Gamma$.
\end{proof}

From Lemma \ref{GtoCG} and Theorem \ref{edn-hard}, we obtain:

\begin{corollary}\label{oedn-coNP}
Deciding whether $\oedn(G) \leq k$ is coNP-hard.
\end{corollary}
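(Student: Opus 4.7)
The plan is to give a polynomial-time many-one reduction from the coNP-hard problem of Theorem \ref{edn-hard} (deciding whether $\edn(G)\leq k$) to the problem of deciding whether $\oedn(G')\leq k'$. Given an instance $(G,k)$ of the former, with $G$ having $m$ edges, I would construct the instance $(C(G),\, k+m)$ of the latter, where $C(G)$ is the graph defined immediately before Lemma \ref{GtoCG}. The construction is clearly computable in polynomial time since $C(G)$ has $n+m$ vertices and $3m$ edges.

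Correctness is then immediate from Lemma \ref{GtoCG}. That lemma gives the identity
\[
\oedn(C(G)) \;=\; \edn(G) + m,
\]
so
\[
\edn(G) \leq k \quad \Longleftrightarrow \quad \oedn(C(G)) \leq k+m.
\]
Hence yes-instances map to yes-instances and no-instances to no-instances, and coNP-hardness transfers from the source problem to the target problem.

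There is essentially no obstacle here: all the substantive work (proving that the $\oedn$ of the oriented ``triangle gadget'' graph exactly inflates $\edn$ by $m$) has already been packaged into Lemma \ref{GtoCG}, and the hardness of the source problem is Theorem \ref{edn-hard}. The only thing to check is that one does not need to argue anything stronger (for instance, a parameterized or approximation-preserving reduction); since the statement is merely coNP-hardness, the plain many-one reduction above suffices.
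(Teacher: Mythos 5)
Your proposal is correct and is exactly the argument the paper intends: the paper derives the corollary directly from Lemma \ref{GtoCG} and Theorem \ref{edn-hard} without spelling out the reduction, and the map $(G,k)\mapsto (C(G),k+m)$ together with the identity $\oedn(C(G))=\edn(G)+m$ is the only content needed.
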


Since deciding whether $\alpha(G) \geq k$ is NP-hard, we also obtain:

\begin{corollary}
Deciding whether $\oalpha(G) \geq k$ is NP-hard.
\end{corollary}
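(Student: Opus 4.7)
The plan is to give a straightforward polynomial-time reduction from the classical \textsc{Independent Set} decision problem, which asks, given a graph $G$ and an integer $k'$, whether $\alpha(G) \geq k'$, and is well known to be NP-hard. The key ingredient is already available as the second bullet of Lemma \ref{GtoCG}, namely the identity $\oalpha(C(G)) = \alpha(G) + m$, where $m$ is the number of edges of $G$.

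Given an instance $(G, k')$ of \textsc{Independent Set}, I would construct the graph $C(G)$ in polynomial time (it has $n + m$ vertices, where $n = |V(G)|$ and $m = |E(G)|$) and output the instance $(C(G), k)$ with $k = k' + m$. By Lemma \ref{GtoCG}, we have the chain of equivalences
\[
\alpha(G) \geq k' \iff \alpha(G) + m \geq k' + m \iff \oalpha(C(G)) \geq k,
\]
so $(G, k')$ is a positive instance of \textsc{Independent Set} if and only if $(C(G), k)$ is a positive instance of the problem of deciding whether $\oalpha(H) \geq k$. Since the construction of $C(G)$ and the computation of $k$ are both polynomial in the size of $G$, this establishes the NP-hardness.

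There is essentially no obstacle left: all the real work has been pushed into Lemma \ref{GtoCG}, whose careful orientation-of-triangles argument already yields the exact equality $\oalpha(C(G)) = \alpha(G) + m$. The only thing to verify is that the reduction is computable in polynomial time, which is immediate from the definition of $C(G)$.
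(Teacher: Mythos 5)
Your reduction is correct and is exactly the argument the paper intends: the corollary is stated right after Lemma \ref{GtoCG}, and the identity $\oalpha(C(G)) = \alpha(G) + m$ together with the NP-hardness of \textsc{Independent Set} gives the result via the shift $k = k' + m$. Nothing is missing.
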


We believe that these lower bounds are loose and these two problems are $\Pi_2^P$-hard.

Klostermeyer and MacGillivray \cite{kloster2009} proved that there can be an arbitrary gap between $\alpha$ and $\edn$.
As consequence of Lemma \ref{GtoCG}, we show the same result between $\oalpha$ and $\oedn$.

\begin{corollary}
For every integer $k > 0$, there exists a graph $G$ such that $\oedn(G) \geq \oalpha(G) + k$.
\end{corollary}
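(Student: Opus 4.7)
The plan is to combine the Klostermeyer--MacGillivray gap theorem with Lemma \ref{GtoCG} in the most direct way possible. By that theorem, for every integer $k > 0$ there exists a graph $G$ (with at least one edge) such that $\edn(G) \geq \alpha(G) + k$. I would take such a graph and then pass to $C(G)$, the subdivision-style construction used in Lemma \ref{GtoCG}.

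By the two equalities of Lemma \ref{GtoCG}, with $m$ the number of edges of $G$, we have $\oedn(C(G)) = \edn(G) + m$ and $\oalpha(C(G)) = \alpha(G) + m$. Subtracting, the $m$ cancels and we get
\[
\oedn(C(G)) - \oalpha(C(G)) \;=\; \edn(G) - \alpha(G) \;\geq\; k,
\]
which is exactly the desired inequality with the graph $C(G)$ in the role of $G$.

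The only substantive point is that Lemma \ref{GtoCG} preserves the gap exactly, shifting both parameters by the same additive term $m$. There is no real obstacle: the lemma has already done the work of relating the oriented parameters on $C(G)$ to the undirected parameters on $G$, and the Klostermeyer--MacGillivray construction supplies an unbounded gap between $\alpha$ and $\edn$ on the input side. One small thing I would check is that the graph produced by the cited gap result has at least one edge (so that Proposition~\ref{eternal_alpha_olpha} applies and the construction $C(G)$ is nontrivial), but this is automatic as soon as $k \geq 1$ since otherwise $\edn(G) = \alpha(G)$.
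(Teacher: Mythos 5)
Your proof is correct and follows exactly the route the paper intends: it presents this corollary as an immediate consequence of Lemma \ref{GtoCG} applied to a graph realizing the Klostermeyer--MacGillivray gap between $\alpha$ and $\edn$, with the additive term $m$ cancelling in the difference. Nothing further is needed.
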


\subsection{Results on some classes of graphs}

We are now interested in the value of $\oedn$ and  $\oednm$ for particular classes of graphs.

\subsubsection{Cycles and forests}
The case of cycle is quite straightforward for both parameters.

\begin{theorem}\label{edn_cycle}
$\oedn(C_n) = n-1$ and $\oednm(C_n) = \left \lceil \frac{n}{2} \right \rceil$ for every $n \geq 3$.
\label{cycles}
\end{theorem}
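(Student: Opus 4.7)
The plan is to reduce the optimization over all orientations of $C_n$ to the single directed cycle $\overrightarrow{C_n}$. Since the only cycle-as-subgraph of $C_n$ is $C_n$ itself, any orientation of $C_n$ other than the two directed $n$-cycles is acyclic, so by Corollary \ref{eternal_acyclic} it has $\edn = \ednm = n$. As both target values are strictly less than $n$ for $n \geq 3$, it suffices to establish $\edn(\overrightarrow{C_n}) = n-1$ and $\ednm(\overrightarrow{C_n}) = \lceil n/2 \rceil$.

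For $\edn(\overrightarrow{C_n}) = n-1$, the lower bound follows from Theorem \ref{eternal-seq-ineqs-digraph}: removing any single vertex from $\overrightarrow{C_n}$ leaves a directed path (acyclic), while $\overrightarrow{C_n}$ itself is cyclic, so $\alpha(\overrightarrow{C_n}) = n-1 \leq \edn(\overrightarrow{C_n})$. For the matching upper bound, I would maintain the invariant ``exactly one vertex is unguarded'': when the empty vertex $v_j$ is attacked, its in-neighbor $v_{j-1}$ is necessarily guarded, and moving that guard along the arc $(v_{j-1}, v_j)$ defends the attack and shifts the empty slot backward.

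For $\ednm(\overrightarrow{C_n}) = \lceil n/2 \rceil$, the lower bound again comes from Theorem \ref{eternal-seq-ineqs-digraph}: if $S$ is a dominating set of $\overrightarrow{C_n}$, then $V \setminus S$ contains no two cyclically consecutive vertices (the second would be undominated), so $V \setminus S$ is independent in $C_n$ and $|S| \geq n - \alpha(C_n) = \lceil n/2 \rceil$. For the matching upper bound, I would introduce the shifted configurations $S_k = \{v_{(k+2i) \bmod n} : 0 \leq i < \lceil n/2 \rceil\}$, each of size $\lceil n/2 \rceil$ and dominating, and argue that the multimove sending every guard forward along its outgoing arc transforms $S_k$ into $S_{k+1}$; an attack on $v_j \notin S_k$ is served by the guard arriving from $v_{j-1}$, which lies in $S_k$ by the dominating property.

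The main technical point is the odd-$n$ case of the $\ednm$ upper bound, where $|S_k| = (n+1)/2$ forces $S_k$ to contain a pair of cyclically adjacent guards. I would need to verify carefully that the simultaneous forward shift is still a legitimate multimove (it is a bijection $S_k \to S_{k+1}$, each guard follows a genuine arc of $\overrightarrow{C_n}$) and that $S_{k+1}$ is again a dominating set containing the attacked vertex. Once this is checked for one representative $k$, the rest of the argument is a direct unwinding of definitions.
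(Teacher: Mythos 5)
Your proposal is correct and follows essentially the same route as the paper: discard all acyclic orientations via Corollary \ref{eternal_acyclic}, get the lower bounds from $\alpha(\overrightarrow{C_n})=n-1$ and the domination count $\gamma(\overrightarrow{C_n})=\lceil n/2\rceil$, and realize the upper bounds by moving the in-neighbor's guard (for $\edn$) and by the simultaneous forward rotation of an every-other-vertex configuration (for $\ednm$). The odd-case verification you flag goes through exactly as you describe and matches the paper's "eventually two successive vertices if $n$ is odd" configuration.
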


\begin{proof}
By Corollary \ref{eternal_acyclic}, for any acyclic orientation $H$ of $C_n$, we have $\edn(H) = \ednm(H) = n$.
Now consider the cyclic orientation $H$ of $C_n$.

Since $\alpha(H)=n-1$, we have $\oedn(C_n) \geq n-1$.

To see that $\edn(H)\leq n-1$, consider the following strategy with $n-1$ guards: every time a vertex is attacked, the guard on its unique incoming neighbor moves to it. Since only the attacked vertex is unoccupied, we know that the neighbor is occupied so that this defense is always possible, and leads to the exact same configuration that we were in.

Suppose that $\ednm(H) < \ffceil{\frac{n}{2}}$.
 Let $D \subseteq V(H)$ be a dominating set of size $\ednm(H)$.
There exist two vertices $u,v \in V(H) \setminus D_i $ such that $(u,v) \in E(H)$.
Since $u$ is the only vertex which dominates $v$ in $H$ and $u \notin D_i$, no vertex in $D$ dominates $v$. Thus, $D$ is not a dominating set of $H$.
Consequently, $\oednm(C_n) \geq \left \lceil \frac{n}{2} \right \rceil$.

To see that $\ednm(H) \leq \left \lceil \frac{n}{2} \right \rceil$, consider the following strategy with $\lceil \frac{n}{2} \rceil$ guards: place one guard every two vertices on $C_n$, with eventually two successive vertices if $n$ is odd.
Then, every time a vertex is attacked, move every guard to the unique outgoing vertex of their current vertex. Since the attacked vertex is unoccupied, we know that its incoming neighbor is occupied so that this defense is always possible, and leads to the exact same configuration that we were in.
\end{proof}

We now consider forests. Since they are acyclic for any orientation, we obtain the following result:

\begin{theorem}
Let $G$ be a graph with order $n$. Then, $\oedn(G) = n$ iff $\oednm(G) = n$ iff $G$ is a forest.
\end{theorem}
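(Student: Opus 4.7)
The plan is to prove the three-way equivalence via the cycle $G$ is a forest $\Rightarrow \oednm(G) = n \Rightarrow \oedn(G) = n \Rightarrow G$ is a forest. The first two implications are short, while the last (in contrapositive form) carries the real content.

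If $G$ is a forest, then every orientation $H$ of $G$ is acyclic (a directed cycle in $H$ would project to a cycle in $G$), so Corollary \ref{eternal_acyclic} gives $\ednm(H) = n$ for each orientation, whence $\oednm(G) = n$. For the middle implication, I would combine the trivial bound $\oedn(G) \leq n$ (take $V$ as the eternal dominating set in any orientation) with $\oednm(G) \leq \oedn(G)$; the latter follows because $\ednm(H) \leq \edn(H)$ for every digraph $H$ by Theorem \ref{eternal-seq-ineqs-digraph}, applied to an orientation realizing $\oedn(G)$.

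The substantive step is the contrapositive of the last implication: if $G$ is not a forest, then $\oedn(G) < n$. Since any cycle of $G$ lies entirely inside a single 2-edge-connected component (a cycle is itself 2-edge-connected), at least one component $S_j$ satisfies $|S_j| \geq 3$. I would then establish the sub-claim that every 2-edge-connected graph $G'$ on $k \geq 3$ vertices satisfies $\oedn(G') \leq k-1$. Combined with the trivial bound $\oedn(G[S_i]) \leq |S_i|$ for the remaining components, Corollary \ref{eternal_2_vertex_compoenents} then yields
\[
\oedn(G) \;=\; \sum_{i=1}^l \oedn(G[S_i]) \;\leq\; (|S_j|-1) + \sum_{i \neq j} |S_i| \;=\; n-1,
\]
and $\oednm(G) \leq \oedn(G) < n$ follows automatically.

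To prove the sub-claim, I would invoke Robbins' theorem (Theorem \ref{robbins_theorem}) to obtain a strongly connected orientation $H'$ of $G'$, then fix any vertex $v_0$ and start with guards on $V(G') \setminus \{v_0\}$. The invariant I would maintain is that exactly one vertex is empty at all times. When the unique empty vertex $e$ is attacked, strong connectivity guarantees that $e$ has an in-neighbor $u$, which must be occupied (since $e$ is the sole empty spot), so moving the guard along $(u,e)$ defends the attack and leaves $u$ as the new empty vertex. Hence $k-1$ guards suffice, giving $\oedn(G') \leq k-1$. This sub-claim is where I expect the main difficulty to lie; the rest of the argument is essentially bookkeeping given Corollary \ref{eternal_2_vertex_compoenents}.
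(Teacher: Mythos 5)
Your proof is correct, and all three pieces check out: the forest direction via Corollary \ref{eternal_acyclic}, the middle implication via $\oednm \leq \oedn \leq n$ (using Theorem \ref{eternal-seq-ineqs-digraph}), and the sub-claim that a strongly connected digraph on $k \geq 2$ vertices satisfies $\edn \leq k-1$ (the unique unoccupied vertex always has an occupied in-neighbor). However, your route for the substantive direction differs from the paper's. The paper does not decompose into 2-edge-connected components at all: it simply picks one cycle $C$ of length $k$ in $G$, orients its edges as a directed cycle, defends $C$ with $k-1$ guards by the cycle strategy of Theorem \ref{edn_cycle}, and keeps a stationary guard on every vertex of $G - C$, giving $\edn(H) \leq n-1$ directly. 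Your version replaces this with Corollary \ref{eternal_2_vertex_compoenents} plus Robbins' theorem plus the general "strongly connected $\Rightarrow \edn \leq k-1$" lemma. What the paper's argument buys is brevity and elementarity (no appeal to Robbins or to the component decomposition); what yours buys is a reusable general lemma and a cleaner justification of why the two defenses can be run independently --- the paper's one-line "protect $C$ with $k-1$ guards and $G-C$ with $n-k$ guards" implicitly relies on the stationary guards never being needed on $C$, which your decomposition handles for free. Both arguments are valid; the underlying savings mechanism (one guard can be spared on a cyclically reachable piece) is the same.
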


\begin{proof}
If $G$ is a forest, then every orientation of $G$ is acyclic and thus $\oedn(G) = \oednm(G) = n$ by Corollary \ref{eternal_acyclic}.
If $G$ is not a forest, then $G$ admits a cycle $C$ of $k$ vertices.
Consider an orientation $H$ of $G$ where the edges $C$ form an oriented cycle.
One can protect $C$ with at most $k-1$ guards and $G-C$ can be protected by at most $n-k$ guards.
Thus, $\ednm(H) \leq \edn(H) \leq n-1$.
\end{proof}

\subsubsection{Complete graphs and graphs of oriented m-eternal number 2}
We will now characterize the graphs $G$ with $\oednm(G) = 2$.
Notice that only graphs with one vertex satisfy $\oednm(G) = 1$.

\begin{lemma}Let $G = (V, E)$ be a graph of order $n \geq 2$. Then, $\oednm(G)\geq 2$.
\end{lemma}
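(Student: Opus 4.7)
The plan is to argue by contradiction, exploiting the fact that an orientation assigns exactly one direction to each edge, and hence admits no pair of antiparallel arcs $(u,v),(v,u)$. Assume for contradiction that $\oednm(G) \leq 1$. Then there exists an orientation $H$ of $G$ and a singleton m-eternal dominating set $\{u\} \in MEDS(H)$.

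Since $n \geq 2$, we can pick any vertex $v \in V \setminus \{u\}$ and have the attacker attack $v$. By definition of $MEDS(H)$, the defender must respond with a multimove $f$ with $v \in f(\{u\})$ and $f(\{u\}) \in MEDS(H)$, which forces $f(u) = v$ and therefore $(u,v) \in E(H)$. The new configuration is $\{v\}$, itself an m-eternal dominating set of $H$. Now the attacker attacks $u$, and the same reasoning forces $(v,u) \in E(H)$.

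This is the contradiction we seek: an orientation of $G$ has exactly one arc per edge of $G$, so $(u,v)$ and $(v,u)$ cannot coexist in $E(H)$. Hence $\oednm(G) \geq 2$, as claimed.

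There is essentially no obstacle here; the only subtlety is keeping straight the distinction between a digraph (which may carry both $(u,v)$ and $(v,u)$, as noted in the Introduction) and an orientation of an undirected graph (which, by definition, may not). The lemma serves as the trivial lower-bound half of the forthcoming characterization of graphs with $\oednm(G) = 2$.
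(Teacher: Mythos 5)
Your proof is correct and follows essentially the same route as the paper's: attack a second vertex $v$ to force the arc $(u,v)$, then attack $u$ again to force the arc $(v,u)$, contradicting the fact that an orientation carries only one arc per edge. The only cosmetic difference is that the paper phrases the final step as "the defender cannot answer" rather than deriving two antiparallel arcs, but the underlying argument is identical.
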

\begin{proof}
Let $H = (V, E')$ be an orientation of $G$, and suppose that $\ednm(H)=1$. Let $u,v \in V$. Consider that the guard is on the vertex $u$ and the attack on the vertex $v$.
Thus, necessarily, $(u, v) \in E'$.
If the next attack is on $u$, the defender cannot answer since there is no edge $(v, u)$ in $E'$. We obtain a contradiction.
\end{proof}

We now prove a simple but essential lemma.

\begin{lemma}\label{nonedgelemma}
Let $G = (V, E)$ be a digraph of order $n \geq 3$ such that $\ednm(G)=2$. Let $u,v \in V$. If neither $(u,v)$ nor $(v,u)$ are edges of $G$, then $\{u,v\}$ is a dominating set of $G$.
Furthermore, the attacker can force the defender to put its two guards on $\{u, v\}$.
\end{lemma}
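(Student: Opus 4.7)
My plan is to prove the ``furthermore'' assertion first; the assertion that $\{u,v\}$ is a dominating set then drops out immediately, because any configuration that the attacker forces must belong to $MEDS(G)$, and every element of $MEDS(G)$ is in particular a dominating set (to defend an attack on any vertex $r \notin S$, some vertex of $S$ must dominate $r$).

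To establish the forcing claim, I would analyze a short attack sequence of length at most three. Starting from an initial $D_0 \in MEDS(G)$ and assuming without loss of generality that $u \notin D_0$ (otherwise begin with an attack on some vertex other than $u$), the attacker attacks $u$. The defender's response is some $D_1 = \{u,x\}$; if $x = v$ we are done, so suppose otherwise and let the attacker then attack $v$. Because the arc $(u,v)$ is absent, only the guard at $x$ can reach $v$, traversing an arc $(x,v)$, while the guard originally at $u$ either stays or moves along an out-arc of $u$. Thus $D_2 = \{y,v\}$ with $y \in N^+[u]$.

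The heart of the argument is the claim $y = u$, which I would obtain by ruling out the alternative $y \in N^+(u) \setminus \{u\}$ through a third (hypothetical) attack on $u$. In that case the arc $(u,y)$ exists; the attacker attacks $u$, and neither guard of $D_2$ can reach it: the guard at $v$ lacks the arc $(v,u)$, and the guard at $y$ would require the arc $(y,u)$, which is excluded in the oriented setting of this subsection where at most one of $(u,y)$ and $(y,u)$ lies in $E$. This contradicts $D_2 \in MEDS(G)$, forcing $y = u$ and hence $D_2 = \{u,v\}$, as required. The main obstacle is precisely this last step, which rests on the absence of opposite arcs between a pair of vertices; without that constraint, the defender could exploit a back-arc $(y,u)$ to shuttle a guard away from $u$ and sidestep the configuration $\{u,v\}$, so the argument genuinely lives in the oriented-graph setting of this subsection.
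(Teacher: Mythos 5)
Your proof is correct and follows essentially the same route as the paper's: attack $u$ to force a guard onto it, then attack $v$, note that only the other guard can move to $v$, and rule out the $u$-guard drifting to an out-neighbour $y$ because then neither $y$ nor $v$ would dominate $u$ (the paper states this as the resulting pair failing to be a dominating set, you as an undefendable third attack on $u$ --- the same fact). You are also right that excluding the arc $(y,u)$ tacitly uses that $G$ has no pair of opposite arcs; the paper's proof relies on this just as silently, and the lemma is indeed only ever invoked for orientations.
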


\begin{proof}
Let $D$ be an eternal dominating set of $G$ with $|D| = 2$.
Consider an attack on $u$ and $D'$ the answer of the defender.
We have $u \in D'$. Now, consider an attack on $v$.
Since $(u,v) \notin E$, the guard on $u$ cannot move to $v$, so that the second guard must do it. If the defender moves $u$ to a neighbor $u^+$,
then $\{v, u^+\}$ is not a dominating set since none of $u^+$ and $v$ dominates $u$.
Therefore the answer of the defender to the attack on $u$ is necessarily $\{u, v\}$. Consequently $\{u, v\}$ is a (eternal) dominating set.
\end{proof}

We can now characterize the graphs with $\oednm(G) = 2$.

\begin{theorem}\label{oednequals2}
Let $G$ be a graph of order $n \geq 3$. Then, $\oednm(G)=2$ iff either:
\begin{itemize}
\item $n=2k$ and $G$ is a complete graph from which at most $k$ disjoint edges are removed
\item $n = 2k+1$ and $G$ is a complete graph from which at most $k-1$ disjoint edges  are removed.
\end{itemize}
\end{theorem}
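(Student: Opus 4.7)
The plan is to prove both directions of the equivalence. For the necessary direction, I assume $\oednm(G) = 2$ and fix an orientation $H$ of $G$ with $\ednm(H) = 2$. Since $H$ is an orientation, a pair $\{u,v\}$ is a non-edge of $G$ exactly when neither $(u,v)$ nor $(v,u)$ lies in $E(H)$, so Lemma \ref{nonedgelemma} applies to every non-edge of $G$. First I show that $\bar G$ has maximum degree $1$: otherwise some $w$ has two non-neighbors $u,v$ in $G$, and applying Lemma \ref{nonedgelemma} to the non-edge $\{u,w\}$ forces $v$ to be dominated by $u$ or $w$; since $\{v,w\} \notin E(G)$, $v$ must be an out-neighbor of $u$ in $H$, so $uv \in E(G)$, say $(u,v) \in E(H)$. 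Forcing the guards to $\{u,w\}$ and attacking $v$, only the guard on $u$ can reach $v$, yielding $\{v,w\}$; attacking $u$ afterwards leaves neither $v$ nor $w$ able to reach $u$, contradicting $\ednm(H)=2$. Hence $\bar G$ is a matching of some size $s$.

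For the size bound, when $n = 2k$ we automatically have $s \leq k$. When $n = 2k+1$, suppose for contradiction $s = k$, leaving one unmatched vertex $z$ and $k$ pairs $\{u_i,v_i\}$. Since $\{u_i,v_i\}$ must dominate $z$, after relabeling within each pair I may assume $(u_i,z) \in E(H)$ for all $i$. A case analysis of the defender's multimove responses to an attack on $z$ from $\{u_i,v_i\}$ eliminates every option except the one in which $u_i$ moves to $z$ and $v_i$ moves to some $y_i \in N^+(v_i)$; furthermore, because each $u_l$ satisfies $(u_l,z) \in E(H)$ and not the reverse, $z$ cannot dominate any $u_l$, so $\{z,y_i\}$ is dominating only if $y_i$ has an out-arc to every $u_l$ distinct from itself. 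Let $D_u$ denote the set of such vertices; no $v_j$ lies in $D_u$ (that would require the non-edge $v_j u_j$ to be an arc), so $D_u \subseteq \{u_1,\dots,u_k\}$ and contains at most one vertex $u^*$. Applying the above argument to the pair $\{u^*,v^*\}$ requires $y^* \in D_u \cap N^+(v^*) \setminus \{u^*\}$, which is empty since $u^* \notin N^+(v^*)$; contradiction.

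For the sufficient direction, I exhibit an orientation $H$ with $\ednm(H) = 2$. For $n = 2k$, I partition $V$ into $k$ pairs $P_1,\ldots,P_k$ extending the matching $\bar G$, orient arbitrarily the intra-pair edges of pairs not in $\bar G$, and for each $i<j$ orient the four cross-edges between $P_i = \{u_i,v_i\}$ and $P_j = \{u_j,v_j\}$ as $(u_i,u_j),\,(v_i,v_j),\,(u_j,v_i),\,(v_j,u_i)$. The defender keeps both guards on some pair $P_i$; if a vertex of $P_j$ is attacked, she applies the parallel multimove $u_i \mapsto u_j,\, v_i \mapsto v_j$ when $i<j$ or the diagonal multimove $u_i \mapsto v_j,\, v_i \mapsto u_j$ when $i>j$. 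A routine check shows each $P_i$ is a dominating set of $H$ and that each multimove is valid. For $n = 2k+1$ with $|\bar G| \leq k-1$, at least three vertices are unmatched, so I can form $k$ pairs (including at least one whose two vertices are adjacent in $G$) together with a singleton $z$; I orient the cross-edges as above and the edges incident to $z$ so that each attack on $z$ from $\{u_i,v_i\}$ admits a multimove into a mixed configuration of the form $\{z,x\}$ with $x$ taken in the $G$-adjacent pair, chosen so that $\{z,x\}$ dominates $H$. The collection of safe configurations is then enlarged to include such mixed sets, and a direct verification shows the defender can transition between pair and mixed configurations in response to every attack.

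The main obstacle is the odd-case necessity bound: one must simultaneously track that each $\{u_i,v_i\}$ dominates $z$, that each $\{z,y_i\}$ dominates every $u_l$, and that the matching non-edges $v_j u_j$ cannot appear as arcs; introducing the auxiliary set $D_u$ is the cleanest way to combine these constraints and obtain a contradiction. The sufficiency construction in the odd case is correspondingly delicate, since the defender must leave pair configurations to answer attacks on $z$, and this is only possible because the slack $|\bar G| \leq k-1$ provides at least one $G$-adjacent pair to route through.
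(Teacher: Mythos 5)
Your overall architecture matches the paper's (reduce to ``the complement of $G$ is a matching'' via Lemma \ref{nonedgelemma}, rule out the extremal odd case, then build explicit orientations), but you diverge in the details in ways worth noting. For the odd-case necessity, the paper proceeds through five structural claims (each $v \neq z$ dominates exactly one of $x_i,y_i$; exactly one of $x_i,y_i$ dominates $z$; the partner of $z$ in a dominating pair is unique; and so on), whereas your set $D_u$ of vertices out-dominating all the $u_l$ collapses this into a single counting step; both are valid, and yours is shorter. For the even-case sufficiency, the paper uses a circulant orientation with the invariant that the two guards sit at cyclic distance $k$, while your pair-based orientation (parallel arcs forward, diagonal arcs backward) makes each pair $P_i$ itself a safe configuration; your verification there is complete and arguably simpler. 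One small point in the maximum-degree argument: after forcing the guards onto $\{u,w\}$ and attacking $v$, you assert the configuration becomes $\{v,w\}$, but the guard on $w$ could a priori move to some $x\in N^+(w)$; you need the one-line observation that $\{v,x\}$ with $x \neq w$ then fails to dominate $w$ (exactly as in the proof of Lemma \ref{nonedgelemma}), or more simply you could apply that lemma directly to the second non-edge $\{v,w\}$, which is what the paper does.

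The one genuine gap is the odd-case sufficiency. You describe the intended shape of the construction (pairs plus a singleton $z$, mixed configurations $\{z,x\}$ routed through a $G$-adjacent pair) but never specify the orientation of the roughly $2k$ edges incident to $z$, nor check that every attack from a mixed configuration can be answered; ``a direct verification shows'' is doing all the work. The plan is completable: taking $P_1=\{u_1,v_1\}$ with $u_1v_1\in E(G)$, orient the triangle on $z,u_1,v_1$ cyclically as $(z,u_1)$, $(u_1,v_1)$, $(v_1,z)$, and for $j\ge 2$ set $(u_j,z)$ and $(z,v_j)$; the safe configurations are then the $P_i$ together with the single mixed set $\{z,u_1\}$, and each of the finitely many attack types (on $z$ from $P_1$ or from $P_j$; on $v_1$, $u_j$ or $v_j$ from $\{z,u_1\}$) admits an explicit multimove landing in a safe configuration. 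As submitted, though, this case is a sketch rather than a proof, and it is precisely the case on which the paper expends the most care (its analysis via the three universal vertices of $G$).
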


\begin{proof}
We first prove that if there exists three distinct vertices $u,v,w \in V$ such that $uv,uw \notin E$, then $\oednm(G)>2$.
Suppose that $\oednm(G)=2$, and let $H$ be an orientation of $G$ such that $\ednm(H)=2$. Then, by Lemma \ref{nonedgelemma}, $\{u,v\}$ is a dominating set of $H$. Since $uw \notin E$, $(v,w) \in E(H)$. But, by Lemma \ref{nonedgelemma}, $\{u,w\}$ also is a dominating set of $H$, so that $(w,v) \in E(H)$. We thus obtain a contradiction. By absurd, $\oednm(G)>2$.
Consequently, for $n=2k$ or $n=2k+1$, if $\oednm(G)=2$ then $G$ is a complete graph from which are removed at most $k$ disjoint edges.

Now take $n=2k+1$, and let $G = (V, E)$ be the complete graph from which exactly $k$ disjoint edges are removed. Let $x_1,\ldots,x_k,y_1,\ldots,y_k,z$ be the vertices in $V$ and assume that the non edges of $G$ are the pairs $x_i y_i$ for $i \in [k]$.
We show that $\oednm(G)>2$. Suppose that $\oednm(G)=2$, and let $H$ be an orientation of $G$ such that $\ednm(H)=2$.

\begin{claim}\label{lemmaclaim}
For all $i \in [k]$, $\{x_i,y_i\}$ dominates $H$.
\end{claim}
Indeed, this is a direct application of Lemma \ref{nonedgelemma}.

\begin{claim}\label{xj}
For every vertex  $v\in V \setminus \{z\}$ and for every index $i \in [k]$, $v$ dominates $x_i$ or $y_i$, but not both.
\end{claim}
Indeed, we can assume, without loss of generality, that $v=x_j$. Now, if $x_j$ dominates $x_i$ and $y_i$, $\{x_i,y_i\}$ does not dominate $x_j$, which contradicts Claim \ref{lemmaclaim}. If $v$ dominates none, by Claim \ref{lemmaclaim}, $y_j$ dominates both, so that $\{x_i,y_i\}$ does not dominate $y_j$, which contradicts Claim \ref{lemmaclaim}.

\begin{claim}\label{z}
for all $i \in [k]$, $x_i$ or $y_i$ dominates $z$, but not both.
\end{claim}
Indeed, by Claim \ref{lemmaclaim}, $x_i$ or $y_i$ dominates $z$. Since the attacker can always attack $z$, there exists $v\in V \setminus \{z\}$ such that $\{z,v\}$ dominates $H$. Now, if both $x_i$ and $y_i$ dominate $z$, $v$ dominates $x_i$ and $y_i$, which contradicts Claim \ref{xj}.

\begin{claim}\label{vdominatesz}
If $\{z,v\}$ dominates $H$, $v$ dominates $z$.
\end{claim}
Indeed, we can assume, without loss of generality, that $v=x_i$. If $z$ dominates $x_i$, then, by Claim \ref{z}, $z$ does not dominate $y_i$, but $x_i$ cannot dominate $y_i$ either, so that we obtain a contradiction.

\begin{claim}\label{visunique}
There exists a unique vertex $v$ such that $\{z,v\}$ dominates $H$.
\end{claim}
Indeed, suppose there exist $v_1,v_2 \in V \setminus \{z\}$ such that $\{z,v_1\}$ and $\{z,v_2\}$ dominate $H$. By Claim \ref{vdominatesz}, $v_1$ and $v_2$ dominate $z$. But, since $\{z,v_1\}$ dominates $v_2$, $v_1$ dominates $v_2$. Similarly, $v_2$ dominates $v_1$. We thus obtain a contradiction.

We can now assume without loss of generality that $\{z,x_i\}$ dominates $H$. By Claim \ref{vdominatesz}, $x_i$ dominates $z$, and by Claim \ref{z}, $z$ dominates $y_i$. By Lemma \ref{nonedgelemma}, the attacker can force the guards to be on $\{x_i,y_i\}$. Then, if he attacks $z$, only the guard on $x_i$ can go to $z$. The guard on $y_i$ cannot move to $x_i$, so that there exists a vertex $v$ in $V(G)\setminus \{z,x_i\}$ such that $\{z,v\}$ dominates $H$. This contradicts Claim \ref{visunique}. Thus, by absurd, $\oednm(G)>2$. Therefore, if $\oednm(G)=2$ then $n=2k$ and $G$ is the complete graph from which are removed at most $k$ disjoint edges, or $n = 2k+1$ and $G$ is the complete graph from which at most $k-1$ disjoint edges are removed.

We now prove that if $n=2k$ and $G$ is the complete graph from which at most $k$ disjoint edges are removed, then $\oednm(G)=2$. 
Without loss of generality, we assume that exactly $k$ disjoint edges are removed.
Let $V(G)=\{v_0,v_1,\ldots,v_{2k-1}\}$ and assume that the non edges of $G$
are the pairs $x_i x_{i+k}$ for $i \in [0,k-1]$.
We construct an orientation $H$ of $G$ in the following way: for all $i, j$,  $i>j$, $(v_i,v_j)\in E(H)$ iff $i-j<k$, and $(v_j,v_i)\in E(H)$ iff $i-j>k$. We will consider a strategy for the guards that preserves the following invariant: for each guard configuration $D$ in the strategy, if $D=\{v_i,v_j\}$ then $|i - j| = k$.
Notice that, by construction of $H$, every guard configuration that satisfies the invariant is a dominating set of $H$.
We start with $D_0=\{v_0,v_k\}$. Clearly, this configuration verifies the invariant.
Consider a guard configuration $D=\{v_i,v_j\}$ that verifies the invariant and let us prove that, for every attacked vertex $v_r$, the defender can defend the attack and obtain a configuration $D'$ that verifies the invariant.
Without loss of generality, we can assume that $j=i+k$.
If $i < r < j$, then we move a guard from $v_i$ to $v_r$ and the other guard from $v_j$ to $v_{r'}$
with $r' = r + k \mod 2k$.
Otherwise, we move a guard from $v_j$ to $v_r$ and the other guard from $v_i$ to $v_{r'}$.
In both cases, we obtain the configuration $\{ v_r, v_{r'} \}$ that verifies the invariant.

Assume now that $n=2k+1$ and consider $G$ to be the complete graph of order $n$ from which at most $k-1$ disjoint edges are removed.
Without loss of generality, we assume that exactly $k-1$ disjoint edges are removed.
Let us prove that $\oednm(G) = 2$.
$G$ has exactly $3$ universal vertices.
Let $z$ be one of them and let $v_0, \ldots v_{2k-1}$ be the vertices in $V \setminus \{z\}$.
Without loss of generality, we assume that $v_0$ and $v_k$ are the two other universal vertices and $v_i v_{i+k}$ is not an edge of $E$ for every $i \in [1,k-1]$.

We create an orientation $H$ of $G$ as follows.
We orientate the edges of $G[V \setminus \{ z \}]$ except $v_0 v_k$ identically to the orientation of $G$ in the even case.
Then, we orientate the edges incident to $z$ except for $z v_0$ and $zv_k$ such that $z$ has the same neighborhood as $v_0$.
Finally, we orientate the three remaining edges such that $(v_0, v_k), (v_k, z), (z, v_0) \in E(H)$.

We will give a strategy that preserves the following invariant: $D=\{v_i,v_j\}$ with $j - i = k$ or $D = \{ z, v_k \}$.
There are several cases:

case 1: if $D = \{ v_i, v_{i+k} \}$ and the attacker chooses a vertex $v_j$, then the defender plays as in the even case.

case 2: if $D = \{ v_i, v_{i+k} \}$ with $i \in [1, k-1]$ and the attacker plays in $z$,
then, the defender plays as in the even case but by replacing $v_0$ with $z$.

case 3: if $D = \{v_0, v_k \}$ and the attacker plays in $z$.
Then, the defender moves the guard on $v_0$ to $v_k$ and the guard on $v_k$ to $z$.

case 4: if $D = \{ z, v_k \}$ and the attacker plays in $v_j$ with $j > 0$, then the defender plays as in the even case but by replacing $v_0$ with $z$.

case 5: if $D = \{ z, v_k \}$ and the attacker plays in $v_0$, the defender moves the guard on $z$ to $v_0$. 
\end{proof}

The class of graphs $G$ with $\oednm(G) = 3$ seems hard to characterize.
We now consider complete graphs.
Surprisingly,
the exact value of  $\oedn$ for complete graphs seems hard to find.
However, we can obtain lower and upper bounds using a result
from Erdös and Moser concerning $\oalpha$.

\begin{theorem}\cite{erdos1964representation}
For every $n > 0$,
$\fffloor{\log_2 n} + 1 \leq \oalpha(K_n) \leq 2 \fffloor{\log_2 n} +1$
\end{theorem}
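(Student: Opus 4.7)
The plan is to reinterpret $\oalpha(K_n)$ in tournament language: an orientation of $K_n$ is a tournament $T$, and an induced acyclic sub-digraph of $T$ is precisely a \emph{transitive} subtournament (linearly ordered by the arcs). Hence $\oalpha(K_n)$ equals the minimum, over all tournaments on $n$ vertices, of the size of a largest transitive subtournament. The two inequalities are then a classical pair: a pigeonhole induction for the lower bound and a probabilistic argument of Erdős--Moser for the upper bound.

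For the lower bound, I would prove by induction on $n$ that every tournament on $n \geq 1$ vertices contains a transitive subtournament of size $\fffloor{\log_2 n}+1$. The base case $n=1$ is trivial. For the step, pick any vertex $v$; since its out-neighborhood and in-neighborhood partition $V\setminus\{v\}$, one of them, say the out-neighborhood, has cardinality at least $\fffloor{n/2}$. Applying the inductive hypothesis to the sub-tournament induced on that set yields a transitive subtournament of size at least $\fffloor{\log_2 \fffloor{n/2}}+1 = \fffloor{\log_2 n}$ (using the identity $\fffloor{\log_2 \fffloor{n/2}}=\fffloor{\log_2 n}-1$ for $n \geq 2$). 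Placing $v$ at the top of its linear order extends it to a transitive subtournament of size $\fffloor{\log_2 n}+1$.

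For the upper bound, I would use the probabilistic method. Orient each edge of $K_n$ independently and uniformly at random. For any fixed $k$-subset $S$, the induced sub-tournament is one of $2^{\binom{k}{2}}$ equiprobable tournaments, of which exactly $k!$ are transitive (one per linear order of $S$). By linearity of expectation, the expected number of transitive $k$-subtournaments is $\binom{n}{k}\,k!\,2^{-\binom{k}{2}}$. Setting $k = 2\fffloor{\log_2 n}+2$, I would show that this expectation is strictly less than $1$, so some orientation of $K_n$ has no transitive subtournament of size $k$, giving $\oalpha(K_n) \leq k-1 = 2\fffloor{\log_2 n}+1$.

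The main obstacle is the arithmetic in the upper bound. The crude estimate $\binom{n}{k}\,k! \leq n^k$ reduces the condition to $k > 2\log_2 n + 1$, which $k = 2\fffloor{\log_2 n}+2$ meets cleanly when $n$ is a power of $2$ but is borderline when $n$ is close to the next power of $2$. I would handle the near-boundary values by using the sharper falling-factorial estimate $\binom{n}{k}\,k! = n(n-1)\cdots(n-k+1)$ and noting that the correction factor $\prod_{i=0}^{k-1}(1-i/n)$ contributes a saving of order $e^{-k^2/(2n)}$, which for $k=\Theta(\log n)$ is enough to absorb the missing factor of $2$; alternatively, one can dispatch the finitely many small $n$ as explicit base cases. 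The lower bound, by contrast, is a clean pigeonhole-plus-induction argument with no such headaches.
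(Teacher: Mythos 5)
The paper does not actually prove this statement: it is imported verbatim from Erd\H{o}s--Moser with a citation and no argument, so your proposal has to stand on its own rather than be compared to an in-paper proof. Your translation into tournament language is right (an induced acyclic subtournament is exactly a transitive one), and your lower bound is correct and complete: one of $N^+(v)$, $N^-(v)$ has at least $\fffloor{n/2}$ vertices, the identity $\fffloor{\log_2 \fffloor{n/2}} = \fffloor{\log_2 n} - 1$ holds for $n \geq 2$, and appending $v$ at the appropriate end of the linear order finishes the induction.

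The upper bound, however, has a genuine gap which you have correctly located but not repaired. With $k = 2\fffloor{\log_2 n} + 2$ the first-moment condition $\binom{n}{k}\,k! < 2^{\binom{k}{2}}$ really does fail for $n$ in the upper part of each dyadic block, and the deficit is a multiplicative factor of order $n$, not of order $1$: for $n = 2^{m+1}-1$ one has $\binom{n}{k}k! \approx n^k \approx 2^{(m+1)(2m+2)} = 2^{\binom{k}{2}} \cdot 2^{m+1}$, so the expected number of transitive $k$-sets is about $2^{m+1} \approx n$ rather than less than $1$; already at $n=31$, $k=10$ one computes $31\cdot 30\cdots 22 \approx 1.61\times 10^{14} > 2^{45} \approx 3.52\times 10^{13}$. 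Your proposed rescues do not work. The falling-factorial correction $\prod_{i=0}^{k-1}(1-i/n) = e^{-\Theta(k^2/n)}$ is $1-o(1)$ when $k = \Theta(\log n)$, so it saves a $1+o(1)$ factor where a factor of order $n$ is needed; and the problematic values of $n$ are not finitely many --- they occupy roughly the upper half of every interval $[2^m, 2^{m+1})$ --- so they cannot be dispatched as base cases, nor does monotonicity in $n$ help, since the worst case $n = 2^{m+1}-1$ is the one you would reduce to. What the union bound does give cleanly is $\oalpha(K_n) \leq \fffloor{2\log_2 n} + 1$ (take $k = \fffloor{2\log_2 n}+2$, for which $\log_2 n < (k-1)/2$ and hence $n^k < 2^{\binom{k}{2}}$); this equals $2\fffloor{\log_2 n}+1$ only when the fractional part of $\log_2 n$ is below $1/2$, and is $2\fffloor{\log_2 n}+2$ otherwise. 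Closing that last unit for all $n$ requires an ingredient beyond the plain first moment, which is exactly what the citation to Erd\H{o}s--Moser is carrying; as written, your argument proves the theorem only with $2\fffloor{\log_2 n}+2$ on the right-hand side.
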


By combining this theorem with Theorem \ref{eternal-seq-ineqs-digraph}, we obtain:

\begin{corollary}\label{edn_bounds_on_clique}
For every integer $n > 0$,
$\lfloor \log_2 n \rfloor + 1 \leq \oedn(K_n) \leq \binom{2 \lfloor \log_2 n \rfloor +2}{2}.$
\end{corollary}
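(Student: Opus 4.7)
The plan is to derive both bounds by composing the Erdős--Moser estimates for $\oalpha(K_n)$ with the chain $\alpha(H) \leq \edn(H) \leq \binom{\alpha(H)+1}{2}$ from Theorem \ref{eternal-seq-ineqs-digraph}. The only subtle point is that the two directions require different witnessing orientations of $K_n$: for the lower bound I want an orientation that minimizes $\edn$, whereas for the upper bound I want one that minimizes $\alpha$.

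For the lower bound, I would pick an orientation $H$ of $K_n$ realizing the minimum, so that $\edn(H)=\oedn(K_n)$. Applying the inequality $\alpha(H) \leq \edn(H)$ from Theorem \ref{eternal-seq-ineqs-digraph}, and noting that $\oalpha(K_n) \leq \alpha(H)$ by definition of $\oalpha$, I obtain $\oalpha(K_n) \leq \oedn(K_n)$. The left Erdős--Moser bound $\lfloor \log_2 n \rfloor + 1 \leq \oalpha(K_n)$ then gives the desired $\lfloor \log_2 n \rfloor + 1 \leq \oedn(K_n)$.

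For the upper bound I would instead take an orientation $H'$ of $K_n$ with $\alpha(H')=\oalpha(K_n)$. Using the inequality $\edn(H') \leq \binom{\alpha(H')+1}{2}$, and the obvious $\oedn(K_n) \leq \edn(H')$, this gives $\oedn(K_n) \leq \binom{\oalpha(K_n)+1}{2}$. Plugging in the right Erdős--Moser bound $\oalpha(K_n) \leq 2\lfloor \log_2 n \rfloor + 1$ and using the monotonicity of $k \mapsto \binom{k+1}{2}$ on nonnegative integers yields $\oedn(K_n) \leq \binom{2\lfloor \log_2 n \rfloor + 2}{2}$.

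There is no real obstacle here: the result is a routine corollary once one realizes that different optimizing orientations should be used for the two sides. The only thing to state carefully in writing up is the distinction between ``an orientation minimizing $\edn$'' and ``an orientation minimizing $\alpha$,'' since it would be tempting, but incorrect, to try to use a single orientation to prove both inequalities simultaneously.
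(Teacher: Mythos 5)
Your proposal is correct and follows exactly the route the paper intends: the corollary is stated as an immediate combination of the Erdős--Moser bounds on $\oalpha(K_n)$ with the chain $\alpha(H) \leq \edn(H) \leq \binom{\alpha(H)+1}{2}$ of Theorem \ref{eternal-seq-ineqs-digraph}, and your careful handling of which optimizing orientation witnesses which inequality is the right way to fill in the details the paper leaves implicit.
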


\subsubsection{Complete bipartite graphs}
The case of complete bipartite graphs, on the other hand, has been fully covered for both parameters.

\begin{theorem}
$\oedn(K_{n,m}) = \max\{n, m\} + 1$ for every $n, m \geq 1$.
\end{theorem}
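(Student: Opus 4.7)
The plan is to prove the two inequalities separately. Assume without loss of generality that $n \geq m$, with the two sides written $A = \{a_1, \ldots, a_n\}$ and $B = \{b_1, \ldots, b_m\}$; the claim then becomes $\oedn(K_{n,m}) = n+1$.

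For the lower bound, I would show that every orientation $H$ of $K_{n,m}$ satisfies $\alpha(H) \geq n+1$. Indeed, for any $b \in B$, the induced subdigraph $H[A \cup \{b\}]$ is acyclic, since a directed cycle in a bipartite digraph must alternate sides and therefore contain at least two vertices on each side, whereas $A \cup \{b\}$ has only one vertex in $B$. Combining this with Theorem \ref{eternal-seq-ineqs-digraph} gives $\edn(H) \geq \alpha(H) \geq n+1$ for every $H$, hence $\oedn(K_{n,m}) \geq n+1$.

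For the matching upper bound I would exhibit an orientation $H$ together with a defending strategy using $n+1$ guards. Pick any surjection $\pi : A \to B$ (which exists because $n \geq m$) and orient each edge $\{a_i, b_j\}$ as $a_i \to b_j$ when $\pi(a_i) = b_j$, and as $b_j \to a_i$ otherwise. The defender starts with $D_0 = A \cup \{b_1\}$ and preserves throughout the game the invariant
\[
\text{for every } b \in B \setminus D, \quad \pi^{-1}(b) \subseteq D.
\]
When an unoccupied vertex $b_k$ is attacked, the invariant guarantees that some $a_i \in \pi^{-1}(b_k)$ lies in $D$, and the defender responds by moving $a_i \to b_k$. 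When an unoccupied vertex $a_i$ is attacked, the defender picks some $b \in (D \cap B) \setminus \pi(U \cap A)$, where $U = V \setminus D$, and moves $b \to a_i$; note that $b \neq \pi(a_i)$ because $\pi(a_i) \in \pi(U \cap A)$, so the arc $b \to a_i$ genuinely exists in $H$. Routine set computations show that the invariant is restored after each type of move.

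The main obstacle is verifying that $(D \cap B) \setminus \pi(U \cap A)$ is always non-empty, so that the defender can reply to every attack on an $A$-vertex. This follows from the elementary count
\[
|D \cap B| = (n+1) - |D \cap A|, \qquad |\pi(U \cap A)| \leq |U \cap A| = n - |D \cap A|,
\]
which yields $|(D \cap B) \setminus \pi(U \cap A)| \geq 1$. Once this counting step is in place, the strategy defends $H$ against any infinite attack sequence, giving $\edn(H) \leq n+1$ and completing the proof.
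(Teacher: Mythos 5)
Your proof is correct. The lower bound is the same argument as the paper's: the star $A\cup\{b\}$ induces a forest, hence an acyclic digraph under any orientation, so $\alpha(H)\geq n+1$ and Theorem \ref{eternal-seq-ineqs-digraph} finishes. For the upper bound you take a mildly but genuinely different route. The paper first reduces to the square case $K_{n,n}$ via the monotonicity of $\edn$ under induced subgraphs (Lemma \ref{induced_edn}), orients along a perfect matching $M$ (matched edges from $A$ to $B$, all others from $B$ to $A$), and maintains the invariant that every edge of $M$ carries a guard with exactly one edge $e^*$ doubly occupied; the guard on the $B$-end of $e^*$ answers attacks in $A$. You instead work directly on the rectangular graph with a surjection $\pi:A\to B$ playing the role of the matching, and your invariant ($\pi^{-1}(b)\subseteq D$ for every unoccupied $b$) together with the counting step $|(D\cap B)\setminus\pi(U\cap A)|\geq 1$ replaces the paper's "exactly one doubly occupied edge" bookkeeping. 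When $n=m$ your construction specializes to the paper's, so the two are close in spirit; what your version buys is that it avoids the reduction step entirely and makes explicit, via the cardinality count, why a free $B$-guard is always available to defend $A$. Both arguments are complete; I verified that your invariant is preserved by both move types and that the chosen arcs exist in $H$ in each case.
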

\begin{proof}
Denote by $A$ and $B$ the two parts of $K_{n,m}$.
First, we prove that $\oedn(K_{n,m}) \geq \max\{n, m\} + 1$.
Without loss of generality, we assume that $|A| \geq |B|$. 
Let $T$ be the induced subgraph $G[A \cup \{v\}]$ where $v$ is a vertex of $B$.
It is easily seen that $T$ is a tree. Thus, any orientation of $T$ is acyclic.
Consequently, $\oalpha(K_{n,m}) \geq |A| + 1$.
By Proposition \ref{eternal_alpha_olpha}, we obtain the desired inequality.
We will now prove that $\oedn(K_{n,m}) \leq \max\{n, m\} + 1$.
By Lemma \ref{induced_edn}, we assume without loss of generality that $n = m$.
Let $M$ be a perfect matching of $K_{n,n}$.
We construct an orientation $H$ of $K_{n, n}$ as follows:
let $u \in A$ and $v \in B$. If $\{u, v\} \in M$ then $(u, v) \in E(H)$.
Otherwise $(v, u) \in E(H)$.
We start by putting a guard on every vertex of $A$ and one guard on an arbitrary vertex of $B$.
In the strategy, we preserve the following invariant:
there is at least one guard in every edge of the matching $M$ and exactly one edge $e^*$ of $M$ has a guard on its two extremities. We denote by $v^*$ the extremity of $e^*$ in $B$.
Suppose that a vertex $v$ of $B$ is attacked. Let $u$ be the vertex such that $\{u, v\} \in M$.
Then, we move the guard on $u$ to $v$.
Suppose now that a vertex $v$ of $A$ is attacked. Then we move the guard on $v^*$ to $v$.
It is easily seen that the invariant is preserved.
\end{proof}

\begin{theorem}
$\oednm(K_{2,2}) = 2$\\
$\oednm(K_{2,3}) = \oednm(K_{3,3}) = 3$\\
$\oednm(K_{n,m}) = 4$ for every $n \geq 2$ and $m \geq 4$.
\end{theorem}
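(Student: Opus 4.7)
The proof decomposes into three parts matching the three cases in the statement.

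For $\oednm(K_{2,2}) = 2$, I simply observe $K_{2,2} \cong C_4$ and apply Theorem \ref{edn_cycle}, which gives $\oednm(C_4) = \lceil 4/2 \rceil = 2$.

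For $\oednm(K_{2,3}) = \oednm(K_{3,3}) = 3$, the lower bound $\geq 3$ follows from Theorem \ref{oednequals2}: the complement of $K_{2,3}$ in $K_5$ has $4$ edges forming a $K_2 \cup K_3$, and the complement of $K_{3,3}$ in $K_6$ has $6$ edges forming $2K_3$; neither is a disjoint union of at most $\lfloor n/2 \rfloor$ (or $\lfloor n/2\rfloor-1$ in the odd case) edges, so neither graph matches the characterization, hence both satisfy $\oednm > 2$. For the matching upper bound, I exhibit an orientation together with a small family of $3$-element dominating sets closed under defensive multimoves. For $K_{2,3}$ with bipartition $\{a_1, a_2\} \cup \{b_1, b_2, b_3\}$, a suitable orientation is $a_1 \to b_1 \to a_2 \to b_2 \to a_1$ together with $a_1 \to b_3$ and $b_3 \to a_2$; the family $\{\{a_1, a_2, b_1\}, \{a_1, a_2, b_3\}, \{a_2, b_1, b_2\}, \{b_1, b_2, b_3\}\}$ can be checked exhaustively to be closed under the attack-response game. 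The construction for $K_{3,3}$ is analogous, with the extra $A$-vertex playing a symmetric role.

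For $\oednm(K_{n,m}) = 4$ when $n \geq 2$ and $m \geq 4$, the upper bound comes from an orientation built on a base directed $4$-cycle $a_1 \to b_1 \to a_2 \to b_2 \to a_1$: orient $a_1 \to b_j$ and $b_j \to a_2$ for $j \geq 3$, and $b_1 \to a_k$ and $a_k \to b_2$ for $k \geq 3$; remaining edges between outside vertices are oriented consistently so that every $a_k$ ($k\geq 3$) has an out-neighbor in $B$ and every $b_j$ ($j\geq 3$) has an in-neighbor in $A$. The defender maintains the invariant that the four guards always occupy a vertex set of the form $\{a_i, a_{i'}, b_j, b_{j'}\}$ through which a directed $4$-cycle passes, and any attack on an outside vertex is answered by a rotation along arcs of $H$ that restores the invariant. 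For the lower bound, I proceed by contradiction: suppose an orientation $H$ admits an m-eternal dominating set $D$ of size $3$. Since $A$ and $B$ are independent in any orientation of $K_{n,m}$, $D$ must meet both sides, giving $|D \cap A|, |D \cap B| \in \{1,2\}$. In the principal sub-case $D = \{a, b, b'\}$ with $|D \cap A| = 1$, domination of $B \setminus \{b, b'\}$ (of size $\geq 2$) forces $a \to b^*$ for every outside $b^* \in B$, and domination of $A \setminus \{a\}$ forces each such vertex to lie in $N^+(\{b, b'\})$; attacking any outside $b^*$ now forces $a$ itself to move onto $b^*$, and an enumeration of the possible multimove images (combined with the absence of $A$-$A$ or $B$-$B$ arcs and the fact that $m \geq 4$ supplies enough outside $B$-vertices) shows that some subsequent attack admits no valid answer. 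The symmetric sub-case $|D \cap A| = 2$ is handled the same way, using $m \geq 4$ to guarantee enough outside options.

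The main obstacle is the lower bound for the last case: the case analysis ramifies both on the partition of $D$ across the bipartition and on the multimove options in each reachable configuration, and closing the argument requires careful bookkeeping of the arcs forced by the successive domination requirements in order to derive an inconsistency.
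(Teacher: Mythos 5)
The decisive gap is in the lower bound $\oednm(K_{n,m})\geq 4$, which is the heart of the theorem, and your proposal stops exactly where the work begins. After correctly observing that in a configuration $\{a,b,b'\}$ with one guard on the small side the vertex $a$ must dominate every outside $B$-vertex and must itself answer an attack on one of them, you assert that ``an enumeration of the possible multimove images \ldots shows that some subsequent attack admits no valid answer.'' That enumeration \emph{is} the proof. The paper needs roughly two pages here: it first shows that the two remaining guards are both forced to move (each of the three configurations in which one of them stays leaves some vertex undominated, and this is precisely where $m\geq 4$ enters), deduces that one of them must land on the vacated small-side vertex $w$, extracts a chain of forced arcs from the successive domination requirements, proves that $w$ must belong to \emph{every} reachable configuration, and only then exhibits an attack that forces $w$ to be vacated with no guard able to reoccupy it. It must then also show, by a separate forced-arc argument, that the attacker can drive the game from the $(1,2)$-split into the $(2,1)$-split, and must dispose of the configurations with all three guards on one side --- your claim that $|D\cap A|,|D\cap B|\in\{1,2\}$ silently discards the case of three guards on the smaller side, which is a legitimate dominating configuration when that side has exactly three vertices. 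None of this is routine bookkeeping that can be deferred; without it you have only proved $\oednm(K_{n,m})\leq 4$.

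There are also concrete errors in the parts you do spell out. In your $K_{2,3}$ orientation ($a_1\to b_1\to a_2\to b_2\to a_1$, $a_1\to b_3$, $b_3\to a_2$) the set $\{a_2,b_1,b_2\}$ in your ``closed family'' is not even a dominating set, since the only in-neighbor of $b_3$ is $a_1$; an attack on $b_3$ from that configuration is indefensible, so the claimed exhaustive check fails as stated (the family $\{a_1,a_2,b_3\}$, $\{b_1,b_2,b_3\}$ alone would work, and the paper sidesteps the issue by parking a guard on the third vertex of the large side and defending the residual $K_{2,2}$, and by using a spanning $C_6$ for $K_{3,3}$). For the upper bound of the last case, your condition on the ``remaining edges'' (every outside $a_k$ has an out-neighbor in $B$, every outside $b_j$ an in-neighbor in $A$) is already satisfied by the arcs $a_k\to b_2$ and $a_1\to b_j$ and hence does not constrain those edges at all; if some edge between outside vertices is oriented $a_k\to b_j$, a configuration whose large-side guards sit on $b_j$ and $b_2$ cannot answer an attack on $a_k$. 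You need all such edges oriented from the outside $B$-vertices to the outside $A$-vertices, which is exactly the paper's four-block orientation $A_1\to B_1\to A_2\to B_2\to A_1$ with $A_1=\{a_1\}$ and $B_2=\{b_2\}$.
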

\begin{proof}
$K_{2,2}$ is isomorphic to $C_4$ so, by Theorem \ref{cycles}, $\oednm(K_{2,2})=2$.

It is easily seen that $K_{2,3}$ and $K_{3,3}$ don't satisfy the conditions of Theorem \ref{oednequals2}.
Thus $\oednm(K_{2,3}) \geq 3$ and $\oedn(K_{3,3}) \geq 3$.

%Suppose that $\oednm(K_{2,3})=2$. Denote by $A$ and $B$ the two parts of $K_{2,3}$, $A=\{a_1,a_2,a_3\}$, $B=\{b_1,b_2\}$. Then, by Lemma \ref{nonedgelemma}, for any orientation $H$ of $K_{2,3}$, $\{a_1,a_2\}$ is a dominating set of $H$. But none of $a_1$ or $a_2$ dominates $a_3$, as $\{a_1,a_3\},\{a_2,a_3\} \notin E(K_{2,3})$. 
%Therefore, by absurd, $\oednm(K_{2,3})\geq 3$. Similarly, $\oednm(K_{3,3})\geq 3$.

To show that $\oednm(K_{2,3})\leq 3$, we consider the following defense : we make one guard stay on $a_3$, that will therefore always be defended. The subgraph induced by the rest of the vertices $\{a_1,a_2,b_1,b_2\}$ is isomorphic to $K_{2,2}$, and can therefore be defended by the two guards left.

As $C_6$ is a spanning subgraph of $\oednm(K_{3,3})$, and since, by Theorem \ref{cycles}, $\oednm(C_6) = 3$, we also have $\oednm(K_{3,3})\leq 3$.

We now consider the case where $n\geq 2$ and $m \geq 4$. We first show that $\oednm(K_{n,m}) \geq 4$. Suppose that $\oednm(K_{n,m}) = 3$, so that there exists an orientation $H$ of $K_{n,m}$ for which any attack can be defended by three guards. Denote by $A$ and $B$ the two parts of $K_{n,m}$, $|A|=m$, $|B|=n$. 

Consider a guard configuration $D_i$ where $|D_i\cap A|=2$ and $|D_i\cap B|=1$. Let $u,v,s,t \in A$ and $w \in B$.
Without loss of generality, we suppose that $D_i=\{u,v,w\}$ and $r_{i+1} = s$ is the attacked vertex.
We denote by $D_{i+1}$ a guard configuration reachable from $D_i$ and that defends $s$.
Only the guard on $w$ can defend $s$, so that in $H$, $w$ dominates $s$, and the guard on $w$ goes to $s$. Now the other two guards can either stay on $u$ and $v$, or go to one of their outgoing neighbors, respectively $u^+$ and $v^+$. We can therefore have four different configurations for $D_{i+1}$: $\{u,v,s\}$, $\{u^+,v,s\}$, $\{u,v^+,s\}$ or $\{u^+,v^+,s\}$. 

In the first configuration, no vertex dominates $t$, in the second, none dominates $u$ and in the third one, none dominates $v$. Therefore, we necessarily have $D_{i+1}=\{u^+,v^+,s\}$. But, since $u^+,v^+,w \in B$ and $w$ dominates $s$, no vertex dominates $w$. Thus, we have either $u^+=w$ or $v^+=w$. We can consider, without loss of generality, that $u^+=w$. Then, $u$ dominates $w$ and since $\{w,v^+,s\}$ is a dominating set, with $\{w,v^+\} \in B$ and $s \in A$, $w$ or $v^+$ must dominate $u$ and $v$, so $v^+$ dominates $u$ and $w$ dominates $v$.

Suppose now that $v^+$ dominates $s$, and the next attacked vertex $r_{i+2}$ is $u$. Only $v^+$ can go to $u$, so that we can once again obtain four different configurations for $D_{i+2}$ : $\{w,u,s\}$, $\{w^+,u,s\}$, $\{w,u,s^+\}$ or $\{w^+,u,s^+\}$. In the first and third configurations, no vertex dominates $v^+$, in the second, none dominates either $v$ or $t$. The only possibility for the last one to dominate $s$ is if $w^+=s$, but then $\{s,u,s^+\}$ would not dominate $v^+$. Therefore, we obtain a contradiction, and by absurd, $s$ dominates $v^+$.

Since $\{u,v,w\}$ dominates $H$, with $\{u,v\} \in A$ and $w \in B$, $w$ dominates $A \setminus \{u,v\}$, and since $w$ dominates $v$, only $u$ dominates $w$. Thus, for any $k$, $u \in D_k$ or $w \in D_k$. Since $\{w,v^+,s\}$ is a dominating set, with $\{w,v^+\} \in B$ and $s \in A$, $s$ dominates $B\setminus \{w,v^+\}$, and since $s$ dominates $v^+$, only $w$ dominates $s$. Thus, for any $k$, $w \in D_k$ or $s \in D_k$. Similarly, since we could have had $r_{i+1}=t$, for any $k$, $w \in D_k$ or $t \in D_k$. Therefore, for any $k$, $D_k=\{u,s,t\}$, or $w \in D_k$. But, in $\{u,s,t\}$, no vertex dominates $v$. Thus, for any $k$, $w \in D_k$. But, from $D_{i+1}=\{w,v^+,s\}$ , if $r_{i+2}=v$, only $w$ dominates $v$ so the guard on $w$ has to move to $v$, and since nor $s$ nor $v^+$ dominates $w$, we necessarily have $w \notin D_{i+2}$ and obtain a contradiction.

All considered vertices and some of the arcs in $H$ are represented in Figure \ref{Kmn1}.

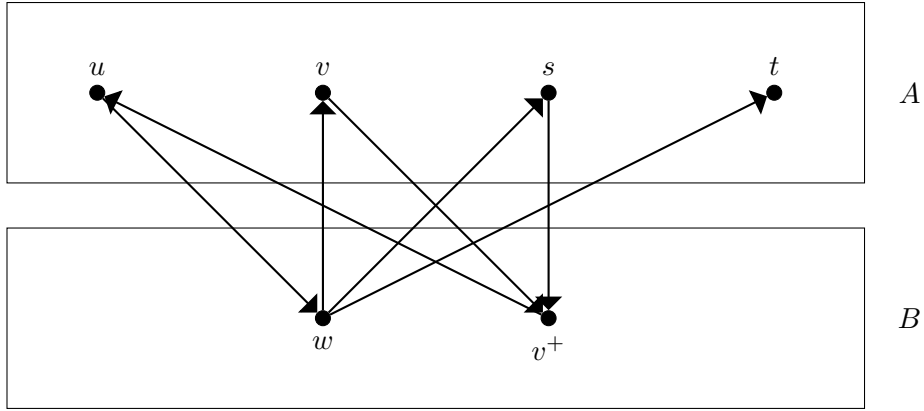
\begin{figure}
	\centering
	\begin{tikzpicture}[scale=0.3]
    
    %Graphs
    \node[draw,circle,fill,inner sep=2pt,label=below:$w$] (w) at (0,0) {};
    \node[draw,circle,fill,inner sep=2pt,label=below:$v^+$] (v+) at (10,0) {};
    \node[draw,circle,fill,inner sep=2pt,label=above:$u$] (u) at (-10,10) {};
    \node[draw,circle,fill,inner sep=2pt,label=above:$v$] (v) at (0,10) {};
    \node[draw,circle,fill,inner sep=2pt,label=above:$s$] (s) at (10,10) {};
    \node[draw,circle,fill,inner sep=2pt,label=above:$t$] (t) at (20,10) {};
    
    \draw [] (-14,6) rectangle (24,14);
    \draw [] (-14,-4) rectangle (24,4);
    \node[] at (26,10) {$A$};
    \node[] at (26,0) {$B$};
    
    \draw[-triangle 90,thick] (v+) -- (u);
    \draw[-triangle 90,thick] (u) -- (w);
    \draw[-triangle 90,thick] (w) -- (v);
    \draw[-triangle 90,thick] (v) -- (v+);
    \draw[-triangle 90,thick] (w) -- (s);
    \draw[-triangle 90,thick] (s) -- (v+);
    \draw[-triangle 90,thick] (w) -- (t);

	\end{tikzpicture}
    \caption{The vertices $u,v,s,t,w,v^+$ and some of their corresponding arcs in the orientation $H$ of $K_{m,n}$, for $D_i=\{u,v,w\}$, $r_{i+1}=s$, $D_{i+1}=\{w,v^+,s\}$ and $r_{i+2}=u$. We can see here that from $D_{i+1}=\{w,v^+,s\}$ , if we take  $r_{i+2}=v$, we have $w \notin D_{i+2}$.}
	\label{Kmn1}
\end{figure}

We now have to prove that the attacker can always bring the guards to a set $D_i$ such that $|D_i\cap A|=2$ and $|D_i\cap B|=1$.
Suppose there exists $i$ such that $|D_i\cap A|=1$ and $|D_i\cap B|=2$. Let $u,v\in B$ and $w,s,t \in A$. We take $D_i=\{u,v,w\}$, and $r_{i+1}=s$.
Only the guards on $u$ or $v$ can go to $s$. We can suppose, without loss of generality, that the guard on $u$ goes to $s$, so that $u$ dominates $s$. We can obtain four different configurations for $D_{i+1}$ : $\{s,v,w\}$, $\{s,v^+,w\}$, $\{s,v,w^+\}$ or $\{s,v^+,w^+\}$. 

In the first and last configurations, we have $|D_{i+1} \cap A|=2$ and $|D_{i+1} \cap B|=1$, so we are done. Let us suppose it is not the case.
In the second configuration, since $s,v^+,w \in A$, and $|A|=m\geq 4$, there exists a vertex in $A$ that is not dominated by any of the three vertices. Therefore, we necessarily have $D_{i+1}=\{s,v,w^+\}$. But, since $u,w^+,v \in B$ and $u$ dominates $s$, to dominate $u$, we must have $w^+=u$, $w$ dominates $u$ and $D_{i+1}=\{s,v,u\}$.

Now, similarly, if we take $r_{i+2}=t$, we must have $D_{i+2}=\{v,t,u\}$ and either $u$ or $v$ dominates $t$ and is dominated by $s$. Since $u$ dominates $s$, it must be $v$. Similarly, if we take $r_{i+3}=w$, we must have $D_{i+3}=\{v,w,u\}$ and either $u$ or $v$ dominates $w$ and is dominated by $t$. Since $v$ dominates $t$, and $w$ dominates $u$, we obtain a contradiction. Therefore, if there exists a configuration $D_i$ such that $|D_i\cap A|=1$ and $|D_i\cap B|=2$, then
there exists a sequence of attacks such that either the attacker win or we obtain a guard configuration $D_j$, $j > i$ with  $|D_j\cap A|=2$ and $|D_j\cap B|=1$.
All considered vertices and some of their arcs are represented in Figure \ref{Kmn2}.

%\vspace{1cm}
\begin{figure}
	\centering
	\begin{tikzpicture}[scale=0.3]
    
    %Graphs
    \node[draw,circle,fill,inner sep=2pt,label=below:$u$] (u) at (0,0) {};
    \node[draw,circle,fill,inner sep=2pt,label=below:$v$] (v) at (10,0) {};
    \node[draw,circle,fill,inner sep=2pt,label=above:$w$] (w) at (-5,10) {};
    \node[draw,circle,fill,inner sep=2pt,label=above:$s$] (s) at (5,10) {};
    \node[draw,circle,fill,inner sep=2pt,label=above:$t$] (t) at (15,10) {};
    
    \draw [] (-14,6) rectangle (24,14);
    \draw [] (-14,-4) rectangle (24,4);
    \node[] at (26,10) {$A$};
    \node[] at (26,0) {$B$};
    
    \draw[-triangle 90,thick] (w) -- (u);
    \draw[-triangle 90,thick] (u) -- (s);
    \draw[-triangle 90,thick] (s) -- (v);
    \draw[-triangle 90,thick] (v) -- (t);
    \draw[-triangle 90,thick] (v) -- (w);

	\end{tikzpicture}
    \caption{The vertices $u,v,w,s,t$ and some of their corresponding arcs in the orientation $H$ of $K_{m,n}$, if we suppose there is no $j$ s.t. $|D_j\cap A|=2$ and $|D_j\cap B|=1$, and have $D_i=\{u,v,w\}$, $r_{i+1}=s$, $D_{i+1}=\{s,v,u\}$, $r_{i+2}=t$ and $D_{i+2}=\{v,t,u\}$. We can see here that if we take $r_{i+3}=w$, only the guard on $v$ can move to $w$, but then the guard on $t$ cannot move to $v$.}
	\label{Kmn2}
\end{figure}
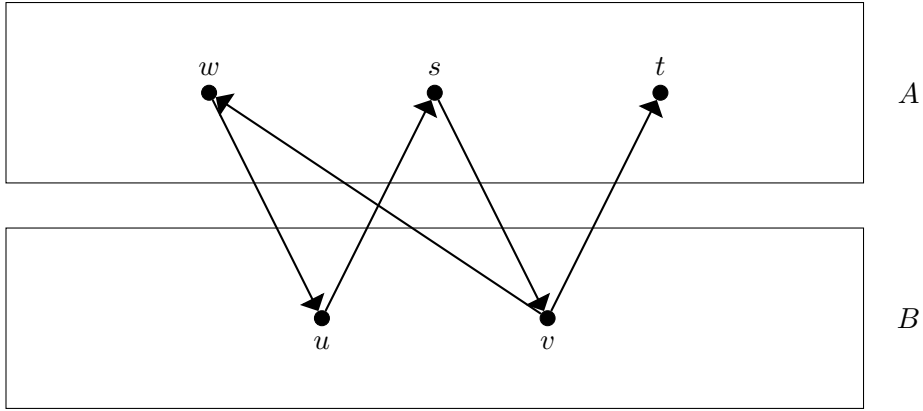

Now, consider a guard configuration $D_i$ with $|D_i\cap A|=3$. Since $|A|=m\geq 4$, there exists a vertex in $A$ that is not dominated by any of the three vertices, so we obtain a contradiction. 
If $D_i$ is a guard configuration with $|D_i \cap B|=3$ and the next attack is on a vertex in $A$, for any answer of the defender, we obtain one of the previously seen possibilities.
Therefore, for any defense, there exists $i$ such that $|D_i\cap A|=1$ and $|D_i\cap B|=2$, so that we have $\oednm(K_{n,m}) \geq 4$.

We now prove that $\oednm(K_{n,m}) \leq 4$. To see it, consider the following strategy with $4$ guards: partition $A$ into two non empty sets $A_1$ and $A_2$, and $B$ into two non empty sets $B_1$ and $B_2$. Orientate every edge of $K_{n,m}$ either from $A_1$ to $B_1$, from $B_1$ to $A_2$, from $A_2$ to $B_2$ or from $B_2$ to $A_1$. Start with a guard in each of the four sets of vertices. Exactly one guard dominates one set. Every time a vertex is attacked, move the guard who dominates its corresponding set to the vertex, and move all the three other guards in the set they dominate. This leads to the exact same configuration that we were in, so that we can always defend any attack like this.

\end{proof}

\subsubsection{Trivially perfect graphs}
Trivially perfect graphs can be characterized in various ways.
We use the following definition due to Wolk \cite{wolk_triviallly_perfect}.
A graph $G$ is trivially perfect if every connected  induced subgraph of $G$ admits a universal vertex.
For this class of graphs, we obtain the exact value of $\oednm$.

\begin{theorem}\label{eternal_trivially_perfect}
Let $G$ be a connected trivially perfect graph with at least 2 vertices and $l$ 2-vertex-connected components.
Then, we have:
\begin{itemize}
\item  if $G$ admits exactly one 2-vertex-connected component $G_i$ of size at least 3  then  $\oednm(G) = \oednm(G_i) + l - 1$ with $\oednm(G_i) \leq 3$.
\item if every 2-vertex-connected component is of size at most 3, then
      $\oednm(G) = l+1$.
\item if $G$ admits at least two 2-vertex-connected components
of size at least 4 and 3, respectively,  then   $\oednm(G) = l+2$.
\end{itemize}
\end{theorem}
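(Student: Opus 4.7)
The plan is to handle the three stated cases separately, leveraging Wolk's characterization of trivially perfect graphs (every connected induced subgraph has a universal vertex) together with the block-cut tree decomposition of $G$. Since each block is a connected induced subgraph of a trivially perfect graph, it is itself trivially perfect and contains a universal vertex.

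As a preliminary step, I would prove that every 2-vertex-connected trivially perfect graph $B$ with $|V(B)| \geq 3$ satisfies $\oednm(B) \leq 3$, which in particular yields the $\oednm(G_i) \leq 3$ clause in Case 1. Taking a universal vertex $u$ of $B$ and two of its neighbors $v,w$, the three vertices induce a triangle. I would orient this triangle as a directed 3-cycle $u \to v \to w \to u$ and, exploiting universality of $u$, orient the remaining edges so that every other vertex both dominates and is dominated by some element of $\{u,v,w\}$. The set $\{u,v,w\}$ is then a strongly connected dominating-dominated set, and Theorem \ref{ednm_scdd} gives $\oednm(B) \leq 3$. For $B = K_3$, Theorem \ref{edn_cycle} refines this to $\oednm(B) = 2$.

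For the upper bounds I would construct an orientation $H$ of $G$ by orienting each block independently: every bridge in a fixed direction, every triangle as a directed 3-cycle, and every larger non-bridge block as in the preliminary step. The guard strategy maintains the invariant that the restriction of the guards to each block forms a valid eternal m-dominating configuration of that block, with coordination across cut vertices handled by a small pool of roving guards. The total guard count then breaks down as one guard per block (at its universal vertex, or at a bridge endpoint) plus $\oednm(G_i) - 1$ extras in Case 1 (provided by the unique large block), a single extra in Case 2 (to synchronise adjacent triangles and bridges), or two extras in Case 3 (one per distinguished large block).

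For the lower bounds I would analyse an arbitrary orientation $H$ of $G$ through its bridges and its large blocks. Every bridge is acyclic in $H$, so alternately attacking its two endpoints forces a persistent guard on each side; propagating this argument along the block-cut tree yields a lower bound of $l$ guards. In Cases 1 and 2 at least one further guard is required because some block of size at least $3$ already satisfies $\oednm \geq 2$. The delicate part, and the main obstacle of the proof, is the second extra guard in Case 3: using the two blocks $B_1$ of size at least $4$ and $B_2$ of size at least $3$, I would design an attack sequence alternating between $B_1$ and $B_2$ that simultaneously stresses both, forcing two roving guards in addition to the $l$ block guards, and exploiting the rigidity of the size-two eternal m-dominating configurations characterised in Theorem \ref{oednequals2}.
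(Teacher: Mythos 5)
Your overall architecture (block decomposition, universal vertices, separate upper/lower bounds per case) matches the paper's, but two of your steps have genuine problems. First, the preliminary bound $\oednm(B)\leq 3$ does not follow from Theorem \ref{ednm_scdd} as you invoke it: that theorem gives $\ednm(H)\leq \scdd(H)+1$, so a strongly connected dominating-dominated set of size $3$ only yields $\oednm(B)\leq 4$ (and no orientation admits one of size $\leq 2$, since a strongly connected set of two vertices would need both arcs of an edge). Moreover, "exploiting universality of $u$" is not enough to make every outside vertex both dominate and be dominated by $\{u,v,w\}$: a vertex adjacent to only $u$ among the three can do one or the other, not both, so you must choose $v$ to be a universal vertex of $B-u$ (which exists by Wolk's characterization). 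The paper avoids the off-by-one entirely by giving an explicit three-guard rotating strategy: a guard sits permanently on the global universal vertex $x$, edges are oriented $x\to w$ for generic $w$, $x_i\to x$, and $w\to x_i$, and an attack on $w$ is answered by the cyclic shift $x\to w$, $x_i\to x$, $v\to x_i$, preserving the configuration $\{x,x_i,\cdot\}$.

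Second, and more seriously, the lower bound $\oednm(G)\geq l+2$ in Case 3 --- which you correctly identify as the crux --- is not actually proved in your proposal, and Theorem \ref{oednequals2} is not the right tool: the obstruction is not about which blocks have $\oednm=2$, but about what two guards placed at the cut vertex $x$ and inside the large block can dominate. The paper's argument first shows (by a case analysis on whether $H_1\setminus\{x\}$ has a vertex of outdegree $0$, using $|V(H_1)|\geq 4$ and strong connectivity) that some $v_1\in H_1\setminus\{x\}$ exists with $\{x,v_1\}$ not dominating $H_1$; attacking $v_1$ then forces two guards strictly inside $H_1\setminus\{x\}$, hence no guard on $x$, after which the guard in $H_2$ can be lured off its dominating position and $H_2$ becomes undefendable. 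Your "alternating attack sequence stressing both blocks" gives no mechanism for forcing the guard off $x$, which is the whole point. For the remaining cases your route is workable but less clean than the paper's, which reduces Case 1 and the bridge bookkeeping in Case 2 to the additivity of $\oednm$ over 2-edge-connected components (Corollary \ref{eternal_2_vertex_compoenents}) rather than to ad hoc "roving guard" invariants.
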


We split the proof of this theorem in several lemmas.

\begin{lemma}
 $\oednm(G) \leq l+2$.
\end{lemma}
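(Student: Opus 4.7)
The plan is to exhibit an orientation $H$ of $G$ and an m-eternal dominating strategy on $H$ using $l+2$ guards. The warm-up case $l=1$ is immediate since either $G=K_2$ (with $\oednm(G)=2 \leq 3$) or $G$ is $2$-vertex-connected with at least three vertices, in which case the construction below specialised to a single block already works. For $l \geq 2$, the main structural input is that $u^*$, a universal vertex of $G$ (which exists because $G$ is connected trivially perfect), must be the unique cut vertex of $G$: every block of size $\geq 2$ contains $u^*$ since any two adjacent vertices $a,b$ form a triangle with $u^*$ and therefore lie in the same block, and a second cut vertex would share two blocks with $u^*$ and produce a cycle in the block-cut tree. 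Hence all $l$ blocks contain $u^*$.

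For each block $B_i$, I pick a vertex $u_i' \in B_i \setminus \{u^*\}$ that is universal in $B_i$: for a bridge $B_j=\{u^*,w_j\}$ just set $u_j' = w_j$; for a non-bridge $B_i$, the subgraph $B_i - u^*$ is connected (by $2$-vertex-connectivity of $B_i$) and trivially perfect, hence has a universal vertex that remains universal in $B_i$. I then define $H$ by orienting $u^* \to u_i'$ for every $i$ and, for every $x \in B_i \setminus \{u^*, u_i'\}$, orienting $u_i' \to x$ and $x \to u^*$; the remaining intra-block edges are oriented arbitrarily.

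The strategy maintains configurations of the form $D = \{u^*, u_1', \ldots, u_l', g\}$ where $g$ is a ``floating'' guard on some vertex of $V \setminus \{u^*, u_1', \ldots, u_l'\}$ (if that set is empty then $|V| \leq l+1$ and the bound holds trivially). Any attacked vertex $r \notin D$ must lie in some $B_j \setminus \{u^*, u_j'\}$ with $B_j$ non-bridge, since bridges contribute only $u^*$ and $u_j'=w_j$, both always guarded. The defense is the multimove $u_j' \mapsto r$, $u^* \mapsto u_j'$, $g \mapsto u^*$, leaving every other guard stationary: the arcs $(u_j', r)$, $(u^*, u_j')$ and $(g, u^*)$ all exist by construction (the last because the invariant keeps $g$ in some $B_i \setminus \{u^*, u_i'\}$), and the destinations are pairwise distinct because distinct blocks share only $u^*$. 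The resulting configuration $\{u^*, u_1', \ldots, u_l', r\}$ is again of the desired form with new floating guard $r$, preserving the invariant. The only real obstacle is the structural setup---establishing the uniqueness of $u^*$ as cut vertex and the existence of the second universal $u_i'$---after which the strategy verification is routine arc-chasing.
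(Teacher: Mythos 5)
Your proof is correct and follows essentially the same route as the paper's: both exploit the fact that the universal vertex $x$ lies in every block, pick a second vertex $x_i$ universal for each block $G_i$, orient the triangles $x, x_i, v$ cyclically, and run a rotation strategy with one guard on $x$, one guard anchored per block, and one extra guard, for $l+2$ in total. The only differences are cosmetic (your orientation is the reverse of the paper's, and you pin the guards on the $x_i$'s with one floating guard, whereas the paper lets the per-block guard float and pins the extra guard on one $x_i$); your explicit justification that each non-bridge block has a second universal vertex is a welcome detail the paper leaves implicit.
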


\begin{proof}
Let $x$ be a universal vertex of $G$. If $l \geq 2$, then $x$ is unique and all 2-vertex-component $G_i$ contains $x$ and another vertex $x_i$ universal for $G_i$.
We orientate $G$ as follows, for every vertex $v$ different from $x$ and from every $x_i$,
we orientate the edge $x v$  from $x$ to $v$.
For every vertex $x_i$ we orientate $x x_i$ from $x_i$ to $x$.
For every vertex $v$ in a component $G_i$ and different from $x$ and $x_i$,
we orientate the edge $x_i v$ from $v$ to $x_i$.
The other edges are oriented arbitrarily, they are not useful to the defense of $G$.

The strategy of the defender consists in permanently satisfying the following invariant:\\
1) there is a guard on $x$;\\
2) every component $G_i$ contains at least two guards (including the one on $x$);\\
3) the component $G_i$ with three guards has a guard on $x_i$.

We denote by $G^*= G_i$ the component with three guards, on the vertices $x$, $x^* = x_i$
and another vertex $v$.
Let us prove that the defender can maintain the invariant.
If the attack is on a vertex $x_i$ different from $x^*$, then the defender moves the guard on the vertex in the component $G_i$ that is not $x$ to $x_i$.
If the attack is on a vertex $v$ in $V(G_i) \setminus \{ x, x_i \}$,
then the defender moves the guard on  $x$ to $v$ , the guard on $x^*$ to $x$ and eventually 
the guard on a vertex in $V(G_i) \setminus \{ x \}$ to $x_i$ if it is not already on it.
Thus, the invariant is maintained.
\end{proof}

\begin{lemma}
if every 2-vertex-connected component is of size at most 3, then
 $\oednm(G) = l+1$.
\end{lemma}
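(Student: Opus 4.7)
The plan is to combine Lemma~\ref{digraph-subgraph} (the SCC decomposition of $\ednm$) with a careful analysis of a single strongly connected component. Pick a universal vertex $x$ of $G$ (when $l \geq 2$ it is unique and lies in every block, and when $l = 1$ any vertex of the unique block). Write $l_2,l_3$ for the numbers of $K_2$- and $K_3$-blocks, so $l = l_2 + l_3$ and $n = 1 + l_2 + 2 l_3$. Given an arbitrary orientation $H$ of $G$, let $I$ denote the set of $K_3$-blocks whose triangle is oriented as a directed $3$-cycle. Since $x$ is the only cut-vertex of $G$, every directed cycle of $H$ is confined to a single block, so the SCC of $x$ in $H$ is exactly $C = \{x\} \cup \bigcup_{i \in I}\{u_i,v_i\}$, and every other vertex forms its own singleton SCC. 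The induced digraph $H[C]$ is then isomorphic to the digraph $T_{|I|}$ consisting of $|I|$ directed triangles sharing the common vertex $x$.

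The key auxiliary step is to prove that $\ednm(T_k) = k+1$ for every $k \geq 0$. For the upper bound I would use $k+1$ guards (one on $x$, one on each $u_i$) and defend an attack on $v_i$ by the move $u_i \to v_i$ and an attack on an unoccupied $u_i$ by the multi-move $x \to u_i$, $v_i \to x$. For the lower bound, I appeal to $\ednm \geq \gamma$ from Theorem~\ref{eternal-seq-ineqs-digraph} and argue $\gamma(T_k) \geq k+1$ by a short case analysis on a dominating set $D$: if $x \in D$, then for each $i$ the vertex $v_i$ (whose only in-neighbor is $u_i$) forces $|D \cap \{u_i,v_i\}| \geq 1$, yielding $|D| \geq 1 + k$; if $x \notin D$, then every $u_i$ must lie in $D$ (its only in-neighbor is $x$) and some $v_j$ must also lie in $D$ to dominate $x$, again giving $|D| \geq k+1$.

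Combining these with Lemma~\ref{digraph-subgraph}, where each singleton SCC contributes $\ednm = 1$, yields for any orientation $H$
\[
\ednm(H) = \ednm(H[C]) + l_2 + 2(l_3 - |I|) \geq (|I|+1) + l_2 + 2 l_3 - 2|I| = n - |I| \geq n - l_3 = l+1,
\]
so $\oednm(G) \geq l+1$. Taking the orientation with $|I| = l_3$ (every triangle oriented as a directed cycle, $K_2$-blocks oriented arbitrarily) turns every inequality above into an equality and gives the matching bound $\oednm(G) \leq l+1$. The main obstacle should be the matching identity $\ednm(T_k) = k+1$ together with the correct SCC identification; once these are in place, everything else reduces to the displayed arithmetic, and the optimal orientation is exactly the one that maximises $|I|$.
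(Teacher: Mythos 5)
Your proof is correct, but it reaches the lower bound by a genuinely different route than the paper. The paper handles the two directions asymmetrically: the upper bound is the same explicit strategy you describe (orient every triangle cyclically, keep a guard on $x$), while the lower bound first invokes Corollary~\ref{eternal_2_vertex_compoenents} to discard the $K_2$-blocks, then uses Robbins' theorem and Proposition~\ref{eternal_strongly_connected} to pass to a strongly connected orientation, and finally argues that with at most $l$ guards some triangle holds at most one guard and therefore leaves a vertex undominated. You instead apply the strongly-connected-component decomposition of Lemma~\ref{digraph-subgraph} to an \emph{arbitrary} orientation $H$: since $x$ is the only cut vertex, the non-trivial SCC is exactly the bouquet $T_{|I|}$ of cyclically oriented triangles through $x$, every other vertex is a singleton contributing $1$, and $\ednm(T_k)=k+1$ follows from the domination bound $\gamma \leq \ednm$ plus your explicit strategy. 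This buys you more than the lemma asks for: the exact formula $\ednm(H) = n - |I|$ for every orientation $H$, from which both bounds and the characterization of the optimal orientations (those maximizing the number of cyclic triangles) drop out by arithmetic. The paper's route is shorter in that it reuses already-established reductions and does not need to classify the SCCs of an arbitrary orientation, but your SCC identification (directed cycles are confined to blocks, and a triangle contributes to the SCC of $x$ iff it is a directed $3$-cycle, since a digon is not an orientation) is sound, so both arguments are complete.
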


\begin{proof}
We first prove that $\oednm(G) \leq l+1$.
We put a guard on $x$.
We orientate each component isomorphic to $K_3$ cyclically and we put a guard on the vertex which is an out-neighbor of $x$.
We orientate each component isomorphic to $P_2$ arbitrarily and we put a guard on the vertex that is not $x$.
It is easily seen that the defender can permanently defend and maintain a guard on $x$.

Now, we prove that $\oednm(G) \geq l+1$.
By Corollary \ref{eternal_2_vertex_compoenents}, we can assume that $G$ has no component isomorphic to $P_2$ since we need to permanently put a guard on the pendent vertex.
Thus, $G$ is 2-arc-connected, and we can consider a strongly connected orientation of $G$ (Theorem \ref{robbins_theorem} and Proposition \ref{eternal_strongly_connected}). If there is less than $l+1$ guards, then there is a component $G_i$ with at most one guard. Since the orientation is strongly connected, one vertex of $G_i$ is not dominated.
\end{proof}

\begin{lemma}
If  $G$ admits exactly one 2-vertex-connected $G_i$ component of size at least 3 then  $\oednm(G) = \oednm(G_i) + l - 1$.
\end{lemma}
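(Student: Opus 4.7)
I prove the two inequalities separately, relying on the structural fact (established in the first lemma above) that the universal vertex $x$ of $G$ belongs to every block and that the $l-1$ blocks other than $G_i$ are pendant edges $\{x,v_1\},\ldots,\{x,v_{l-1}\}$.

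For the upper bound $\oednm(G) \leq \oednm(G_i)+l-1$, I choose an orientation $H_i$ of $G_i$ achieving $\ednm(H_i)=\oednm(G_i)$, together with a winning m-eternal dominating set $D_i$ of that size. I extend $H_i$ to an orientation $H$ of $G$ by orienting each pendant edge $xv_j$ arbitrarily, and use the initial configuration $D_i\cup\{v_1,\ldots,v_{l-1}\}$. The strategy is to keep the $l-1$ pendant guards frozen forever and to apply the optimal $H_i$-strategy to every attack; since pendants are always occupied, every attack lies in $V(G_i)$ and is handled inside $H_i$.

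For the lower bound $\oednm(G) \geq \oednm(G_i)+l-1$, let $H$ be an arbitrary orientation of $G$, set $k=\ednm(H)$, and let $H_i=H[V(G_i)]$. For each pendant $v_j$, exactly one of two situations occurs: if $xv_j$ is oriented $v_j\to x$, then $v_j$ has no in-arc, so any winning configuration permanently contains $v_j$; if it is oriented $x\to v_j$, then $v_j$ has no out-arc, so once a guard reaches $v_j$ it is stuck there. The attacker's strategy begins with a \emph{pendant-locking phase} in which each pendant is attacked once in turn: either the defender loses immediately (a source-pendant without a guard) or a guard becomes locked on that pendant. After this phase all $l-1$ pendants carry permanent guards, leaving $k-(l-1)$ guards in $V(G_i)$.

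The key step is to show that the restriction to $V(G_i)$ of the defender's configuration is an m-eternal dominating set of $H_i$. Given any attack on $v\in V(G_i)$, no pendant guard can participate in the defense: a sink-pendant guard has no outgoing arc at all, and a source-pendant guard could only move to $x$, but then the attacker's next move on $v_j$ (which still has no in-arc) would win. Therefore the entire defense is executed through arcs of $H_i$, and the post-defense configuration again restricts to a valid m-eternal dominating set of $H_i$. Iterating, one obtains a legitimate strategy in $H_i$ with $k-(l-1)$ guards, so $k-(l-1)\geq \ednm(H_i)\geq \oednm(G_i)$, which yields the claimed inequality.

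The main obstacle is making the "pendant guards are useless in $V(G_i)$" argument airtight in the source-pendant case, where one might attempt to shuttle the guard to $x$ temporarily; the attacker counters this by being permitted to reattack the vacated pendant, and since that pendant still has no in-arc, the defender cannot refill it. Once this is in hand, both bounds match and the lemma follows.
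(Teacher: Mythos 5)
Your proof is correct, but it takes a different route from the paper. The paper disposes of this lemma in one line by invoking Corollary \ref{eternal_2_vertex_compoenents}: since the $l-1$ blocks of size 2 are pendant edges $xv_j$ at the universal vertex, the 2-edge-connected components of $G$ are exactly $G_i$ together with the $l-1$ singleton leaves $\{v_j\}$, each contributing $\oednm(\{v_j\})=1$ to the sum, which immediately gives $\oednm(G)=\oednm(G_i)+l-1$. What you have done is essentially re-derive the relevant special case of that corollary from scratch: your upper bound (freeze a guard on each leaf) and your lower bound (a source-pendant forces a permanent guard, a sink-pendant traps one, and in either case that guard can never again participate in defending $V(G_i)$, so the restriction of the strategy to $H_i$ is a valid strategy with $k-(l-1)$ guards) are exactly the mechanics that underlie the paper's Lemma \ref{digraph-subgraph} and Proposition \ref{eternal_strongly_connected}, specialized to bridges. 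Your version is more self-contained and makes the ``locked guard'' phenomenon explicit, including the correct counter to the defender shuttling a source-pendant guard to $x$ (re-attack the now-unrefillable pendant); the paper's version buys brevity and generality by having already paid for the decomposition result. Both arguments are sound; yours would stand on its own even without Corollary \ref{eternal_2_vertex_compoenents}.
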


\begin{proof}
This is a straightforward consequence of Corollary \ref{eternal_2_vertex_compoenents}.
\end{proof}

\begin{lemma}
If $G$ admits two 2-vertex-connected components
of size at least 4 and 3, respectively,  then   $\oednm(G) \geq l+2$.
\end{lemma}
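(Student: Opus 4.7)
I proceed by contradiction: assume $H$ is an orientation of $G$ with $\ednm(H) \leq l+1$. Since $G$ has at least two blocks, there is a unique vertex $x$ lying in every block, namely the universal vertex of $G$. Let $b$ be the number of bridge blocks and $l' = l - b \geq 2$ the number of non-bridge blocks; by assumption these include $G_1$ of size at least $4$ and $G_2$ of size at least $3$.

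First I reduce to the non-bridge part. Each bridge block $\{x,p\}$ has its pendant $p$ adjacent only to $x$, and the single arc between them allows at most a one-way trip, so $p$ must hold a permanent guard in any sustainable strategy; this uses $b$ guards, leaving $l'+1$ to be distributed over $\bigcup_{i=1}^{l'} V(G_i)$, a union in which any two non-bridge blocks meet only at $x$. Writing $g_i := |D \cap V(G_i)|$ for a reachable configuration $D$, the heart of the proof is the invariant $g_i \geq 2$ for every non-bridge $G_i$. Intuitively, every non-$x$ vertex of $G_i$ has all its neighbors inside $V(G_i)$, so any guard defending an attack on such a vertex must come from inside $V(G_i)$; a persistent attacker who targets $G_i$ whenever $g_i \leq 1$ forces the defender to refill through $x$ by draining other blocks, and since $l' \geq 2$ and bridge pendants are immobile, this deficit cannot be repaired indefinitely. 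Granted this, inclusion–exclusion gives $l'+1 = \sum_i g_i - (l'-1)[x \in D]$, and combined with $g_i \geq 2$ and $l' \geq 2$ this forces $x \in D$ permanently and every $g_i = 2$, so each non-bridge block contains a unique non-$x$ guard $y_i$.

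Next I zoom in on $G_1$. Since $|V(G_1)| \geq 4$, the attacker picks some $v \in V(G_1) \setminus \{x, y_1\}$. As the only in-neighbors of $v$ present in $D$ lie among $\{x, y_1\}$, the only multimoves preserving the invariant are (a) $y_1 \mapsto v$ alone, requiring the arc $(y_1, v) \in E(H)$, or (b) the simultaneous moves $x \mapsto v$ and $y_1 \mapsto x$, requiring both $(x, v) \in E(H)$ and $(y_1, x) \in E(H)$. Any other way of refilling $x$ would have to draw a guard from outside $G_1$ and would drop some $g_j$ below $2$, violating the invariant. In both admissible cases the new non-$x$ guard in $G_1$ is $v$, so by iterating such attacks the attacker can steer $y_1$ to any vertex of $T_1 := V(G_1) \setminus \{x\}$.

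For the final contradiction, split $T_1$ into $P = \{u \in T_1 : (u, x) \in E(H)\}$ and $Q = T_1 \setminus P$. Applying the constraint to every $y_1 \in T_1$ and every $v \in T_1 \setminus \{y_1\}$: for $y_1 \in Q$ option (b) is unavailable and hence $(y_1, v) \in E(H)$ for every other vertex of $T_1$; for $y_1 \in P$ option (b) fails whenever $v \in P$, so $(y_1, v) \in E(H)$ for every other $v \in P$. Thus every element of $Q$ would beat every other vertex of $T_1$, and every element of $P$ would beat every other element of $P$; each of these tournament-style requirements is impossible as soon as the underlying set has at least two elements, forcing $|P|, |Q| \leq 1$ and so $|T_1| \leq 2$, which contradicts $|V(G_1)| \geq 4$. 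The main obstacle is justifying the invariant $g_i \geq 2$, which requires carefully chaining attacks across different blocks through $x$.
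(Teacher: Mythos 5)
There is a genuine gap, and it sits exactly where your own summary admits the difficulty lies: the invariant $g_i \geq 2$ is asserted with only an intuitive sketch ("this deficit cannot be repaired indefinitely"), but establishing it is essentially the entire content of the lemma. Read carefully, your justification is circular: you argue that any configuration with some $g_i \leq 1$ must be losing because the defender would have to "drain" other blocks through $x$, yet whether such draining can be sustained is precisely the question. The paper's proof shows that this worry is real: with $l+1$ guards the attacker \emph{forces} the defender into configurations in which two guards sit on $V(G_1)\setminus\{x\}$, $x$ is unoccupied, and every other block $G_j$ contains exactly one guard on a vertex $x_j \neq x$ --- i.e.\ $g_j = 1$ for all $j \neq 1$ --- and these configurations are perfectly good dominating sets from which the defender survives further attacks. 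Winning from them requires two ingredients you never develop: (i) the existence of a vertex $v_1 \in V(G_1)\setminus\{x\}$ such that $\{x,v_1\}$ is \emph{not} a dominating set of $H_1$, which the paper extracts from $|V(G_1)|\geq 4$ together with strong connectivity of the orientation (Proposition \ref{eternal_strongly_connected}); and (ii) a two-attack combination (attack some $y \neq x_2$ in the second block of size at least $3$, then re-attack $v_1$) showing the lone guard of $G_2$ cannot recover while $x$ stays empty. Your proof invokes neither strong connectivity nor any property of the second block beyond its existence, which is a strong sign that the missing invariant is doing all the work.

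The second half of your argument --- steering the unique non-$x$ guard of $G_1$ around $T_1$ and deriving the tournament-style contradiction on $P$ and $Q$ to force $|T_1|\leq 2$ --- is clean and would be an attractive alternative to the paper's "bad vertex $v_1$" analysis, but it is entirely conditional on the configurations having the rigid form "guard on $x$ plus exactly one guard per block off $x$," which only follows from the unproven invariant. To repair the proof you would need either to prove that every reachable configuration of a winning $(l+1)$-guard strategy satisfies $g_i \geq 2$ for all non-bridge blocks (which I believe requires an argument at least as involved as the paper's, including a termination/potential argument for why the defender cannot shuttle guards through $x$ forever), or to abandon the invariant and argue directly from the forced degenerate configurations, as the paper does.
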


\begin{proof}
Consider an orientation $H$ of $G$ and a defense of $H$ with $l+1$ guards.
Let $x$ be the universal vertex of $G$.
By Proposition \ref{eternal_strongly_connected}, we assume that $H$ is strongly connected.
Let $G_1, G_2$ be the 2-vertex-connected components of $G$ of size at least 4 and 3 respectively and $H_1, H_2$ be the corresponding subgraphs in $H$.

First, we will prove that there exists a vertex $v_1 \in H_1 \setminus \{x\}$ such that $\{x, v_1\}$ is not a dominating set of $H_1$. Assume this not the case.
We consider two cases. 1) $H_1 \setminus \{x\}$ admits no vertex with outdegree 0.
Let $v$ be a vertex of $H_1 \setminus \{x\}$ and $u$ be an out-neighbor of $v$.
Then $v$ is not dominated by $u$. Since $\{x, u\}$ dominates it, there is an edge
$(x, v)$. Thus $x$ dominates all vertices in $H_1$. That contradicts the fact that $H$ is strongly connected.
2) There is a vertex $v$ with outdegree 0 in $H_1 \setminus \{x\}$.
Thus, $v$ dominates $x$ and $x$ dominates all vertices in $H_1$ except $v$.
Let $a$ and $b$ two other vertices of $H_1$ with $(a, b) \notin E(H)$.
The attacker first chooses the vertex $a$. So the defender must moves a guard on $a$
and another guard on $x$ or $v$. Then the attacker chooses the vertex $b$.
If the defender can answer, then the second guard was necessarily on $x$.
It moves the guard on $x$ to $b$. If it moves the second guard on $v$, then $a$ is not dominated and if it moves the second guard elsewhere, $x$ is not dominated.
Notice that the guards on other components cannot participate in the defense of $H_1$.
Indeed, if such a guard  moves in $x$, then its component is not dominated.

At any time, by choosing $v_1$, since $\{x,v_1\}$ is not a dominating set of $H_1$, the attacker can force the defender to move two guards on $H_1 \setminus \{x\}$. If there are two guards on $H_1 \setminus \{x\}$, necessarily, we are in the following configuration.
Each component $H_i$ except $H_1$ contains exactly one guard on a vertex $x_i$. Since $H$ is strongly connected,
the guard on $x_i$ cannot dominate its whole component $H_i$ but it can eventually dominate 
$H_i \setminus \{ x \}$. Notice that means $x_i \neq x$ and there is no guard on $x$.
Thus, the strategy of the attacker is as follows:
first, it attacks $v_1$ so that we are in the previously described situation. Then, it chooses a vertex $y$ in $H_2$ different from $x_2$ and then it attacks $v_1$ again, which forces the defender to move two guards on $H_1 \setminus \{x\}$.
The guard on $y$ cannot come back to $x_2$ and there is no guard on $x$. Thus $H_2$ is not dominated and the attacker wins.
\end{proof}

This concludes the proof of Theorem \ref{eternal_trivially_perfect}.

By combining Theorems \ref{eternal_trivially_perfect} and \ref{oednequals2}, we obtain a full characterization of the value of $\oednm$ for trivially perfect graphs.
It is easily seen that these characterizations can be checked in linear time. So we obtain the following result.

\begin{corollary}
$\oednm$ can be computed in linear time on trivially perfect graphs.
\end{corollary}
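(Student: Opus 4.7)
The plan is to turn the two structural results (Theorem \ref{eternal_trivially_perfect} together with Theorem \ref{oednequals2}) into a linear-time decision procedure by showing that every condition involved can be checked in time $O(n+m)$.

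First I would preprocess: recognize that the input graph $G$ is trivially perfect (the Yan--Chen--Chang characterization, or a simple lexicographic BFS style algorithm, does this in linear time), and then compute its block-cut tree in linear time via Tarjan's algorithm, recording the size of each 2-vertex-connected component $G_1,\ldots,G_l$. From these data I immediately know $l$, the sizes $|V(G_i)|$, and whether the three alternatives of Theorem \ref{eternal_trivially_perfect} apply: (i) exactly one block of size $\geq 3$; (ii) every block of size $\leq 3$; (iii) at least two blocks of sizes $\geq 4$ and $\geq 3$. In cases (ii) and (iii) the theorem gives $\oednm(G)=l+1$ or $\oednm(G)=l+2$ directly, so nothing more is needed.

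The only nontrivial case is (i), where $\oednm(G)=\oednm(G_i)+l-1$ with $\oednm(G_i)\in\{2,3\}$. To decide which, I would apply Theorem \ref{oednequals2}: $\oednm(G_i)=2$ iff $G_i$ is obtained from the complete graph on its vertex set by removing a matching of size at most $\lfloor |V(G_i)|/2\rfloor$ (even case) or $\lfloor |V(G_i)|/2\rfloor-1$ (odd case). This can be tested in $O(n+m)$ time as follows: compute all degrees; reject unless every degree is $n_i-1$ or $n_i-2$ where $n_i=|V(G_i)|$; the vertices of degree $n_i-2$ must pair up as the endpoints of the removed matching, which I verify by one pass through adjacency lists marking, for each low-degree vertex, its unique non-neighbor and checking that the pairing is symmetric and yields disjoint edges; finally count these pairs and compare to the bound. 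If $G_i$ passes, output $\oednm(G_i)=2$, otherwise $\oednm(G_i)=3$. Note that since $G_i$ is almost complete in this sub-case, $m_i=\Theta(n_i^2)$, so $O(n_i+m_i)$ is indeed linear in the description length.

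The main obstacle, if any, is simply verifying carefully that the ``complete minus matching'' test is actually linear in $n+m$ of the \emph{input graph}, not quadratic; this is immediate since that test is only invoked on a single block $G_i$ which itself has $\Theta(n_i^2)$ edges, so the check is linear in its own size, hence in the size of $G$. Everything else (block decomposition, trivially-perfect recognition, degree counts, application of the three case formulas) is standard linear-time work, which yields the claimed complexity.
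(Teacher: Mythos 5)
Your proposal is correct and follows exactly the route the paper takes: combine Theorem \ref{eternal_trivially_perfect} with Theorem \ref{oednequals2} and observe that the block decomposition, the three-case test on block sizes, and the ``complete minus a matching'' test for the unique large block are all linear-time checks. The paper merely asserts that ``it is easily seen that these characterizations can be checked in linear time,'' so your write-up is simply a more explicit version of the same argument, with no substantive difference.
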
 

\subsubsection{Grids and products of graphs}
We now consider $\oedn$ on grids.
Contrary to $\edn$ (which is $\ffceil{\frac{nm}{2}}$), we think that the exact value of $\oedn$ cannot be expressed by a simple formula. We give here lower and upper bounds.

The following proposition has been verified using a computer.

\begin{proposition}\label{computer_proven}
$\oedn(P_3 \square P_3) = 7$.
\end{proposition}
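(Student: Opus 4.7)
The plan is to establish both directions separately, inevitably with computer assistance for at least one direction, since the statement is presented as machine-verified.

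For the upper bound $\oedn(P_3 \square P_3) \le 7$, I would exhibit a concrete orientation $H$ of the nine-vertex grid together with a defender strategy using $7$ guards (that is, only two empty squares at any time). A natural candidate is to orient each of the four facial $4$-cycles of the grid as a directed $4$-cycle, choosing the rotation directions on the two pairs of opposite faces so that the two orientations agree on every shared internal edge. Since each facial $4$-cycle is then defendable with $3$ guards by Theorem \ref{edn_cycle}, one hopes that an invariant of the form ``each face has at least one unoccupied vertex on its outgoing side, with the global configuration coming from a small explicit list'' can be maintained under every attack. The verification is a short finite case analysis on which of the two empty vertices is attacked and which cycle is used to restore the invariant.

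For the lower bound $\oedn(P_3 \square P_3) \ge 7$, Proposition \ref{eternal_alpha_olpha} gives only $\oedn \ge \oalpha \ge \alpha(P_3\square P_3)+1 = 6$, so one additional unit must be squeezed out orientation-by-orientation. The plan is a finite search: the grid has $12$ edges, hence $2^{12}=4096$ orientations, reduced by the dihedral symmetry group $D_4$ of the grid to a few hundred representatives. For each representative $H$, compute $EDS(H)$ by the standard greatest-fixed-point algorithm from the definition: start with all $\binom{9}{6}=84$ size-$6$ subsets, iteratively discard any subset $S$ for which some attack $r\in V\setminus S$ admits no in-neighbour $v\in S$ with $S\cup\{r\}\setminus\{v\}$ still in the surviving family, and stop at the fixed point. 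Declare $\edn(H)\le 6$ iff this fixed point is non-empty; confirming that every orientation fails yields $\oedn(P_3\square P_3)\ge 7$.

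The main obstacle is the lower bound: a purely human proof seems hard because the gap between $\oalpha$ and $\oedn$ cannot be controlled by the inequalities in Theorem \ref{eternal-seq-ineqs-digraph} alone, and the grid has too little symmetry to trivially rule out every cyclic orientation simultaneously. If one insisted on a hand proof, the most promising route would be a case split on the orientations of the four facial $4$-cycles (each either oriented or acyclic), using Lemma \ref{digraph-subgraph} and Corollary \ref{eternal_acyclic} to dispose of orientations with large acyclic induced subgraphs, and then examining the few remaining ``highly cyclic'' orientations directly; but in practice the exhaustive search described above is both simpler and what the authors indicate was done.
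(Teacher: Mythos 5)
The paper offers no human-readable proof here: the proposition is stated as ``verified using a computer,'' with Figure~\ref{fig-edn-grid33} displaying the unique optimal orientation. Your overall plan (exhibit a $7$-guard witness for the upper bound, exhaust the $2^{12}$ orientations modulo $D_4$ for the lower bound, computing $EDS$ as a greatest fixed point over the $\binom{9}{6}$ size-$6$ configurations, which suffices since defender moves preserve cardinality) is therefore in the same spirit as what the authors did. However, your concrete candidate for the witness orientation is provably wrong, and this is a genuine gap rather than a matter of unfinished case analysis.

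If all four facial $4$-cycles are to be directed cycles, the rotations of adjacent faces must be opposite (the shared edge would otherwise receive conflicting directions), and since the face-adjacency graph is a $4$-cycle, the only consistent choice is the checkerboard assignment, i.e.\ exactly the one you describe. Call the resulting orientation $H_0$. In $H_0$ every arc not on the outer boundary is incident to the centre vertex, and the boundary $8$-cycle is \emph{not} a directed cycle: each non-corner boundary vertex is either a source or a sink of the two boundary arcs at it. Hence deleting the centre vertex leaves an acyclic induced subgraph on $8$ vertices, so $\alpha(H_0)\geq 8$ and, by Theorem~\ref{eternal-seq-ineqs-digraph}, $\edn(H_0)\geq 8$. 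No $7$-guard strategy exists for $H_0$, so no invariant of the kind you hope for can be maintained. The orientation that actually achieves $7$ (unique up to isomorphism, per Figure~\ref{fig-edn-grid33}) instead directs the outer boundary as a single directed $8$-cycle and gives the centre two in-arcs and two out-arcs to the edge-midpoints; for that orientation one checks $\alpha=7$ (one must delete the centre to kill the two directed $4$-cycles through it and additionally a boundary vertex to kill the $8$-cycle), and a $7$-guard defence can then be verified by the finite case analysis you envisage. Your fallback hand-proof sketch for the lower bound would need the same correction, since it implicitly treats ``all four faces directed'' as the extremal highly cyclic case, whereas it is in fact a losing orientation for the defender.
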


The unique orientation $H$ with $\edn(H) = 7$ (up to isomorphism) is shown in Figure \ref{fig-edn-grid33}.
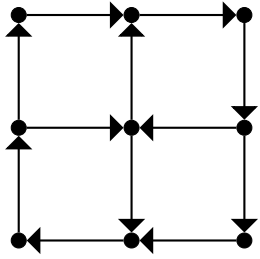
\begin{figure}
\begin{center}
\begin{tikzpicture}[vertex/.style={circle,draw,fill,inner sep=2pt}]
  \foreach \x in {0,...,2}
    \foreach \y in {0,...,2} 
       \node [vertex]  (\x\y) at (1.5*\x,1.5*\y) {};
  
   \draw[-triangle 90,thick] (00)->(01);
   \draw[-triangle 90,thick] (01)->(02);
   \draw[-triangle 90,thick] (02)->(12);
   \draw[-triangle 90,thick] (12)->(22);
   \draw[-triangle 90,thick] (22)->(21);
   \draw[-triangle 90,thick] (21)->(20);
   \draw[-triangle 90,thick] (20)->(10);
   \draw[-triangle 90,thick] (10)->(00);
   \draw[-triangle 90,thick] (01)->(11);
   \draw[-triangle 90,thick] (11)->(10);
   \draw[-triangle 90,thick] (21)->(11);
   \draw[-triangle 90,thick] (11)->(12);  
\end{tikzpicture}
\end{center}
\caption{An orientation $H$ of $P_3 \square P_3$ with $\edn(H) = 7$}
\label{fig-edn-grid33}
\end{figure}

We will show that $\frac{2nm}{3} \leq \oedn(P_n \square P_m) \leq \frac{7nm}{9} + O(n+m)$.
The next two theorems give more precise bounds.

\begin{theorem}\label{edn_grid_low}
$$\ffceil{\frac{n}{2}}m + \fffloor{\frac{m}{2}} \ffceil{\frac{n}{3}} \leq \oedn(P_n \square P_m).$$
\end{theorem}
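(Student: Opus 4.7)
The plan is to apply the inequality $\oedn(G) \geq \oalpha(G)$ implied by Theorem~\ref{eternal-seq-ineqs-digraph}, reducing the statement to the lower bound
$\oalpha(P_n \square P_m) \geq \ffceil{n/2}m + \fffloor{m/2}\ffceil{n/3}$. Concretely, for every orientation $H$ of $P_n \square P_m$ I would exhibit an induced acyclic subgraph $H[S]$ of the stated size.

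I would build $S$ in two layers. The first layer is $S_0 = \{(i,j) : i \text{ odd}\}$: consecutive odd rows are at grid-distance~$2$, so $S_0$ is independent, $H[S_0]$ is vacuously acyclic, and $|S_0| = \ffceil{n/2}m$. For the second layer, I partition the columns into $\fffloor{m/2}$ consecutive pairs and select $\ffceil{n/3}$ well-spaced even rows (for instance one per window of three consecutive rows, so that selected even rows are not adjacent to a common odd row more than necessary). This yields $\fffloor{m/2}\ffceil{n/3}$ small rectangular regions, each determined by one column pair and one chosen even row. In each region I include exactly one of the two candidate even-row vertices; the choice is made adaptively from $H$ so that the two arcs from the inserted "bridge" vertex to its two bordering odd-row neighbors, combined with the local row arcs in $H$, do not close a directed $4$-cycle.

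The main obstacle is to prove global acyclicity of $H[S]$. A single bridge creates at most one new undirected cycle with the two odd-row paths it connects, but multiple bridges between the same pair of odd rows produce longer undirected cycles that can chain across several regions. I would process the regions in lexicographic order on (row group, column pair) and prove by induction that, after each region is treated, the partial induced subgraph admits a topological ordering consistent with $H$; the local case analysis on the orientation of the at most four edges incident to the candidate vertices is used to pick a column consistent with that ordering. The key slack that makes the inductive step go through is that only $\ffceil{n/3}$ out of the $\fffloor{n/2}$ available even rows are used, so each pair of consecutive odd rows is bridged sparsely enough that previously committed choices never force a directed cycle at the current region.
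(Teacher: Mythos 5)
Your reduction to $\alpha(H)\geq \ffceil{\frac{n}{2}}m+\fffloor{\frac{m}{2}}\ffceil{\frac{n}{3}}$ for every orientation $H$, and your first layer (all odd rows), coincide with the paper. The gap is in the second layer, and it is fatal to the scheme as described: it is simply not true that in a given even row one can always pick one vertex from each consecutive column pair and keep $H[S]$ acyclic. Already for $m=4$, orient row $i-1$ right-to-left, row $i+1$ left-to-right, and choose the vertical arcs so that $v_{i,1},v_{i,2}$ are downward passages ($v_{i-1,j}\to v_{i,j}\to v_{i+1,j}$) while $v_{i,3},v_{i,4}$ are upward passages ($v_{i+1,j}\to v_{i,j}\to v_{i-1,j}$). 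The pair $\{1,2\}$ forces you to insert a downward passage at some column $a\leq 2$ and the pair $\{3,4\}$ an upward passage at some column $b\geq 3$, and then $v_{i-1,a}\to v_{i,a}\to v_{i+1,a}\to\cdots\to v_{i+1,b}\to v_{i,b}\to v_{i-1,b}\to\cdots\to v_{i-1,a}$ is a directed cycle for each of the four possible choices, using only vertices that are unconditionally in $S$. By alternating the directions of the odd rows and mirroring the $A/B$ pattern, every even row can be made bad simultaneously. So the ``slack'' you invoke --- using only $\ffceil{n/3}$ of the $\fffloor{n/2}$ even rows --- is aimed at the wrong place: the obstruction lives inside a single even row together with its two flanking odd rows, which are fully in $S$ regardless of which rows you select, and no processing order or topological-ordering induction can recover from a configuration in which no admissible choice exists.

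The paper's proof puts the slack in the columns rather than in the rows: in \emph{every} even row it classifies the vertices into downward passages $A$, upward passages $B$, and vertical sources/sinks $C$, and argues that one can always add about a third of the row --- either all of $A$ (all crossings of that row then go the same way, so no directed cycle can return), or all of $B$, or every other element of $A\cup C$ in column order (the added vertices are then pairwise non-adjacent and none of them permits an upward crossing). Your one-half density per row is exactly what the counterexample above rules out, so matching the second term of the statement by your ``few rows, each half-full'' decomposition cannot work; note also that the paper's construction yields $\fffloor{n/2}$ rows each contributing about $\ffceil{m/3}$ vertices, which is not literally the product $\fffloor{m/2}\ffceil{n/3}$ appearing in the statement (the statement and proof swap the roles of $n$ and $m$ at this point), so if you pursue this theorem you should first pin down which count is actually being proved.
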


\begin{proof}
Consider the graph $G = P_n \square P_m$ with $n$ lines and $m$ columns.
We denote by $v_{i,j}$ the vertex at line $i$ and column $j$.
To prove the lower bound, we will show that
$\alpha(H) \geq \ffceil{\frac{n}{2}}m + \fffloor{\frac{m}{2}} \ffceil{\frac{n}{3}}$ for any orientation $H$ of $G$.
We construct a set $S$ such that $H[S]$ is acyclic as follows.
First, we put odd lines vertices into  $S$.
Obviously, $H[S]$ is acyclic and $|S| = \ffceil{\frac{n}{2}}m$.
Now, we will show that one can add $\ffceil{\frac{n}{3}}$ vertices for each even line and $H[S]$ remains acyclic.
Let $i \in [n]$ be even.
If $i = n$, clearly, we can add to $S$ the vertices $v_{i,j}$
with $j$ odd without creating cycles.
Assume that $i < n$.
We split the vertices in the line $i$ into 3 types.
$A$ is the set of vertices $v_{i,j}$ where
$(v_{i-1,j},v_{i,j})$ and $(v_{i,j},v_{i+1,j})$ are arcs of $H$.
$B$ is the set of vertices $v_{i,j}$ where
$(v_{i+1,j},v_{i,j})$ and $(v_{i,j},v_{i-1,j})$ are arcs of $H$.
$C$ is the set of vertices $v_{i,j}$ that does not belong to $A$ or $B$.

There are three cases.\\
case 1: $|A| \geq \ffceil{\frac{n}{3}}$. We add all vertices of type $A$ to $S$ and we don't create cycles.\\
case 2: $|B| \geq \ffceil{\frac{n}{3}}$. Similar to case 1.\\
case 3: $|A| < \ffceil{\frac{n}{3}}$ and $|B| < \ffceil{\frac{n}{3}}$.
Without loss of generality, we assume that $|A| \geq |B|$.
We construct a set $D$ from $A \cup C$ by picking one vertex out of two in the ordered sequence of vertices of $A \cup C$.
Since $|A \cup C| \geq \ffceil{\frac{2n}{3}}$,
we have $D \geq \ffceil{\frac{n}{3}}$.
We add every vertex of $D$ in $S$.
The only manner to create cycles with elements of $A \cup C$ is to choose two consecutive elements but it is not possible by construction of $D$.
\end{proof}

\begin{theorem}\label{edn_grid_up}
For $m=3p+2x$ and $n=3q+2y$ with $p,q\in \mathbb{N}$ and $x,y \in \{0,1,2\}$, we have:
$$\oedn(P_n \square P_m) \leq 7pq + \ffceil{\frac{9p}{2}}y + \ffceil{\frac{9q}{2}}x +  3xy$$
\end{theorem}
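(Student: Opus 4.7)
The plan is to exhibit a concrete orientation $H$ of $G = P_n \square P_m$ together with a defending strategy using the stated number of guards. Writing $n = 3q+2y$ and $m = 3p+2x$, partition the grid into four rectangular pieces: a \emph{main} region of size $3q \times 3p$, a \emph{tall} strip of size $3q \times 2x$, a \emph{wide} strip of size $2y \times 3p$, and a \emph{corner} of size $2y \times 2x$. Each piece is in turn tiled by smaller blocks which will be oriented and defended completely independently of each other, with inter-block edges oriented arbitrarily. Since the strategy within each block uses only intra-block arcs, the total number of guards is additive across blocks and therefore bounds $\edn(H)$.

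The tilings are as follows. The main region is tiled by $pq$ copies of $P_3 \square P_3$, each oriented as in Figure \ref{fig-edn-grid33} and handled with $7$ guards (Proposition \ref{computer_proven}). The corner is tiled by $xy$ copies of $C_4$, each oriented cyclically and handled with $3$ guards (Theorem \ref{edn_cycle}). The tall strip splits into $x$ vertical ladders of size $3q \times 2$, and each such ladder is tiled, by grouping consecutive pairs of rows, into $\lfloor 3q/2 \rfloor$ $2\times 2$ blocks (each oriented as $C_4$) plus, when $q$ is odd, one trailing $P_2$ handled with $2$ guards (Corollary \ref{eternal_acyclic}). The wide strip is tiled analogously by $y$ horizontal ladders of size $2 \times 3p$.

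A short parity check shows that each $3q \times 2$ ladder uses exactly $3\lfloor 3q/2 \rfloor + 2\cdot[q\text{ odd}] = \ffceil{9q/2}$ guards, and symmetrically each $2 \times 3p$ ladder uses $\ffceil{9p/2}$. Summing the four contributions gives $7pq + \ffceil{9q/2}\,x + \ffceil{9p/2}\,y + 3xy$, as required. The argument is essentially a tiling and bookkeeping exercise: the three block-level $\edn$-values ($7$ for $P_3 \square P_3$, $3$ for $C_4$, $2$ for $P_2$) are all established earlier in the paper, so the only delicate point is verifying the parity computation and checking that the tilings really cover every cell of the grid for each possible choice of $x, y \in \{0, 1, 2\}$.
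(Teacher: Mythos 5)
Your proposal is correct and follows essentially the same route as the paper: divide the grid into the four rectangular parts, tile them with $3\times 3$ blocks (7 guards each by Proposition \ref{computer_proven}), $2\times 2$ blocks oriented as $C_4$ (3 guards each by Theorem \ref{edn_cycle}), and a 2-guard leftover per odd strip, then sum. The only cosmetic difference is that you handle the odd leftover in each ladder as a single $P_2$ rather than two $1\times 1$ cells, and you spell out the parity computation $3\lfloor 3q/2\rfloor + 2\cdot[q\text{ odd}] = \ffceil{9q/2}$ that the paper leaves implicit.
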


\begin{proof}
We divide the grid into 4 parts of size $3p \times 3q$, $3p \times 2y$, $2x \times 3q$ and $2x \times 2y$ respectively.
We already know from Proposition \ref{computer_proven} that $\oedn(P_3 \square P_3)=7$ and from Theorem \ref{cycles} that $\oedn(P_2 \square P_2)=3$.
Thus, the grid of size $3p \times 3q$ can be protected with $7pq$ guards
by dividing it into squares of size $3 \times 3$.
Similarly, the grid of size $2x \times 2y$ can be protected with $3xy$ guards.
The two remaining parts can be covered by squares of size $2 \times 2$ and $1 \times 1$ and can therefore be protected with $\ffceil{\frac{9p}{2}}y$ and $\ffceil{\frac{9q}{2}}x$ guards respectively.
\end{proof}

For grids of size $2 \times n$, $3 \times n$ and $4 \times n$, the lower bound of Theorem \ref{edn_grid_low} and the upper bound of Theorem \ref{edn_grid_up} coincide and we have the exact value of $\oedn$.

\begin{corollary}
Let $n \geq 2$. Then,\\
$\oedn(P_2 \square P_n) = \ffceil{\frac{3n}{2}}$,\\
$\oedn(P_3 \square P_n) = \ffceil{\frac{7n}{3}}$,\\
$\oedn(P_4 \square P_n) = 2 \ffceil{\frac{3n}{2}}$.
\end{corollary}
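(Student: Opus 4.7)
The plan is to apply Theorem \ref{edn_grid_low} for the lower bound and Theorem \ref{edn_grid_up} for the upper bound to each of the three graphs $P_2 \square P_n$, $P_3 \square P_n$, $P_4 \square P_n$, and verify that in each case the two bounds coincide, thereby pinning down $\oedn$ exactly.

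For the upper bound, I would write $n = 3p + 2x$ with $p \in \mathbb{N}$ and $x \in \{0, 1, 2\}$, and decompose the second factor by $2 = 3\cdot 0 + 2\cdot 1$, $3 = 3\cdot 1 + 2\cdot 0$, and $4 = 3\cdot 0 + 2\cdot 2$, reading off the corresponding $(q, y)$ in each case. Substituting into the formula $7pq + \ffceil{9p/2}\,y + \ffceil{9q/2}\,x + 3xy$ of Theorem \ref{edn_grid_up} yields respectively $\ffceil{9p/2} + 3x$, $7p + 5x$, and $2\ffceil{9p/2} + 6x$. A direct case analysis on $x \in \{0, 1, 2\}$ (combined with the parity of $p$ in the ceiling $\ffceil{9p/2}$ for $m \in \{2,4\}$) then verifies that these expressions equal $\ffceil{3n/2}$, $\ffceil{7n/3}$, and $2\ffceil{3n/2}$ respectively.

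For the lower bound, since $P_m \square P_n \cong P_n \square P_m$, I would apply Theorem \ref{edn_grid_low} in the parameter ordering that yields the sharper bound. For $m \in \{2, 4\}$, take $n$ as the number of rows; for $m = 3$, take $3$ as the number of rows. In each case, a case analysis on the parity of $n$ (and on $n \bmod 3$ where $\ffceil{n/3}$ appears) shows the lower bound matches the already-computed upper bound.

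The main difficulty is not conceptual but is purely bookkeeping through the parity subcases. The one subtle choice is the parameter ordering in Theorem \ref{edn_grid_low}: for $m = 2$, applying the naive assignment produces a lower bound that is too weak by one in the odd case, so the swapped ordering must be used instead.
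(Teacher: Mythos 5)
Your overall strategy is exactly the paper's: the corollary is obtained purely by checking that the bounds of Theorems~\ref{edn_grid_low} and~\ref{edn_grid_up} coincide, and your upper-bound computations ($\ffceil{9p/2}+3x$, $7p+5x$ and $2\ffceil{9p/2}+6x$, each simplifying to the stated value) are correct. The gap is in the lower bound: with Theorem~\ref{edn_grid_low} \emph{as printed}, namely $\ffceil{n/2}\,m+\fffloor{m/2}\ffceil{n/3}$, neither ordering gives the claimed values, so the case analysis you assert would close does not. For $P_2\square P_n$ your swapped ordering yields $2\ffceil{n/2}+\ffceil{n/3}$, which equals $\ffceil{3n/2}$ only for $n\in\{2,3,4,5,7\}$; already at $n=6$ it gives $8<9$, and at $n=9$ both orderings give $13$ while the target is $14$. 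Similarly, for $P_4\square P_9$ both orderings give $26$ against a target of $28$. Worse, for $P_3\square P_n$ the ordering with $3$ lines gives $2n+\fffloor{n/2}$, which for $n\ge 8$ \emph{exceeds} the upper bound $\ffceil{7n/3}$ --- so the printed inequality cannot even be correct as stated.

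The source of the trouble is that the statement of Theorem~\ref{edn_grid_low} is inconsistent with its own proof: the proof places the $\ffceil{n/2}$ odd lines (each of $m$ vertices) into $S$ and then adds roughly a third of the $m$ vertices of each of the $\fffloor{n/2}$ even lines, so what is actually established is $\ffceil{n/2}\,m+\fffloor{n/2}\ffceil{m/3}\le\oedn(P_n\square P_m)$, not the printed $\ffceil{n/2}\,m+\fffloor{m/2}\ffceil{n/3}$. With that corrected bound, the orderings you chose do work and give exactly $2\ffceil{n/2}+\fffloor{n/2}=\ffceil{3n/2}$, $2n+\ffceil{n/3}=\ffceil{7n/3}$ and $4\ffceil{n/2}+2\fffloor{n/2}=2\ffceil{3n/2}$, with no residual parity cases at all. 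Your diagnosis that the naive ordering for $m=2$ is ``too weak by one in the odd case'' is itself a symptom of computing with the misprinted formula (under the corrected one the naive ordering is off by $\ffceil{n/2}-\ffceil{n/3}$, which is unbounded); the proof cannot be completed until the theorem statement is repaired.
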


We don't know the value of $\oedn(P_5 \square P_5)$.
Using Theorems \ref{edn_grid_low} and \ref{edn_grid_up},
we obtain $19 \leq \oedn(P_5 \square P_5) \leq 20$.

We now study $\oednm$ on various kinds of grids.

\begin{theorem}
For every $n \geq 2$ and $m \geq 2$, we have
$$\oednm(P_n \square P_m) \leq \ffceil{\frac{nm}{2}}.$$
\end{theorem}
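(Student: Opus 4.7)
The plan is to produce an orientation of $P_n \square P_m$ together with a defensive strategy using $\lceil nm/2 \rceil$ guards, by combining Hamiltonian-cycle constructions in grids with the cycle rotation strategy already used in the proof of Theorem~\ref{cycles}.

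If $nm$ is even, I will invoke the classical fact that $P_n \square P_m$ admits a Hamiltonian cycle $C$ (produced by the standard snake construction whenever at least one of $n,m$ is even and both are $\geq 2$). I orient the edges of $C$ cyclically and the remaining edges arbitrarily. Placing $nm/2$ guards on alternate vertices of $C$ and, in response to any attack, rotating all guards one step along $C$, defends every attack exactly as in the cycle case of Theorem~\ref{cycles}.

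If $n$ and $m$ are both odd, then $n,m \geq 3$, and I partition $V(P_n \square P_m)$ into a $3 \times 3$ corner block $A$, a $3 \times (m-3)$ block $B$ (empty when $m=3$), and an $(n-3) \times m$ block $C$ (empty when $n=3$). Since $B$ and $C$ each have an even side, the first case supplies orientations defended internally by $3(m-3)/2$ and $(n-3)m/2$ guards respectively. For $A$, I single out the corner $v^* = (1,1)$ and observe that the $8$ remaining vertices of $A$ admit the Hamiltonian cycle
$(1,2),(2,2),(2,1),(3,1),(3,2),(3,3),(2,3),(1,3)$; I orient this cycle cyclically, orient the two edges incident to $v^*$ arbitrarily, and place $4$ alternating guards on the cycle plus one permanent guard on $v^*$, giving a strategy on $A$ with $5$ guards. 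The bridge edges between blocks may be oriented arbitrarily, since each block's strategy uses only its internal edges. Summing,
\[
5 + \frac{3(m-3)}{2} + \frac{(n-3)m}{2} = \frac{nm+1}{2} = \left\lceil \frac{nm}{2} \right\rceil.
\]

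I expect the main obstacle to be checking that each block's cycle rotation strategy dominates and defends independently of the other blocks. This follows because in a cyclically oriented $C_k$ every alternating placement is a dominating set (each unguarded vertex has its unique in-cycle predecessor occupied) and advancing all guards one step along the cycle preserves this property; in $A$ the guard on $v^*$ is never asked to move, so its fixed presence is consistent with the rotation on the $8$-cycle. Together these give the required upper bound.
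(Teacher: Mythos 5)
Your proof is correct and rests on the same two ideas as the paper's: orient a Hamiltonian cycle cyclically and rotate alternating guards (your even case is identical to the paper's), and in the odd case park one guard permanently on a corner so that the remaining $nm-1$ vertices can be defended by $(nm-1)/2$ guards, giving $\ffceil{\frac{nm}{2}}$ in total. The only real divergence is how the odd case obtains its Hamiltonian structure: the paper invokes, without proof, the fact that $P_n \square P_m$ minus a corner vertex is Hamiltonian when $n$ and $m$ are both odd, whereas you decompose the grid into a $3\times 3$ corner block and two blocks each having an even side, exhibit the $8$-cycle on the corner block minus its corner explicitly, and sum the three block strategies (the arithmetic $5+\frac{3(m-3)}{2}+\frac{(n-3)m}{2}=\frac{nm+1}{2}$ checks out). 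Your version is more self-contained, needing only the standard snake Hamiltonian cycle for grids with an even side, at the cost of the extra -- but easy -- verification that independently defended blocks compose into a defense of the whole graph; the paper's version is shorter but leans on a Hamiltonicity fact for odd-by-odd grids minus a corner that it does not justify.
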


\begin{proof}
If $n$ or $m$ are even, $P_n \square P_m$ admits an hamiltonian cycle and thus
$\oednm(P_n \square P_m) \leq \frac{nm}{2}$ by Theorem \ref{edn_cycle}.
Otherwise, $P_n \square P_m$ admits an hamiltonian cycle if we remove a corner vertex.
If we keep a guard on the corner vertex and defend the remaining vertices with $\frac{nm-1}{2}$ vertices, we obtain the desired bound.
\end{proof}

We don't have lower bounds except the straightforward bound $\ffceil{\frac{mn}{4}}$.  
On the other hand, the upper bound seems loose but we have verified, using a computer, that $\oednm(P_n \square P_m) = \ffceil{\frac{nm}{2}}$
for every $n$ and $m$ between 2 and 5. No counterexample has been found for other values.
We lack tools to find tight lower bounds.

We now consider upper bounds on $\oednm$ for toroidal grids, rook's graphs, toroidal kings grid and toroidal hypergrids. We present a general method based on the neighborhood-equitable coloring, a notion we introduce.

\begin{definition}
Let $k$ and $l$ be two integers and $G$ be a $(k-1)l$-regular graph.
A $(k, l)$-NE coloring of $G$ is a proper coloring $(V_i, \ldots, V_k)$ of $G$ with $k$ colors such that
for every vertex $v$ and color $i$ such that $v \notin V_i$, we have $|N(v) \cap V_i| = l$.
\end{definition}

\begin{theorem}\label{eternal_necoloring}
Let $G = (V, E)$ be a graph that admits a $(k, 2l)$-NE coloring.
Then $\oednm(G) \leq \frac{n}{k}$.
\end{theorem}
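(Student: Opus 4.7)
The plan is to construct a specific orientation of $G$ and a defense strategy in which the $n/k$ guards always occupy one full color class of the $(k,2l)$-NE coloring, moving between color classes along pre-chosen perfect matchings.

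First I would establish that each color class has the same size. Since every vertex of $V_i$ has exactly $2l$ neighbors in $V_j$ (and vice versa) for $i\neq j$, double-counting the edges of the bipartite subgraph $G[V_i,V_j]$ gives $2l|V_i|=2l|V_j|$, so $|V_i|=n/k$ for all $i$. In particular, $V_1$ has size $n/k$ and, being a color class of a proper coloring, is an independent dominating set of $G$, so it is a candidate initial guard placement.

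Next I would orient $G$ as follows. For every pair $i<j$, the bipartite subgraph $G[V_i,V_j]$ is $2l$-regular, hence by König's edge-coloring theorem it decomposes into $2l$ edge-disjoint perfect matchings $M^{ij}_1,\ldots,M^{ij}_{2l}$. I would orient the first $l$ of these from $V_i$ to $V_j$, and the last $l$ from $V_j$ to $V_i$. Edges inside a color class do not exist, since the coloring is proper. Call the resulting orientation $H$. The key property of $H$ is that for any ordered pair $(i,j)$ with $i\neq j$, there exist $l$ perfect matchings between $V_i$ and $V_j$ whose arcs are all oriented from $V_i$ to $V_j$.

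Now I would describe the strategy. The defender places the $n/k$ guards on $V_1$ and maintains the invariant that at every step the guard set equals some full color class $V_i$. Suppose the guards occupy $V_i$ and the attacker attacks a vertex $r\in V_j$; note $j\neq i$ since $V_i$ is already fully occupied. Pick any one of the $l$ perfect matchings between $V_i$ and $V_j$ that is oriented from $V_i$ to $V_j$, and define the multimove $f$ that sends each guard in $V_i$ to its partner in $V_j$ along this matching. Then $f$ is a bijection from $V_i$ to $V_j$, every image is reached via a legal arc, and $r\in V_j=f(V_i)$, so the attack is defended; the new configuration $V_j$ again satisfies the invariant. Since the strategy can be repeated indefinitely, $V_1 \in MEDS(H)$ and $\oednm(G)\leq \ednm(H)\leq |V_1|=n/k$.

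The only nontrivial ingredient is the matching decomposition in the second paragraph, which is a direct application of König's theorem to the $2l$-regular bipartite graph $G[V_i,V_j]$; everything else is bookkeeping around the invariant. The reason the hypothesis asks for a $(k,2l)$-NE coloring rather than a $(k,l)$-NE coloring is precisely so that we have an even number of matchings between each pair of classes and can split them evenly between the two orientations, thereby guaranteeing the existence of the oriented matchings used above for every ordered pair $(i,j)$.
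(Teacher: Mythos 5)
Your proof is correct and follows the same overall strategy as the paper: orient the bipartite graph between each pair of color classes so that a full perfect matching is available in each direction, and have the guards occupy an entire color class at all times, jumping from class to class along an oriented matching. Where you differ is in the single technical step that produces the oriented matchings. The paper first takes an Eulerian orientation of each (necessarily Eulerian) component of the $2l$-regular bipartite graph $G[V_i \cup V_j]$, obtaining an orientation in which every vertex has in- and out-degree $l$ across the pair, and then invokes Hall's theorem on the resulting $l$-regular bipartite "out-arc" graph to extract a perfect matching from $V_i$ to $V_j$. You instead apply König's edge-coloring theorem to decompose $G[V_i,V_j]$ into $2l$ perfect matchings up front and orient $l$ of them each way, which yields the balanced orientation and the required oriented matchings in one stroke, eliminating the separate Hall step. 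You also make explicit a point the paper leaves implicit but relies on, namely that the regularity forces all color classes to have size exactly $n/k$ (so that the guard set has the right cardinality no matter which class it occupies). Both routes are standard and of comparable depth; yours is marginally more self-contained, the paper's generalizes more readily to situations where one only needs a balanced orientation rather than an explicit matching decomposition.
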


\begin{proof}
Consider a $(k, 2l)$-NE coloring $(V_1, \ldots V_{k})$ of $G$.
Let $G_{ij}$ be the subgraph of $G$ induced by $V_i \cup V_j$.
By construction, $G_{ij}$ is a $2l$-regular bipartite graph.
We orientate $G_{ij}$ such that the indegree and outdegree of every vertex is $l$.
Indeed, each component of $G_{ij}$ is eulerian. So we can orientate each component to obtain
eulerian orientations.
We do this for every distinct $i$ and $j$ in $[k]$ and we obtain an orientation $H$ of $G$.
Let us prove that $\ednm(H) \leq \frac{n}{k}$.
We initially put all guards in an arbitrary color class $V_i$.
If a vertex $v \in V_j$ is attacked, we move all guards from $V_i$ to $V_j$.
Indeed, consider the graph $B_{ij}$ with vertices $V_i \cup V_j$ and where we put an edge between two vertices $u \in V_i$ and $v\in V_j$ iff $(u, v) \in E(H)$.
$B_{ij}$ is $l$ regular by construction.
Thus, by application of Hall's marriage theorem \cite{hall1935representatives}, $B_{ij}$ admits a perfect matching between $V_i$ and $V_j$.
Consequently, there is a multimove from $V_i$ to $V_j$ in $H$.
\end{proof}

Products of graphs admit this nice property.

\begin{theorem}\label{eternal_necoloring_product}
Let $G_1$ be a graph that admits a $(k, l_1)$-NE coloring
and $G_2$ be a graph that admits a $(k, l_2)$-NE coloring.
Then, $G_1 \square G_2$ admits a $(k, l_1+l_2)$-NE coloring.
\end{theorem}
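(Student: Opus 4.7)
The plan is to build an NE coloring of $G_1 \square G_2$ from the two given NE colorings by taking, modulo $k$, the sum of the two color indices. Concretely, let $(V_1^1,\ldots,V_k^1)$ be a $(k,l_1)$-NE coloring of $G_1$ and $(V_1^2,\ldots,V_k^2)$ be a $(k,l_2)$-NE coloring of $G_2$, and let $c_i(u)$ denote the color of $u$ in $G_i$. Define the coloring $c$ of $G_1 \square G_2$ by $c(u_1,u_2) = c_1(u_1) + c_2(u_2) \bmod k$. Before checking the NE property, I would note that $G_1 \square G_2$ is $(k-1)(l_1+l_2)$-regular, since each $G_i$ is $(k-1)l_i$-regular by hypothesis and degrees add in the Cartesian product.

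Next I would verify that $c$ is a proper coloring. Two adjacent vertices of $G_1 \square G_2$ agree in exactly one coordinate, say the second; then their colors in $G_1$ differ (properness of the coloring of $G_1$), so their images under $c$ differ as well. The symmetric case is identical.

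For the NE condition, I would fix a vertex $(u_1,u_2)$ of color $c(u_1,u_2)$ and a target color $c' \not\equiv c_1(u_1)+c_2(u_2) \pmod{k}$. The neighbors of $(u_1,u_2)$ split into two groups: those of the form $(v_1,u_2)$ with $v_1 \in N_{G_1}(u_1)$, and those of the form $(u_1,v_2)$ with $v_2 \in N_{G_2}(u_2)$. A neighbor $(v_1,u_2)$ has color $c'$ iff $c_1(v_1) \equiv c'-c_2(u_2) \pmod{k}$. Since $c' \not\equiv c_1(u_1)+c_2(u_2)$, the target color class $c'-c_2(u_2)$ is not the one containing $u_1$, and the $(k,l_1)$-NE property of $G_1$ gives exactly $l_1$ such neighbors. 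Symmetrically, exactly $l_2$ neighbors of the second type carry color $c'$. Summing yields $l_1+l_2$, which is what the $(k,l_1+l_2)$-NE property requires.

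There is no real obstacle here: the construction is essentially forced once one observes that in the Cartesian product each edge lies in a single ``direction'', so the contribution from each factor can be counted independently. The only thing worth being careful about is that the forbidden color class in each factor really is forbidden (i.e.\ that $c'-c_2(u_2) \neq c_1(u_1)$ and $c'-c_1(u_1) \neq c_2(u_2)$), but both reduce to the single assumption $c' \neq c(u_1,u_2)$, so the argument goes through cleanly.
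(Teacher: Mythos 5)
Your proposal is correct and uses exactly the construction in the paper: color $(u_1,u_2)$ with $c_1(u_1)+c_2(u_2) \bmod k$ and count the $l_1$ neighbors in the $G_1$-direction and the $l_2$ neighbors in the $G_2$-direction separately. Your write-up is in fact slightly more careful than the paper's, since you explicitly check properness and note that the forbidden class in each factor is the right one.
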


\begin{proof}
We assume that the  vertices of $G_1$ and $G_2$ are colored with integers chosen in the set $[0, k-1]$.
Let $v_1, \ldots, v_n$ be the vertices of $G_1$ and $u_1, \ldots, u_m$ be the vertices of $G_2$.
For a vertex $v_i$ of $G_1$ and a vertex $u_j$ of $G_2$, we denote by $w_{i, j}$
the vertex associated to $(v_i, u_j)$ in  $G_1 \square G_2$.
Let $p$ be the color of $v_i$ and $q$ be the color of $u_j$.
Then, we assign to $w_{i, j}$ the color $r = p+q \mod k$.
Let $r'$ be a color different from $r$.
Then $w_{i, j}$ has exactly $l_1$ (resp. $l_2$) neighbors $w_{i', j'}$ of color $r'$ with $j = j'$ (resp. $i = i'$) and thus $l1+l2$ neighbors of color $r'$.
\end{proof}

This notion of coloring has direct consequences on toroidal grids (i.e. cartesian products of two cycles) and rook's graphs (i.e. cartesian products of two complete graphs).

\begin{theorem}\label{oednm-toric-grids}
When $m$ and $n$ are both multiples of $3$, we have:
$$\oednm(C_n \square C_m) \leq \frac{nm}{3}.$$

In general, we have:
$$\oednm(C_n \square C_m) \leq \ffceil{\frac{nm}{3}} + O(n+m).$$
\end{theorem}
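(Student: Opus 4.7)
The plan is to derive both bounds from Theorems \ref{eternal_necoloring} and \ref{eternal_necoloring_product}. The key observation is that whenever $3\mid k$, the cycle $C_k$ admits a $(3,1)$-NE coloring by setting $c(v_i)=i\bmod 3$: this is a proper $3$-coloring (the cyclic wrap closes consistently because $3\mid k$), and each vertex, being of degree $2$, sees exactly one neighbour of each other colour, which is the NE condition with $l=1$ on a $2$-regular graph.

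For the first inequality, when $3$ divides both $n$ and $m$, I would combine the $(3,1)$-NE colorings of $C_n$ and $C_m$ via Theorem \ref{eternal_necoloring_product} to obtain a $(3,2)$-NE coloring of $C_n\square C_m$. Since $C_n\square C_m$ is $4$-regular, Theorem \ref{eternal_necoloring} applied with $l=1$ then produces an orientation whose m-eternal domination number is at most $nm/3$, which is precisely the announced bound.

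For the general inequality the plan is to reduce to the divisible case at an additive cost of $O(n+m)$. Write $n=3a+r_1$ and $m=3b+r_2$ with $r_1,r_2\in\{0,1,2\}$; the case $r_1=r_2=0$ is already treated above. I would single out $r_1$ complete rows and $r_2$ complete columns of the torus and place \emph{permanent} guards on their $r_1 m+r_2(n-r_1)=O(n+m)$ vertices, so that attacks on those vertices are defended for free. The remaining $3a\times 3b$ vertices would be handled with $\lceil nm/3\rceil$ moving guards, orienting this portion so that the coloring $c(v_{i,j})=(i+j)\bmod 3$ together with a balanced orientation of the bipartite graph between consecutive colour classes still supports the multimove-by-matching argument used in Theorem \ref{eternal_necoloring}.

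The main obstacle is precisely this last step: on the frontier of the $3a\times 3b$ sub-grid the vertices have degree $3$ rather than $4$ within the sub-grid, so the NE condition fails locally and Hall's marriage theorem does not apply verbatim. I plan to overcome this by orienting the edges joining the sub-grid to the permanent rows and columns so that, for every ordered pair of colour classes $(V_i,V_j)$, the in-degree from $V_j$ of each vertex of $V_i$ inside the sub-grid is at least the value required by Hall's condition on the bipartite graph between $V_i$ and $V_j$; the permanent guards on the cut strips serve exactly to absorb the missing adjacencies, so that Hall's theorem still yields the perfect matching needed to shift all moving guards from one colour class to another in a single multimove. Summing the permanent and the moving guards then gives the claimed bound $\lceil nm/3\rceil+O(n+m)$.
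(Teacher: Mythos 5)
Your first inequality is correct and is exactly the paper's argument: a $(3,1)$-NE coloring of $C_k$ for $3\mid k$, combined through Theorem \ref{eternal_necoloring_product} to get a $(3,2)$-NE coloring of $C_n\square C_m$, to which Theorem \ref{eternal_necoloring} applies with $l=1$.

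The general case, however, has a genuine gap, and you have correctly located it but not repaired it. Once you freeze $r_1$ rows and $r_2$ columns under permanent guards, the remaining $3a\times 3b$ block induced in the torus is (for $r_1,r_2\geq 1$) a grid $P_{3a}\square P_{3b}$ with no wrap-around adjacencies, and the multimove-by-matching argument does not merely degrade at the boundary --- it fails outright. Concretely, with the coloring $c(v_{i,j})=(i+j)\bmod 3$, the corner vertex $v_{3a,3b}$ has only the two neighbours $v_{3a-1,3b}$ and $v_{3a,3b-1}$ inside the block, both of colour $c-1$; it has \emph{no} neighbour of colour $c+1$ in the block, so no orientation of any edges can produce a perfect matching from $V_c$ to $V_{c+1}$ within the block, and the multimove $V_c\to V_{c+1}$ is impossible. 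Your proposed fix of orienting the cut edges towards the permanent strips does not help, because those edges lead to vertices outside $V_{c+1}$ (they carry permanent guards, not targets of the multimove), so they do not alter the bipartite graph between $V_c$ and $V_{c+1}$ on which Hall's theorem must be applied.

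The paper avoids this by not making the extra guards static. It views the extra rows and columns as \emph{subdivision vertices} of the wrap-around edges of the torus $C_{3a}\square C_{3b}$: the wrap edge from $v_{i,3b}$ to $v_{i,1}$ is replaced by the oriented path $(v_{i,3b},v_{i,3b+1},\dots,v_{i,3b+r_2},v_{i,1})$, all extra vertices are kept occupied as an invariant, and whenever the torus strategy moves a guard across a wrap edge, every guard along the corresponding subdivision path is pushed one step forward. The extra guards thus relay the boundary moves rather than merely sit on their vertices, which is precisely the mechanism your proposal is missing. To complete your proof you would need to adopt this relaying idea (or an equivalent one); as written, the boundary multimove does not exist.
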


\begin{proof}
The first inequality is a direct consequence of Theorems \ref{eternal_necoloring}
and \ref{eternal_necoloring_product} with the fact that a cycle graph of order that is a multiple of 3 admits a $(3, 1)$-NE coloring.

%Consider first the case where $n$ and $m$ are multiple of 3.
%We affect to the vertex $v_{i, j}$ at position $(i, j)$ in the grid the color $i + j \mod 3$.
%See Figure \ref{NEcoloring-grid} for an example of coloring for a square $3 \times 3$ of the grid.
%It is easily seen that we obtain a $(3, 2)$-NE coloring.

If $n$ or $m$ is not a multiple of 3, consider the grid $C_{3n+x} \square C_{3m+y}$
with $x, y \in \{0, 1, 2\}$ and $x > 0$ or $y > 0$.
Let $H'$ be an orientation of $C_{3n} \square C_{3m}$ as described in the previous case.
We construct an orientation $H$ of $C_{3n+x} \square C_{3m+y}$ as follows.
We orientate each edge between $(i, j)$ and $(i, j+1)$ for $i \in [3n]$, $j \in [3m-1]$ and between $(i, j)$ and $(i+1, j)$ for $i \in [3n-1]$, $j \in [3m]$ in the same direction as in $H'$.
For every $i \in [3n]$, if $( v_{i, 3m}, v_{i, 1} ) \in E(H'))$, then we orientate $H$
such that $( v_{i, 3m}, v_{i, 3m+1}, \ldots, v_{i, 3m+y}, v_{i, 1})$ is an oriented path.
Otherwise, we orientate $H$ such that
$(v_{i, 1}, v_{i, 3m+y}, v_{i, 3m+y-1}, \ldots, v_{i, 3m})$
is an oriented path.
We do the same for every edge $v_{3n, j} v_{1, j}$ with $j \in [3m]$
We arbitrarily orientate the remaining edges.
An example of orientation is described in Figure \ref{NEcoloring-grid3}.

Consider the set of vertices $S$ of $H$ including a m-eternal dominating set of $H'$, as described
in the previous case, and containing every vertex $(i, j)$ with $i > 3n$ or $j > 3m$.
Then, $S$ is a m-eternal dominating set of $H$.
Indeed, we mimic the strategy of the defender for $H'$.
The only difference is when a guard in $H'$ goes from a "border" of the grid to the opposite.
For example, a guard goes from a vertex $(i, 3m)$ to the vertex $(i, 1)$.
Then, we push every guard, except for the last one, in the path
$(v_{i, 3m}, v_{i, 3m+1}, \ldots, v_{i, 3m+y}, v_{i, 1})$ to the next vertex. One can easily generalize to the other borders.
\end{proof}

%\begin{figure}
%\begin{center}
%\begin{tikzpicture}[darkstyle/.style={circle,draw,fill=gray!40,minimum size=20}]
%  \foreach \x in {1,...,3}
%    \foreach \y in {1,...,3} {
%       \pgfmathtruncatemacro{\label}{mod(\x - \y + 2, 3) + 1}
%       \node [darkstyle]  (\x\y) at (1.5*\x,1.5*\y) {\label};
%    }
%  \foreach \x in {1,...,3} {
%       \node  (0\x) at (0.5,1.5*\x) {};
%       \node  (4\x) at (1.5*4-0.5,1.5*\x) {};  
%       \node  (\x 0) at (1.5*\x, 0.5) {};
%       \node  (\x 4) at (1.5*\x, 1.5*4-0.5) {};
%  }

%  \foreach \x in {1,...,3}
%    \foreach \y [count=\yi] in {0,...,3}  
%      \draw (\x\y)--(\x\yi) (\y\x)--(\yi\x) ;

%\end{tikzpicture}
%\caption{$(2, 2)$-PE coloring of a square of a toroidal grid}
%\label{NEcoloring-grid}
%\end{center}
%\end{figure}

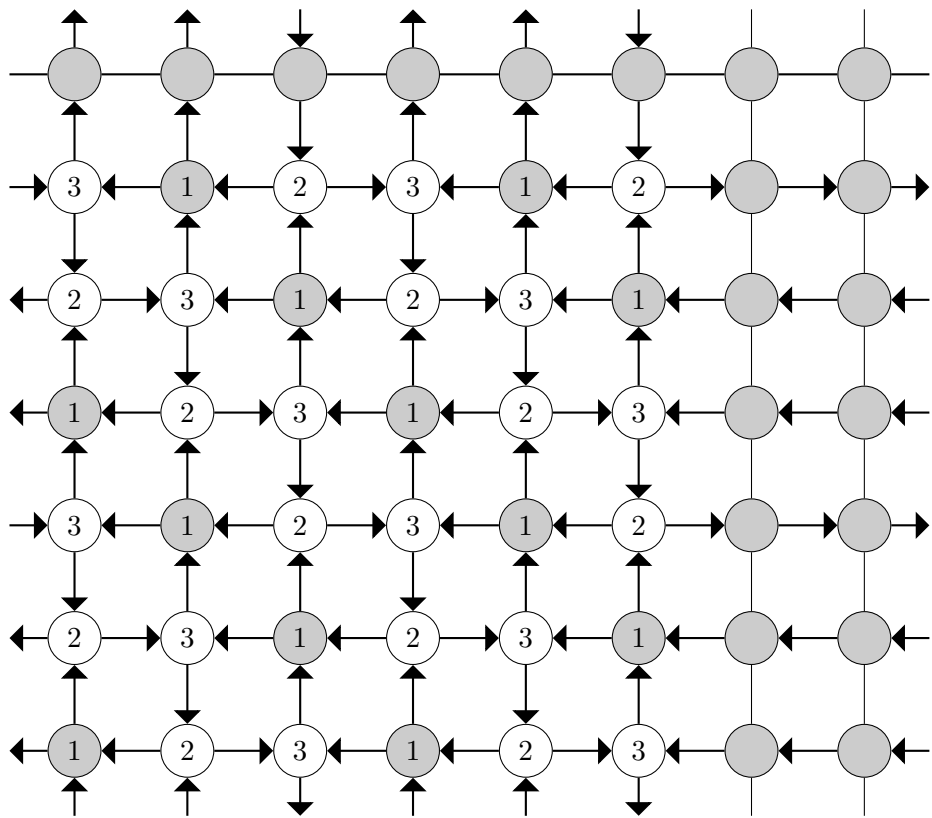
\begin{figure}
\begin{center}
\begin{tikzpicture}[vertex/.style={circle,draw,minimum size=20}]
  \foreach \x in {1,...,8}
    \foreach \y in {1,...,7} {
       \pgfmathtruncatemacro{\label}{
       \x < 7 && \y < 7 ? mod(\x + \y - 2,3)+1 : 0
       }
       \ifthenelse{\label=0}
          {\node [vertex,fill=gray!40]  (\x\y) at (1.5*\x,1.5*\y) {};}
          {\ifthenelse{\label=1}
            {\node [vertex,fill=gray!40]  (\x\y) at (1.5*\x,1.5*\y) {\label};}
           {\node [vertex]  (\x\y) at (1.5*\x,1.5*\y) {\label};}
         }
      }

  \foreach \y in {1,...,7} {
    \node (0\y) at (0.5,1.5*\y) {};
    \node (9\y) at (1.5*9-0.5,1.5*\y) {};
  }
  
  \foreach \x in {1,...,8} {
    \node (\x 0) at (1.5*\x,0.5) {};
    \node (\x 8) at (1.5*\x,1.5*8-0.5) {};
  }
  
  \foreach \x [count=\xi] in {0,...,8}
    \foreach \y  in {1,...,7} {
      \pgfmathtruncatemacro{\cond}{
       (\x > 0 && \x < 7) ? mod(\x + \y,3) == 0 : mod(6 + \y,3) == 0
       }
       \pgfmathtruncatemacro{\xi}{\x+1}
      \ifthenelse{\y < 7}{
        \ifthenelse{\cond=1}
         {\draw[-triangle 90,thick] (\x\y)--(\xi\y); }
         {\draw[-triangle 90,thick] (\xi\y)--(\x\y); }
       }
       {\draw [thick] (\x\y)--(\xi\y);}
     }
  \foreach \x  in {1,...,8}
    \foreach \y [count=\yi] in {0,...,7} {
      \pgfmathtruncatemacro{\cond}{
       (\y > 0 && \y < 7) ? mod(\x + \y,3) == 0 : mod(6 + \x,3) == 0
       }
      \ifthenelse{\x < 7}{
        \ifthenelse{\cond=1}
         {\draw[-triangle 90,thick] (\x\yi)--(\x\y); }
         {\draw[-triangle 90,thick] (\x\y)--(\x\yi); }
       }
       {\draw (\x\y)--(\x\yi);}
     }   
 % \draw[-triangle 90,thick] (77)--(78); 
  %\draw[-triangle 90,thick] (78)--(88); 
  %\draw[-triangle 90,thick] (88)--(87); 
  %\draw[-triangle 90,thick] (87)--(77); 

\end{tikzpicture}
\caption{orientation of the toroidal grid $C_8 \square C_7$}
\label{NEcoloring-grid3}
\end{center}
\end{figure}

Rook's graphs are cartesian products of two complete graphs.
They received their names from the legal moves of the rook chess piece on a chessboard.
For square rook's graphs, we obtain the exact value of $\oednm$.

\begin{theorem}\label{oednm_rook}
For every $n \geq 1$, we have
$\oednm(K_n \square K_n) = \gamma(K_n \square K_n) = n.$
\end{theorem}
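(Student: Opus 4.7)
My plan is to prove the two inequalities $\oednm(K_n \square K_n) \geq n$ and $\oednm(K_n \square K_n) \leq n$ separately, obtaining the value of $\gamma(K_n \square K_n)$ along the way.

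For the lower bound, I first observe the general fact that $\gamma(G) \leq \oednm(G)$ for every undirected graph $G$. Indeed, if $H$ is any orientation of $G$, any directed dominating set of $H$ is also a dominating set of $G$, so $\gamma(G) \leq \gamma(H)$; combined with $\gamma(H) \leq \ednm(H)$ from Theorem \ref{eternal-seq-ineqs-digraph} and minimising over orientations, this gives $\gamma(G) \leq \oednm(G)$. It then remains to check $\gamma(K_n \square K_n) = n$: the diagonal $\{(i, i) : i \in [n]\}$ is a dominating set, and any set of fewer than $n$ vertices misses some row $i$ and some column $j$, leaving $(i, j)$ undominated.

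For the upper bound I will invoke Theorem \ref{eternal_necoloring} with an $(n, 2)$-NE coloring of $K_n \square K_n$, noting that this graph is indeed $2(n-1)$-regular as required. The natural choice is the ``broken diagonal'' coloring $\chi(i, j) = (i + j) \bmod n$, with $n$ color classes. It is proper because two distinct vertices sharing a row or column differ in exactly one coordinate and hence in color. To verify the neighborhood-equitable property, fix a vertex $v = (i, j)$ of color $c = \chi(v)$ and another color $c' \neq c$: in the row of $v$, the unique $k \in [n]$ with $i + k \equiv c' \pmod n$ satisfies $k \neq j$ (since $c' \neq c$), so row $i$ contributes exactly one neighbor of color $c'$; by symmetry, column $j$ contributes exactly one more, yielding $2$ in total. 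Thus $\chi$ is an $(n, 2)$-NE coloring, and Theorem \ref{eternal_necoloring} (applied with $k = n$ colors and $l = 1$) yields $\oednm(K_n \square K_n) \leq n^2 / n = n$.

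The only genuinely new content is the construction and verification of the NE coloring; both inequalities then assemble into $\oednm(K_n \square K_n) = \gamma(K_n \square K_n) = n$.
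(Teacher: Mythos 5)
Your proof is correct and follows essentially the same route as the paper: the lower bound via $\gamma(K_n \square K_n) \geq n$ and the chain $\gamma \leq \oednm$, and the upper bound via an $(n,2)$-NE coloring fed into Theorem \ref{eternal_necoloring}. The only cosmetic difference is that you construct and verify the coloring $(i+j) \bmod n$ directly, whereas the paper obtains the same coloring by combining two $(n,1)$-NE colorings of $K_n$ through Theorem \ref{eternal_necoloring_product}.
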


\begin{proof}
It easily seen that $\gamma(K_n \square K_n) \geq n$.
Indeed, a set of size lower than $n$ does not dominate at least a line and a column. Thus, it does not dominate the vertex which is at the intersection of this line and this column.
The upper bound is a direct consequence of Theorems \ref{eternal_necoloring}
and \ref{eternal_necoloring_product} with the fact that a complete graph of order $n$ admits a $(n, 1)$-NE coloring. 
\end{proof}

Toroidal king's grids are the strong product of two cycles.
They received their names from the legal moves of the king chess piece on a (toroidal) chessboard.
For this class of graphs, we obtain the following result.

\begin{theorem}\label{oednm-toric-kinggrids}
Let $m$ and $n$ be two multiples of $5$.
Then, we have:
$$\oednm(C_n \boxtimes C_m) \leq \frac{nm}{5}.$$
%In general:
%$$\oednm(C_n \boxtimes C_m) \leq \ffceil{\frac{nm}{5}} + O(n+m).$$
\end{theorem}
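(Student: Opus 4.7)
The plan is to invoke Theorem \ref{eternal_necoloring} by exhibiting an explicit $(5, 2)$-NE coloring of $C_n \boxtimes C_m$. Since $C_n \boxtimes C_m$ is $8$-regular and a $(5, 2)$-NE coloring splits the eight neighbors as two per non-own color (consistent with $(k-1)l = 4 \cdot 2 = 8$), the parameters match. With $k = 5$ and $2l = 2$, Theorem \ref{eternal_necoloring} would directly give $\oednm(C_n \boxtimes C_m) \leq nm/5$.

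For the coloring, I would assign to each vertex $(i, j) \in [n] \times [m]$ the color
$$c(i, j) = (i + 2j) \bmod 5.$$
Since $5 \mid n$ and $5 \mid m$, and $\gcd(2, 5) = 1$, this is well defined on the toroidal structure (the color is invariant under $i \mapsto i + n$ and $j \mapsto j + m$).

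Then I would verify the NE property by direct computation. The eight neighbors of $(i, j)$ in $C_n \boxtimes C_m$ correspond to offsets $(a, b) \in \{-1, 0, 1\}^2 \setminus \{(0, 0)\}$, and the color offset contributed by $(a, b)$ is $a + 2b \bmod 5$. Tabulating these eight values gives the multiset $\{1, 4, 2, 3, 3, 4, 1, 2\}$ modulo $5$; every nonzero residue appears exactly twice, and $0$ does not appear. Hence $c$ is a proper coloring with $|N(v) \cap V_{c'}| = 2$ for every vertex $v$ and every color $c' \neq c(v)$, as required.

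No substantial obstacle is anticipated: the only thing to check is the arithmetic above, and the fact that $5 \mid n$ and $5 \mid m$ ensures consistency on the torus. One could note in passing that the linear form $i + 2j$ is essentially forced: we need a linear map $\mathbb{Z}^2 \to \mathbb{Z}/5$ such that the eight offset vectors map onto $\{1, 2, 3, 4\}$ with multiplicity two each, and $(1, 2)$ is a natural choice that achieves this.
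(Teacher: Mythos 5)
Your proof is correct and follows essentially the same route as the paper: the paper also colors vertex $(i,j)$ with $i+2j \bmod 5$ and invokes the $(5,2)$-NE coloring machinery of Theorem \ref{eternal_necoloring}, though it leaves the verification as "an easy case study" where you tabulate the eight offset residues explicitly. Nothing further is needed.
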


\begin{proof}
We color the vertex at position $(i, j)$ with the color $i + 2j \mod 5$.
If we split the grid in squares of size $5 \times 5$,
each square can be colored as in Figure \ref{necoloring-king}.
An easy case study permits to conclude that we obtain a $(5, 2)$-NE coloring.

%If $n$ or $m$ are not multiple of 5, we proceed like in Theorem \ref{oednm-toric-grids}.
%Consider a king's grid $C_{5n+x} \boxtimes C_{5m+y}$ with $x, y \in \{0, 2, 3, 4, 6\}$
%and $H'$ be an orientation of $C_{5n} \boxtimes C_{5m}$ as described in the previous case.
%The key idea is that, in the strategy of the defender for $H'$, all guards move in the same direction (left, right, up, down, left-down, etc).
%Moreover, all guards in a given column or line have the same position modulo 5.
%So we construct an orientation $H$ of $G$ as follows.
%We orientate each edge between $v_{i, j}$ and $v_{i,j+1}$ for $i \in [5n]$, $j \in [5n-1]$ 
%and between $v_{i,j}$ and $v_{i+1, j}$ for $i \in [5n-1]$, $j \in [5m]$ in same direction as in $H'$.
%If $x > 0$, we decompose the vertices $v_{i, j}$ with $i \in [5n+1, 5n+x]$ and $j \in [5m]$
%into blocks of size $(x, 5)$ and we orientate such that each block induces a strongly connected graph.
%We orientate such that $H$ contains the following edges for every $i \in [5m]$:
%$(v_{5n, i}, v_{5n+1, i})$, $(v_{5n+1,i+1}, v_{5n, i})$, $(v_{5n+x, i},v_{1, i})$, $(v_{(1, i}, v_{5n+x, i+1})$.
%We do the same procedure symmetrically if $y > 0$ and we orientate the remaining edge
%arbitrarily. TODO
\end{proof}

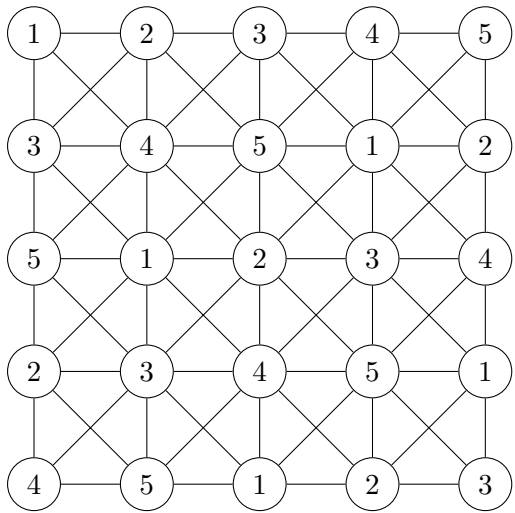
\begin{figure}
\begin{center}
\begin{tikzpicture}[vertex/.style={circle,draw,minimum size=20}]
  \foreach \x in {0,...,4}
    \foreach \y in {0,...,4} {
       \pgfmathtruncatemacro{\label}{mod(\x + 3 * \y + 3,5)+1}
       \node [vertex]  (\x\y) at (1.5*\x,1.5*\y) {\label}; 
    }

  \foreach \x in {0,...,4}
    \foreach \y [count=\yi] in {0,...,3}
      \draw (\x\y)--(\x\yi) (\y\x)--(\yi\x) ;

  \foreach \x  [count=\xi] in {0,...,3}
    \foreach \y [count=\yi] in {0,...,3}  
      \draw (\x\y)--(\xi\yi) (\yi\x)--(\y\xi) ;

\end{tikzpicture}
\caption{$(5, 2)$-PE coloring of a square of a king's grid}
\label{necoloring-king}
\end{center}
\end{figure}

Notice that we can obtain an upper bound $\frac{nm}{5} + O(n+m)$ when there is no condition on $n$ and $m$. The idea is similar to the proof of Theorem \ref{oednm-toric-grids}. However, the proof is quite complicated and the result not essential so we omit it in this paper.

We also generalize Theorem \ref{oednm-toric-grids} to toroidal hypergrids.

\begin{theorem}\label{oednm-toric-hypergrids} 
$\oednm(C_{n_1} \square \ldots \square C_{n_k}) \leq \frac{n}{k+1}$ where $n$ is the order of the graph and all $n_i$ are multiples of $k+1$.
\end{theorem}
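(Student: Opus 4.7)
The plan is to adapt the $k=2$ argument of Theorem~\ref{oednm-toric-grids} but to build a $(k+1,2)$-NE coloring of the whole hypergrid directly, rather than obtain one via the product construction of Theorem~\ref{eternal_necoloring_product}. The product route is in fact blocked for $k\geq 3$: an individual cycle is $2$-regular, so it admits only $(3,1)$- or $(2,2)$-NE colorings, and neither can be iterated by the product lemma to yield a $(k+1,\cdot)$-NE coloring. Identifying this obstruction is the main conceptual step; once the right coloring is written down, the rest is short verification plus an invocation of Theorem~\ref{eternal_necoloring}.

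Concretely, identify $V(G)$ with $\mathbb{Z}_{n_1}\times\cdots\times\mathbb{Z}_{n_k}$ and define
\[
  \chi(x_1,\ldots,x_k)\;=\;\sum_{i=1}^{k} i\cdot x_i \pmod{k+1}.
\]
Since $k+1$ divides each $n_i$, $\chi$ is well defined on the torus; this is precisely where the divisibility hypothesis is used. Two adjacent vertices of $G$ differ in exactly one coordinate by $\pm 1$, so their $\chi$-values differ by $\pm i$ with $1\leq i\leq k$, which is nonzero modulo $k+1$. Hence $\chi$ is a proper $(k+1)$-coloring, and a standard counting argument (fixing $x_2,\ldots,x_k$ and noting $(k+1)\mid n_1$) shows every color class has size $n/(k+1)$.

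The next step is the NE check. For a vertex $v$ with $\chi(v)=c$, the $2k$ neighbors have colors forming the multiset $\{c+i,\,c-i \mid i\in[k]\}$ modulo $k+1$. Each of the two lists $\{c+1,\ldots,c+k\}$ and $\{c-1,\ldots,c-k\}$ runs through all nonzero residues modulo $k+1$ exactly once, so every color $c'\neq c$ appears exactly twice in $N(v)$. Thus $\chi$ is a $(k+1,2)$-NE coloring, and applying Theorem~\ref{eternal_necoloring} with its parameter $k$ replaced by $k+1$ and $l=1$ yields $\oednm(G)\leq n/(k+1)$. The crux is the choice of coefficients $1,2,\ldots,k$: their key property is that $\{\pm 1,\pm 2,\ldots,\pm k\}$ covers each nonzero residue modulo $k+1$ exactly twice, exactly matching the $2k$ slots in the neighborhood of a vertex; any coefficient system with this property would serve equally well.
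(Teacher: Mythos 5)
Your proof is correct and follows essentially the same route as the paper: the identical coloring $\sum_{i} i\,x_i \bmod (k+1)$, the same verification that each nonzero difference $\pm i$ yields exactly two neighbors of each other color (hence a $(k+1,2)$-NE coloring), and the same appeal to Theorem~\ref{eternal_necoloring}. Your added remark that the product construction of Theorem~\ref{eternal_necoloring_product} cannot produce a $(k+1,\cdot)$-NE coloring from $2$-regular factors is a correct and worthwhile observation, but it does not change the substance of the argument.
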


\begin{proof}
Let $v$ be a vertex at position $(i_1, \ldots, i_k)$ in the hypergrid.
We affect to $v$ the color $$\sum_{j=1}^{k} j i_j \mod (k+1).$$
It is easily seen that this coloring is proper.
Additionally, for every distinct colors $i, j$ and vertex $v$ of color $i$,
$v$ has exactly two neighbors of color $j$.
Indeed, if $v$ is at position $(i_1, \ldots, i_k)$,
then the two neighbors are at positions
$(i_1, \ldots, i_{p-1}, i_p + 1, i_{p+1}, \ldots, i_k)$
where $p = j - i \mod (k + 1)$
and $(i_1, \ldots, i_{q-1}, i_q - 1, i_{q+1}, \ldots, i_k)$
where $q = i - j \mod (k + 1)$.
Thus, we obtain a $(k+1, 2)$-NE coloring.
\end{proof}

We conjecture that the upper bounds in Theorems \ref{oednm-toric-kinggrids}, \ref{oednm-toric-hypergrids} and \ref{oednm-toric-grids} correspond to the exact value of $\oednm$.
A way to prove this would be to show that that any orientation that minimizes $\ednm$ is eulerian.
More generally, we think that the following proposition is true.

\begin{conjecture}
Let $G$ be a graph that admits a $(k, 2)$-NE coloring.
Then $\oednm(G) = \frac{n}{k}$.
\end{conjecture}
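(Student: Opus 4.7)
The upper bound $\oednm(G) \le n/k$ is an immediate consequence of Theorem \ref{eternal_necoloring} applied with $2l = 2$, so the content of the conjecture lies entirely in the matching lower bound $\oednm(G) \ge n/k$. I would prove this by establishing $\ednm(H) \ge n/k$ for every orientation $H$ of $G$, following the two-step ``eulerian'' strategy suggested by the authors.

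The first step is easy. Any graph admitting a $(k,2)$-NE coloring is $2(k-1)$-regular, so in an eulerian orientation $H$ every vertex satisfies $d^+(v) = d^-(v) = k-1$, hence $|N^+[v]| = k$. A dominating set $D$ of $H$ then verifies
$$n = |N^+[D]| \le \sum_{v \in D} |N^+[v]| = k|D|,$$
so $\gamma(H) \ge n/k$; combined with the inequality $\gamma(H) \le \ednm(H)$ from Theorem \ref{eternal-seq-ineqs-digraph}, this yields $\ednm(H) \ge n/k$ whenever $H$ is eulerian. Note that this subsumes the bound $\ednm(H) \ge \gamma(H) \ge \gamma(G)$, which on its own is too weak (for instance $\gamma(C_6 \square C_6) < 12 = n/k$, so an orientation-free argument cannot suffice).

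The main obstacle is the second step: showing that one may restrict to eulerian orientations when minimizing $\ednm$. Concretely, I would try to prove that for any orientation $H$ of $G$ there is an eulerian orientation $H^\star$ with $\ednm(H^\star) \le \ednm(H)$. The natural attempt is to iteratively reduce the total imbalance $\sum_v |d^+(v) - d^-(v)|$ by reversing a directed path from a source-like to a sink-like vertex, while arguing that any defender strategy survives such a flip. A more structured variant exploits the $(k,2)$-NE structure itself: $G$ decomposes into the 2-regular bipartite subgraphs $G_{ij}$, each a disjoint union of even cycles which can always be oriented eulerianly, and one would simulate every defender move on $H$ by routing guards along the corresponding 2-factor. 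The difficulty, which I expect to be the principal obstacle, is that $\ednm$ is defined through an infinite game, so a single edge reversal can invalidate an entire defender strategy, and this simulation must be carried out globally with care. As a back-up, should the eulerian reduction resist analysis, I would attempt to construct an adversary strategy directly: the attacker cyclically targets classes $V_1, V_2, \ldots, V_k$, and an application of Hall's theorem to the bipartite subgraphs $G_{ij}$ dual to the proof of Theorem \ref{eternal_necoloring} should force the defender to eventually occupy an entire color class, giving the bound $n/k$.
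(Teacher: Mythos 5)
The statement you have been asked to prove is presented in the paper only as a \emph{conjecture}; the authors give no proof, and your proposal does not supply one either. Your first step is correct and worth keeping: the upper bound is immediate from Theorem \ref{eternal_necoloring}, and for an \emph{eulerian} orientation $H$ of a $2(k-1)$-regular graph every closed out-neighbourhood has size exactly $k$, so $n \le k\,\gamma(H)$ and hence $\ednm(H) \ge \gamma(H) \ge n/k$. You are also right that the orientation-free bound $\ednm \ge \gamma(G)$ is too weak on its own.

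However, the entire content of the conjecture is concentrated in the step you defer, and neither of your two plans for it is carried out. The reduction ``for every orientation $H$ there is an eulerian orientation $H^\star$ with $\ednm(H^\star) \le \ednm(H)$'' is exactly the open sub-claim the authors themselves identify (``a way to prove this would be to show that any orientation that minimizes $\ednm$ is eulerian''), and nothing in the paper or in your sketch establishes it: reversing a directed path to reduce imbalance can destroy a defender's retreat moves, and $\ednm$ has no monotonicity under local edge reversals that you could invoke. Your back-up adversary strategy (cyclically attacking colour classes and applying a Hall-type argument to the subgraphs $G_{ij}$) is only gestured at; in a non-eulerian orientation the bipartite graphs $B_{ij}$ of the proof of Theorem \ref{eternal_necoloring} need not be regular, so the deficiency version of Hall's theorem does not automatically force the defender onto a full colour class, and no concrete potential or termination argument is given. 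As it stands, the proposal reduces the conjecture to an equally open statement rather than proving it.
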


This conjecture is verified for even cycles (Theorem \ref{edn_cycle})  and for square rook's grids (Theorem \ref{oednm_rook}).

\section{Future works and open questions}

Besides the two conjectures given in this paper, we enumerate some future works and open questions.

\begin{itemize}
\item
Give a tight upper bound of $\oednm$ depending on $n$ for 2-edge-connected graphs.
\item
Is there a natural parameter for digraphs that is an upper bound of $\edn$ as the clique covering number is for graphs? 
\item
Give better bounds for $\oedn$ on complete graphs and grids.
\item
We have proved that $\oedn = \oalpha$ for trees, cycles, complete bipartite graphs and grids $2\times n$, $3\times n$, $4\times n$. Is it true for complete graphs or (rectangular) grids in general?
\item
Can we characterize the graphs for which $\oednm = \gamma$?
Such examples of graphs are rook's graphs (Theorem \ref{oednm_rook}) and non complete graphs with $\oednm = 2$ (Theorem \ref{oednequals2}).
Remember that the only graphs for which $\oedn = \gamma$ are the graphs without edges (Proposition \ref{eternal_alpha_olpha}).
\item
Extend the study on trivially perfect graphs to cographs.
\item
Study the complexity of deciding whether $\oednm(G) \leq k$ in the general case and when $k$ is fixed.
Notice that, for $\oedn$, the problem is coNP-hard in the general case  (Corollary \ref{oedn-coNP}) 
and trivial  when $k$ is fixed.
Indeed, thanks to Corollary \ref{edn_bounds_on_clique}, there is only a finite number of positive instances.
\end{itemize}

\bibliographystyle{abbrv}
\bibliography{biblio}
\end{document}